\newcommand{\var}[1]{\texttt{#1}}
\algnewcommand{\algorithmicforeach}{\textbf{for each}}
\newif\ifhighlight
\newif\ifshowtext
\begin{document}

\title{Self-Assembly of Patterns in the abstract Tile Assembly Model}

\author{Phillip Drake\inst{1} \and
Matthew J. Patitz\inst{1} \and
Scott M. Summers\inst{2} \and
Tyler Tracy\inst{1}}

\authorrunning{P. Drake, M.\,J. Patitz, S.\,M. Summers and T. Tracy}

\institute{University of Arkansas, Fayetteville, AR 72701, USA\\
\email{\{padrake,patitz,tgtracy\}@uark.edu} \and
University of Wisconsin-Oshkosh, Oshkosh, WI 54901, USA\\
\email{summerss@uwosh.edu}. This author was supported in part by University of Wisconsin Oshkosh Research Sabbatical (S581) Fall 2023.}

\maketitle

\begin{abstract}
In the abstract Tile Assembly Model, self-assembling systems consisting of tiles of different colors can form structures on which colored patterns are ``painted.'' We explore the complexity, in terms of the numbers of unique tile types required, of assembling various patterns.
We first demonstrate how to efficiently self-assemble a set of simple patterns, then show tight bounds on the tile type complexity of self-assembling 2-colored patterns on the surfaces of square assemblies. Finally, we demonstrate an exponential gap in tile type complexity of self-assembling an infinite series of patterns between systems restricted to one plane versus those allowed two planes.
\end{abstract}



\section{Introduction}\label{sec:intro}

During the process of self-assembly, a disorganized collection of components experiencing only random motion and local interactions combine to form structures. Examples of self-assembly abound in nature, from crystals to cellular components, and these systems have inspired researchers to study them to better understand the underlying principles governing them as well as to engineer systems that mimic them. Along the spectrum of both natural and artificial self-assembling systems are those which (1) use very small numbers of component types and form simple, unbounded, repeating patterns, (2) use a large number of component types, on the order of the entire sizes of the structures created, to form structures with highly-specified, asymmetric designs, and (3) those which use very small numbers of components to make arbitrarily large, bounded or unbounded, symmetric or asymmetric structures whose growth is directed algorithmically.

Systems in category (2), so-called \emph{fully-addressed} or \emph{hard-coded} have the benefit of being able to uniquely define each ``pixel'' of the structure and therefore to ``paint'' arbitrary pictures, which we'll refer to as \emph{patterns}, on their surfaces\cite{MonaLisa,Shih-OrigamiSlats,RothOrigami}.
Although it is easy to see that in models of tile-based self-assembly such as the abstract Tile Assembly Model (aTAM)\cite{Winf98} any finite structure or pattern can self-assemble from a hard-coded system, it has previously been shown that there exist infinite structures and patterns that cannot self-assemble in the aTAM \cite{jCCSA,jSSADST}.
The benefits of the \emph{algorithmic self-assembly}\cite{woods2019diverse,evans2014crystals,RothTriangles} of category (3) include precise formation of shapes using exponentially fewer types of components\cite{RotWin00,SolWin07}, thus reducing the cost, reducing the effort to fabricate and implement, and increasing the speed of growth of these systems\cite{crisscrossAccel}.
However, the drastic reduction in the number of component types means a corresponding increase in their reuse. This results in copies of the same component type appearing in many locations throughout the resulting target structure, thus removing the ability to uniquely address each pixel when forming patterns.
This generally results in a reduction in the number of patterns that are producible.

In this paper we study the trade-off between the numbers of unique components, or \emph{tile types}, needed to self-assemble designed patterns and the complexities of the patterns that can self-assemble. Past lines of work have dealt with the self-assembly of patterns in the aTAM from the perspective of the computational complexity of designing minimal tile sets to self-assemble given patterns (the so-called ``PATS'' problem)\cite{j2PATS,CzeizlerPopaPATS,MaLombardi08,LempCzeilOrp11} and others have shown the possibilities and impossibilities of assembling some classes of infinite patterns\cite{jCCSA,jSADS}. In contrast, our first results present constructions for making tile sets that self-assemble a series of relatively simple patterns to demonstrate how efficiently they can be built algorithmically. These consist of patterns of white and black pixels on the surfaces of squares, and include (1) a pattern with a single black pixel, (2) a pattern with some number $k$ of black pixels, and (3) grids of alternating black and white stripes. All of these are shown to have exponential reductions in tile type requirements, a.k.a. \emph{tile complexity}, over fully-addressed systems.

Our next pair of results combine to show a tight bound on the tile complexity of self-assembling arbitrary patterns of two colors on the surfaces of $n \times n$ squares for almost all patterns. Using an information theoretic argument we prove a lower bound, namely that for almost all patterns on $n \times n$ squares, such patterns have tile complexity $\Omega \left(\frac{n^2}{\log{n}}\right)$. We then provide a construction that, when given an arbitrary $n \times n$ pattern of black and white pixels, generates a tile set of $O\left(\frac{n^2}{\log{n}}\right)$ tile types that self-assemble an $n \times n$ square with black and white tiles that form that pattern. Although this is not a significant improvement over the $n^2$ tile types required to naively implement a fully-addressed set of tile types, the lower bound proves that for almost all patterns this is the best possible tile complexity.

Although the prior result showed that any $n \times n$ pattern can self-assemble using $O\left(\frac{n^2}{\log{n}}\right)$ tile types, our next result shows that, if given two planes in which tiles can self-assemble (one on top of the other) then it is possible for some patterns to self-assemble using exponentially fewer tile types than when systems are restricted to a single plane. (In fact, this result can be modified for arbitrary separation in tile complexity.) The proof of this result uses a novel application of diagonalization to tile-based self-assembly. Namely, one system simulates every system of a given tile complexity class for a bounded number of steps, sequentially and within a square of one plane, in order to algorithmically generate a pattern that is guaranteed not to be made by any of those systems, and then prints that generated pattern on the square of the second plane above the assembly that performed the simulations. We also show how to extend these square patterns infinitely to cover the plane, while maintaining the same tile complexity argument.

Overall, our results demonstrate boundaries on tile complexities of algorithmic self-assembling systems when forming patterns, and help to demonstrate their benefits over fully-addressed systems. To make some of our results easier to understand, we have created a set of programs and tile sets that can be used to view examples. These can be found online: \href{http://self-assembly.net/wiki/index.php/Pattern_Self-Assembly}{Pattern Self-Assembly Software} \cite{PatternAssemblySoftware}. Due to space limitations, proofs have been moved to a Technical Appendix in this version, and this full version can also be found online \cite{PatternsArxiv}.

\section{Preliminary Definitions and Models}

In this section we define the terminology and model used throughout the paper.

\vspace{-15pt}
\subsection{The abstract Tile-Assembly Model}

We work within the abstract Tile-Assembly Model\cite{Winf98} in 2 and 3 dimensions. We will use the abbreviation \emph{aTAM} to refer to the 2D model, \emph{3DaTAM} for the 3D model, and \emph{barely-3DaTAM} to refer to the 3D model when restricted to the use of only 2 planes of the third dimension (a.k.a. the ``just barely 3D aTAM''), meaning that tiles can only be placed in locations with $z$ coordinates equal to $0$ or $1$ (use of the other two dimensions is unbounded). These definitions are borrowed from \cite{DDDIU} and we note that \cite{RotWin00} and \cite{jSSADST} are good introductions to the model for unfamiliar readers. 

Let $\mathbb{N}$ be the set of nonnegative integers, and for $n \in \mathbb{N}$, let $[n] = \{0, 1, ..., n-2, n-1\}$.
Fix $d\in\{2,3\}$ to be the number of dimensions and $\Sigma$ to be some alphabet with $\Sigma^*$ its finite strings. A \emph{glue} $g\in\Sigma^*\times\mathbb{N}$ consists of a finite string \emph{label} and non-negative integer \emph{strength}. There is a single glue of strength $0$, referred to as the \emph{null} glue. A \emph{tile type} is a tuple $t\in(\Sigma^*\times\mathbb{N})^{2d}$, thought of as a unit square or cube with a glue on each side. A \emph{tile set} is a finite set of tile types. We always assume a finite set of tile types, but allow an infinite number of copies of each tile type to occupy locations in the $\mathbb{Z}^d$ lattice, each called a \emph{tile}.

Given a tile set $T$, a \emph{configuration} is an arrangement (possibly empty) of tiles in the lattice $\mathbb{Z}^d$, i.e.\ a partial function $\alpha:\mathbb{Z}^d\dashrightarrow T$. Two adjacent tiles in a configuration \emph{interact}, or are \emph{bound} or \emph{attached}, if the glues on their abutting sides are equal (in both label and strength) and have positive strength. Each configuration $\alpha$ induces a \emph{binding graph} $B_\alpha$ whose vertices are those points occupied by tiles, with an edge of weight $s$ between two vertices if the corresponding tiles interact with strength $s$. An \emph{assembly} is a configuration whose domain (as a graph) is connected and non-empty. The \emph{shape} $S_\alpha \subseteq \mathbb{Z}^d$ of assembly $\alpha$ is the domain of $\alpha$. For some $\tau\in\mathbb{Z}^+$, an assembly $\alpha$ is \emph{$\tau$-stable} if every cut of $B_\alpha$ has weight at least $\tau$, i.e.\ a $\tau$-stable assembly cannot be split into two pieces without separating bound tiles whose shared glues have cumulative strength $\tau$. Given two assemblies $\alpha,\beta$, we say $\alpha$ is a \emph{subassembly} of $\beta$ (denoted $\alpha \sqsubseteq \beta$) if $S_\alpha \subseteq S_\beta$ and for all $p\in S_\alpha$, $\alpha(p)=\beta(p)$ (i.e., they have tiles of the same types in all locations of $\alpha$). 

A \emph{tile-assembly system} (TAS) is a triple $\calT=(T,\sigma,\tau)$, where $T$ is a tile set, $\sigma$ is a finite $\tau$-stable assembly called the \emph{seed assembly}, and $\tau\in\mathbb{Z}^+$ is called the \emph{binding threshold} (a.k.a. \emph{temperature}).
If the seed $\sigma$ consists of a single tile, i.e. $|\sigma| = 1$, we say $\calT$ is \emph{singly-seeded}.
Given a TAS $\calT=(T,\sigma,\tau)$ and two $\tau$-stable assemblies $\alpha$ and $\beta$, we say that $\alpha$ \emph{$\calT$-produces} $\beta$ \emph{in one step} (written $\alpha \to^{\calT}_1 \beta$) if $\alpha \sqsubseteq \beta$ and $|S_\beta \setminus S_\alpha| = 1$.
That is, $\alpha \to^{\calT}_1 \beta$ if $\beta$ differs from $\alpha$ by the addition of a single tile.
The \emph{$\calT$-frontier} is the set $\partial^{\calT}\alpha = \bigcup_{\alpha \to^{\calT}_1 \beta} S_\beta \setminus S_\alpha$ of locations in which a tile could $\tau$-stably attach to $\alpha$.
When $\calT$ is clear from context we simply refer to these as the \emph{frontier} locations.

We use $\mathcal{A}^T$ to denote the set of all assemblies of tiles in tile set $T$. Given a TAS $\calT=(T, \sigma, \tau)$, a sequence of $k\in\mathbb{Z}^+ \cup \{\infty\}$ assemblies $\alpha_0, \alpha_1, \ldots$ over $\mathcal{A}^T$ is called a \emph{$\calT$-assembly sequence} if, for all $1\le i < k$, $\alpha_{i-1} \to^{\calT}_1 \alpha_i$. The \emph{result} of an assembly sequence is the unique limiting assembly of the sequence. For finite assembly sequences, this is the final assembly; whereas for infinite assembly sequences, this is the assembly consisting of all tiles from any assembly in the sequence. We say that \emph{$\alpha$ $\calT$-produces $\beta$} (denoted $\alpha\to^{\calT} \beta$) if there is a $\calT$-assembly sequence starting with $\alpha$ whose result is $\beta$. We say $\alpha$ is \emph{$\calT$-producible} if $\sigma\to^{\calT}\alpha$ and write $\prodasm{\calT}$ to denote the set of $\calT$-producible assemblies. We say $\alpha$ is \emph{$\calT$-terminal} if $\alpha$ is $\tau$-stable and there exists no assembly that is $\calT$-producible from $\alpha$. We denote the set of $\calT$-producible and $\calT$-terminal assemblies by $\termasm{\calT}$. If $|\termasm{\calT}| = 1$, i.e., there is exactly one terminal assembly, we say that $\calT$ is \emph{directed}. When $\calT$ is clear from context, we may omit $\calT$ from notation.

\subsection{Patterns}

Let $C$ be a set of colors and let $P \subseteq (\mathbb{Z}^d \times C)$. We say that $P$ is a \emph{$d$-dimensional pattern}, i.e., a set of locations and corresponding colors. Let $\dom P$ be the set of locations that are assigned a color. A pattern is \emph{k-colored} when the number of unique colors used is $k$. Let $\var{Color}(P, \vec{l})$ be a function that takes a pattern and a location $\vec{l}$ and returns the color of the pattern at that location. (and is undefined if $\vec{l} \not \in \dom(P)$).

Given a TAS $\calT = (T,\sigma,\tau)$, we allow each tile type to be assigned exactly one \emph{color} from some set of colors $C$.
Let $C_P \subseteq C$ be a subset of those colors, and $T_{C_P} \subseteq T$ be the subset of tiles of $T$ whose colors are in $C_P$.
Given an assembly $\alpha \in \prodasm{\calT}$, we use $\dom(\alpha)$ to denote the set of all locations with tiles in $\alpha$ and $\dom_{C_p}(\alpha)$ to denote the set of all locations of tiles in $\alpha$ with colors in $C_P$.
Given a location $\vec{l} \in \mathbb{Z}^d$, let $\var{Color}(\alpha,\vec{l})$ define a function that takes as input an assembly and a location and returns the color of the tile at that location (and is undefined if $\vec{l} \not \in \dom(\alpha)$).
%
%
We say $\calT$ \emph{weakly self-assembles pattern $P$} iff for all $\alpha \in \mathcal{A}_{\Box}[\mathcal{T}]$, $\dom_{C_P}(\alpha) = P$ and $\forall (\vec{l},c) \in P, c = \var{Color}(\alpha,\vec{l})$.
We say $\calT$ \emph{strictly self-assembles pattern $P$} iff $T_{C_P} = T$, i.e. all tiles of $T$ are colored from $C_P$, and $\calT$ weakly self-assembles $P$ (i.e. all locations receiving tiles are within $P$).


Let the set of all patterns be $\mathbb{P}$, the set of all $c$-colored patterns be $\mathbb{P}_c$, and $\var{SQPATS}_{c,n} \subset \mathbb{P}_c$ be the set of all $c$-colored patterns that are on the surfaces of $n \times n$ squares. 
Let $\var{SQPATS}_{c} = \bigcup_{n \in \mathbb{Z}^+}{\var{SQPATS}_{c,n}}$, i.e., the set of all $c$-colored patterns that are on the surfaces of squares.
A \emph{pattern class} is an infinite set of patterns parameterized by some set of values $X$, and can be represented as a function $PC: X \rightarrow \mathbb{P}$ that maps parameters $X$ to some pattern $P \in \mathbb{P}$. Let $\mathbb{T}$ be the set of all aTAM systems. A \emph{construction for pattern class} $PC$ is a function $C_{PC}: \mathbb{P} \rightarrow \mathbb{T}$ that takes a pattern $P \in \mathbb{P}$ and outputs an aTAM system $\calT \in \mathbb{T}$ such that $\calT$ weakly self-assembles $P$.

\section{Simple Patterns}\label{sec:simple}

In this section, we define several relatively simple pattern classes and present constructions that can build them efficiently.

First we define a pattern class whose patterns each consist of a 2-colored $n \times n$ square that is completely white except for a single black pixel.
\begin{definition}[Single-Pixel Pattern Class]
    Given $n, i, j \in \mathbb{N}$, where $i,j < n$, define $\var{SinglePixel}(n, i, j) \rightarrow P$ such that (1) $P \in \var{SQPATS}_{2,n}$, (2) $\forall \vec{l} \in dom(P) - \{(i, j) \}$, $\var Color(P,\vec{l}) = \text{White}$, and (3) $\var Color(P, (i,j)) = \text{Black}$.
\end{definition}

\begin{theorem}\label{thm:single-pixel}
    For all $n, i, j \in \mathbb{N}$ such that $n \ge i,j$, there exists an aTAM system $\calT = (T,\sigma,2)$ such that $|\sigma|=1$, $|T| = O(\log(n))$, and $\calT$ weakly self-assembles $\var{SinglePixel}(n, i, j)$.
\end{theorem}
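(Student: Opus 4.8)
The plan is to build a tile set that, from a single seed tile, uses binary counters to "measure out" the coordinates $i$ and $j$, places a single black tile at location $(i,j)$, and fills the remaining $n \times n$ region with white tiles. The $O(\log n)$ bound comes from the fact that the only place where tile types scale with $n$ is in hard-coding the $O(\log n)$-bit binary representations of $n$, $i$, and $j$ into the seed and counter-seeding rows.

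First I would set up the geometry. Starting from the seed tile in (say) the bottom-left corner, grow a horizontal binary counter rightward that counts from $0$ to $n-1$, using $\Theta(1)$ tile types per bit position and $\Theta(\log n)$ total to encode the target value $n$; this fixes the width of the square. Simultaneously grow an analogous vertical counter upward to fix the height. The standard trick (see e.g.\ the fixed-width counter constructions referenced via \cite{RotWin00}) lets a counter of $\lceil \log n \rceil$ bits lay down a row of exactly $n$ tiles. From the top of the bottom row and the right of the left column, fill the interior with a default white tile type that propagates via strength-$1$ cooperation (temperature $2$) from its south and west neighbors, so the whole $n \times n$ region gets tiled in white.

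Next I would arrange for exactly one pixel to be black. The cleanest way is to run a second pair of counters "inside" the square: a horizontal counter seeded with the value $i$ that counts down (or counts up and compares) so that it reaches a distinguished state exactly in column $i$, and a vertical counter seeded with value $j$ that reaches a distinguished state exactly in row $j$. Encoding $i$ and $j$ in binary again costs only $O(\log n)$ tile types each. Where the "column $=i$" signal and the "row $=j$" signal meet — at location $(i,j)$ — a cooperative rule places the unique black tile; everywhere else the white tile is placed. One must route the column-$i$ marker vertically up the square and the row-$j$ marker horizontally across it (each as a $1$-tile-wide "wire" of $O(1)$ tile types carrying a single bit of information), and have the black tile attach only when both markers are present via two strength-$1$ glues. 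Care is needed so these wires do not themselves corrupt the white pattern: they can carry the color white while still transmitting the marker bit on a glue, so $\dom_{C_P}(\alpha)$ is unaffected.

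The main obstacle I anticipate is the bookkeeping to guarantee \emph{directedness of the pattern} rather than of the assembly: we need that for every producible terminal assembly $\alpha$, $\var{Color}(\alpha,\vec l)$ is black iff $\vec l = (i,j)$, which requires the marker-propagation and fill processes to be confluent and to never race in a way that leaves a location either uncolored or wrongly colored. This is handled by making all interior attachments strictly cooperative (both south and west neighbors required), so the fill order is forced and the black tile's placement at $(i,j)$ is unavoidable once the counters are correctly seeded; the counter correctness in turn is the routine-but-delicate part, namely checking the carry logic halts exactly at the hard-coded values $n$, $i$, $j$. Finally I would tally the tile types: $O(\log n)$ for each of the (at most) four counters and their seed rows, plus $O(1)$ for the white fill, the black pixel, and the marker wires, for a total of $|T| = O(\log n)$, with $|\sigma| = 1$ and $\tau = 2$ as required.
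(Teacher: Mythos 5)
Your construction is correct, but it inverts the paper's approach. The paper places the single seed tile directly at the black pixel location $(i,j)$ and grows outward: a hard-coded ``counter box'' of side length $\lceil \log n \rceil$ surrounds the seed, four binary counters grow from its pre-programmed sides to the four edges of the square, and constant-type filler tiles complete the rest. Correctness of the black pixel is then immediate because it \emph{is} the seed, and the only delicate point is ensuring the box fits inside the square when $(i,j)$ is within $\log n$ of an edge (the box simply grows away from that edge). You instead seed at a corner, use counters to measure out the square and to locate column $i$ and row $j$, and place the black tile by cooperative attachment where the two marker wires intersect. Both yield $|T| = O(\log n)$ at temperature $2$ with a single-tile seed. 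Your route is more modular and generalizes more naturally (e.g., toward addressing several pixels by row/column coordinates), but it carries obligations the paper's version avoids: you must verify that the marker wires and countdown logic are confluent with the cooperative south--west fill, and you must handle the degenerate cases $i=0$ or $j=0$ (and $(i,j)$ at the seed corner itself), where the black pixel lands on the counter rows and the ``intersection'' tile coincides with a counter tile rather than an interior fill tile. These are routine but should be stated explicitly. Conversely, the paper's version trades your wiring and confluence argument for a small geometric case analysis of where the counter box can grow.
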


\begin{proof}
We present a construction that, given an $n,i,j \in \mathbb{N}$ such that $i,j < n$, creates a TAS $\calT$ with tile complexity of $O(\log n)$ that weakly self-assembles $P = \var{SinglePixel}(n,i,j)$. Figure \ref{fig:single-pixel} shows a high-level depiction of how we build the assembly. 

The assembly starts by growing a hard-coded rectangle with an empty interior called a \emph{counter box}. Each side of the box has glues for a counter to attach. The side lengths are $s = \lceil \log n \rceil$, as this is the maximum length needed to encode the bits required for a binary counter to count a full dimension of the entire square (which is the maximum that could be necessary). Thus, the counter box uses $O(\log n)$ tiles types. 

Each side of the counter box has a number encoded in the outward-facing glues. These numbers are pre-computed so that binary counter tiles (i.e. sets of tiles that operate as standard binary counters) grow outward from them to form a cross-like structure that extends to each boundary of the $n \times n$ square. A constant-sized set of tile types (independent of $i$,$j$, and $n$) is used for each of the four counters. To complete the $n \times n$ square, a constant-sized set of filler tiles fill in between the counters and inside the counter box.

The seed tile is a single black tile, from which the counter box grows, while all other tile types are white. If the black tile is within $\log n$ of the side of the square, then the counter box grows away from the edge, meaning the counter box always has room to grow inside of the $n \times n$ square.

We have shown that the tile complexity of this assembly is $O(\log n)$ and that it weakly self-assembles $P$. Thus, the Theorem \ref{thm:single-pixel} is proved.

\end{proof}

\begin{figure}
    \centering
    \begin{subfigure}{0.3\textwidth}
        \centering
        \includegraphics[width=1.0\textwidth]{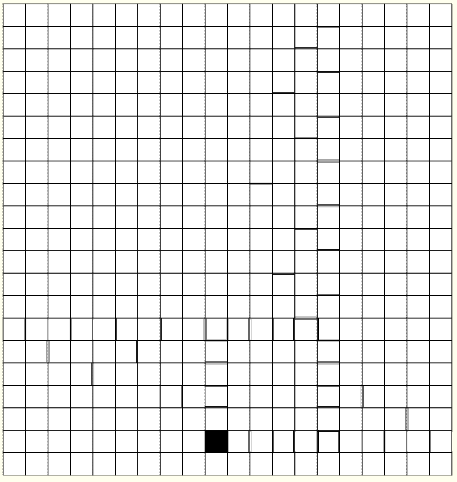}
        \caption{\label{fig:single-pixel-pattern}}
    \end{subfigure}
    \hspace{20pt}
    \begin{subfigure}{0.3\textwidth}
        \centering
        \includegraphics[width=1.0\textwidth]{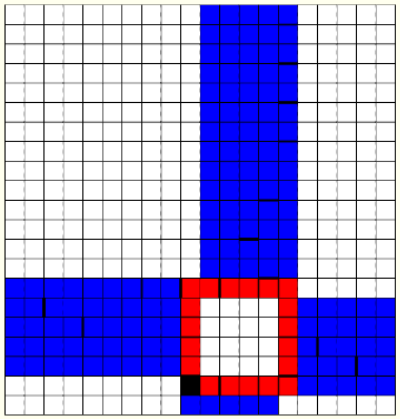}
        \caption{\label{fig:single-pixel-colored}}
    \end{subfigure}
    \caption{(a) An example of a single-pixel pattern. The black pixel is located at (10, 2). (b)  The same single-pixel pattern but with the counter box and counter tiles colored for demonstration. The counter box is colored red. The counters are colored blue. The white locations are filled by generic filler tiles.\label{fig:single-pixel}}
\end{figure}

Next, we define a pattern class whose patterns each consist of a 2-colored $n \times n$ square that is completely white except for a set of (separated and individual) black pixels. 

\begin{definition}[Multi-Pixel Pattern Class]
    Given $n \in \mathbb{N}$ and a set of locations $L \subseteq [0,n-1]^2$ such that $\forall (x, y), (x',y') \in L$, $|x - x'| \geq \lceil \log n \rceil \lor |y - y'| \geq \lceil \log n \rceil$, define $\var{MultiPixel}(n, L) \rightarrow P$ such that (1) $P \in \var{SQPATS}_{2,n}$, and (2) $\forall \vec{v} \in dom(P), \var Color(P,\vec{v}) =$ Black  if $\vec{v} \in L$ else White.
\end{definition}

\begin{figure}
    \centering
    \begin{subfigure}{0.4\textwidth}
        \centering
        \includegraphics[width=1.0\textwidth]{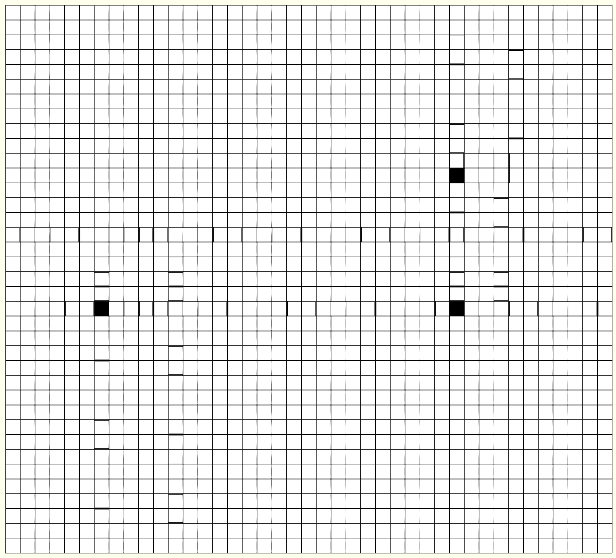}
        \caption{\label{fig:multi-pixel-pattern}}
    \end{subfigure}
    \hspace{20pt}
    \begin{subfigure}{0.4\textwidth}
        \centering
        \includegraphics[width=1.0\textwidth]{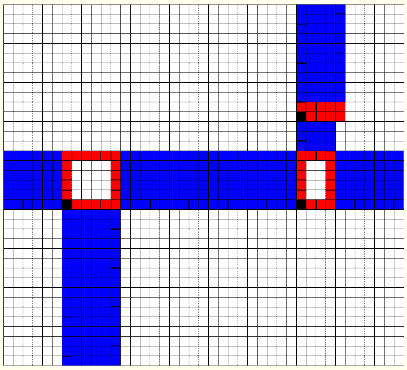}
        \caption{\label{fig:multi-pixel-colored}}
    \end{subfigure}
    \caption{(a) An example of a multi-pixel pattern with three black pixels. (b) A tree of counters is constructed to grow to each pixel and the edges of the square. The counter boxes are colored red. The counters are colored blue. The white locations are filled by generic filler tiles.\label{fig:multi-pixel}}
\end{figure}

\begin{theorem}\label{thm:multi-pixel}
    For all $n \in \mathbb{N}$ and sets of locations $L \subseteq [0,n-1]^2$ such that $\forall (x, y), (x',y') \in L$, $|x - x'| \geq \lceil \log n \rceil \lor |y - y'| \geq \lceil \log n \rceil$, there exists an aTAM system $\calT = (T,\sigma,2)$ such that $|\sigma|=1$, $|T| = O(|L|\log n)$, and $\calT$ weakly self-assembles $\var{MultiPixel}(n, L)$.
\end{theorem}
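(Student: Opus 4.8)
The plan is to generalize the construction behind Theorem~\ref{thm:single-pixel} from a single hard-coded counter box to a \emph{tree of counter boxes}, one per black pixel, as sketched in Figure~\ref{fig:multi-pixel}. The tile-complexity budget drives the whole design: a hard-coded counter box with empty interior and side length $\lceil \log n\rceil$, encoding the coordinates of its pixel bit-by-bit along its sides, costs $O(\log n)$ tile types, and the construction uses exactly $|L|$ of them; everything else — the binary counters that route from one box to the next and out to the four edges of the square, together with generic filler tiles — is a single fixed, $O(1)$-size set of tile types, since a binary counter over $\lceil\log n\rceil$ bits that counts down to a terminal value and self-terminates needs only constantly many tile types regardless of $n$. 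Summing gives $|T| = O(|L|\log n)$, matching the claim.

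Concretely, as part of the construction $C_{PC}$ I would first fix (computably from $(n,L)$) a linear order $p_1,\dots,p_{|L|}$ on the pixels of $L$ together with, for each $i\ge 2$, a short L-shaped route from a designated side of $p_{i-1}$'s box to the location of $p_i$'s box; these data are baked into glue labels. The seed is the single black tile at $p_1$, from which $p_1$'s hard-coded box assembles. Each box emits on its sides precomputed starting counter values, exactly as in Theorem~\ref{thm:single-pixel}; generic counter tiles then grow outward, counting down the relevant coordinate so that a branch aimed at column/row $0$ or $n-1$ self-terminates precisely at the corresponding boundary of the $n\times n$ square, while a branch aimed at the next pixel counts off the precomputed distances, turns, and nucleates $p_{i+1}$'s (hard-coded) box at the correct location. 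The only black tiles in the entire tile set are the seed and the single designated pixel tile of each box; every other tile type is white. Hence in the assembly the black tiles occupy exactly $L$ with the colors matching $\var{MultiPixel}(n,L)$, and generic filler tiles complete the rest of the $n\times n$ square in white, so $\calT$ weakly self-assembles $\var{MultiPixel}(n,L)$.

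The hard part will be the geometry: one must choose the placement and growth direction of each of the $|L|$ counter boxes, and the routes between them, so that no two boxes overlap, no counter crosses another counter or a box, the routes together with the boundary branches partition the leftover region into rectangles that filler tiles can fill deterministically (with no spurious attachments off the generic counter/filler glues), and nothing is ever placed outside the $n\times n$ square. This is exactly where the hypothesis $|x-x'|\ge\lceil\log n\rceil \lor |y-y'|\ge\lceil\log n\rceil$ is essential: it guarantees that distinct pixels are far enough apart in at least one coordinate that their $\lceil\log n\rceil$-sized boxes can be made pairwise disjoint, and — under a consistent routing discipline (e.g., always traverse horizontally before vertically, and orient each box away from whichever edges lie within $\lceil\log n\rceil$ of its pixel) — that the L-shaped routes can be threaded between the boxes. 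I expect the bulk of the write-up to be a routine but careful case analysis over how boxes near the edges and corners are oriented, verifying disjointness, coverage, and that the system is directed; no single step is conceptually deep, but handling every case correctly is where the effort lies.
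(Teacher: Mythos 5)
Your proposal takes essentially the same approach as the paper: grow a tree of hard-coded counter boxes, one per black pixel (with the seed in the first box), connect them by binary counters that also run out to the four edges, and backfill with white filler tiles, using the $\lceil \log n \rceil$ separation hypothesis to keep the boxes disjoint and the $|L|$ boxes at $O(\log n)$ tile types each to dominate the budget. The only notable organizational difference is that the paper also places a counter box at every \emph{bend} of the routing path (calling any direction change a ``node'') and gives each segment its own counter tile types, whereas you handle turns inside an L-shaped route and try to share a single constant-size counter tile set; this does not change the $O(|L|\log n)$ bound, but it does push the hard-coded start values for both legs of each L onto the boxes and their glues, so the paper's bend-box version is somewhat cleaner to make rigorous.
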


The proof of Theorem \ref{thm:multi-pixel} uses a construction that is an extension of that used in the proof of Theorem \ref{thm:single-pixel}, and essentially uses a series of counters to build a path connecting all pixels and filler tiles for the remaining portion of the square. A high-level depiction is shown in Figure \ref{fig:multi-pixel}, and full details can be found in Section \ref{sec:multi-pixel-appendix}.

The final (relatively) simple pattern class that we define contains patterns that each consist of a $2$-colored $n \times n$ square with a set of repeating black horizontal rows and a set of repeating black vertical columns.

\begin{definition}[Stripes Pattern Class]
    Given $n,i,j \in \mathbb{N}$, where $i,j < n$, define $\var{Stripes}(n,i,j) \rightarrow P$ such that (1) $P \in \var{SQPATS}_{2,n}$, and (2) $\forall x, y \in [0,n-1], \var Color(P, (x, y)) = \text{Black} \text{ if }x\mod i=0 \text{ or } y \mod j = 0, \text { else White}$.
\end{definition}

\begin{theorem}\label{thm:stripes}
    For all $n, i, j \in \mathbb{N}$, where $i,j < n$, there exists an aTAM system $\calT = (T,\sigma,2)$ such that $|\sigma|=1$, $|T| = O(\log(n))$ and $\calT$ weakly self-assembles $\var{Stripes}(n, i, j)$.
\end{theorem}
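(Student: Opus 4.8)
The plan is to generalize the counter-box construction of Theorem~\ref{thm:single-pixel} (we assume $i,j\ge 1$, so that the moduli are well defined). The key structural observation is that a stripes pattern is the pixelwise \emph{disjunction} of a \emph{vertical}-stripes pattern (a pixel is Black iff $x \equiv 0 \pmod i$) and a \emph{horizontal}-stripes pattern (a pixel is Black iff $y \equiv 0 \pmod j$): in a vertical-stripes pattern every column is monochromatic, in a horizontal-stripes pattern every row is monochromatic, and $\var{Color}(P,(x,y)) = \text{Black}$ iff at least one of the two is Black there. Consequently it suffices to route to each location $(x,y)$ a single \emph{black-column} bit (``is $x \equiv 0 \pmod i$?'') and a single \emph{black-row} bit (``is $y \equiv 0 \pmod j$?''), and color that location Black iff either bit is $1$. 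Because each of these is only one bit, a tile that receives the black-column bit from its south neighbor and the black-row bit from its west neighbor, decides its color, and re-exposes both bits on its north and east sides comes from a set of only $O(1)$ tile types; this is what keeps the tile complexity at $O(\log n)$ rather than $\Omega(ij)$.

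Concretely, I would build, as in the construction of Theorem~\ref{thm:single-pixel}, a hard-coded counter box of side $s = O(\log n)$ placed inside the $n\times n$ square, with precomputed glues on its four sides, from which four counters grow outward to the four boundaries of the square, forming a cross that cuts the square into four quadrants together with the box region and the four arm strips. The two horizontal arms run a binary mod-$i$ counter (carrying, in a separate $\lceil\log n\rceil$-bit field, the absolute horizontal coordinate so the arm can detect the east/west boundary), and at each column they traverse they expose on both their top and bottom faces the black-column bit for that column plus a boundary flag; symmetrically, the two vertical arms run a mod-$j$ counter with an absolute-coordinate field and expose black-row bits and boundary flags on their left and right faces. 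Each quadrant is then completed by a constant-size family of filler tiles that inherit a black-column bit from the tile below and a black-row bit from the tile to the side, take color Black iff one of the two is $1$ (White otherwise), propagate both bits onward, and stop growing once the boundary flags arriving from the two incident arms indicate the quadrant's outer corner has been reached. The box interior and the $O(\log n)$-wide arm strips --- which are themselves part of the square and must be colored correctly --- are filled the same way, using tiny mod-$i$ and mod-$j$ counters seeded directly from the precomputed glues of the box frame, so that no region is hard-coded with more than $O(\log n)$ tile types. The single seed tile sits at a fixed, necessarily-Black location of the box (e.g.\ a corner of the square) and is colored Black.

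For the complexity count: the counter-increment logic, the bit-propagators, and the quadrant/interior fillers are each realized by $O(1)$ tile types; the counter box and the encodings of the four precomputed initial counter values contribute $O(\log n)$ tile types; the seed is one tile. Hence $|T| = O(\log n)$, $|\sigma| = 1$, and $\tau = 2$, as required. Correctness follows by verifying that (i) the four arms halt exactly at the boundary of an $n \times n$ square, so $\dom(\alpha)$ is that square, and (ii) along every assembly sequence the black-column bit that reaches column $x$ equals $[x \equiv 0 \pmod i]$ and the black-row bit that reaches row $y$ equals $[y \equiv 0 \pmod j]$, so each tile's color is Black iff $x \equiv 0 \pmod i$ or $y \equiv 0 \pmod j$, matching $\var{Stripes}(n,i,j)$. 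I expect the main obstacle to be purely bookkeeping: the cross of arms partitions the square into regions (quadrants, arm strips, box interior) each governed by a slightly different local rule for where its two defining bits originate, and one must check that all of these rules are implemented by the \emph{same} $O(1)$ gadget families (plus the single $O(\log n)$-size box), so that adding periodic coloring does not secretly reintroduce coordinate-dependence and blow the tile count up to $\Theta(n)$ or worse. Full details are deferred to the Technical Appendix.
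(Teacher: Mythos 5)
Your proposal is correct and follows essentially the same approach as the paper: counters built from $O(\log n)$ tile types run along one row and one column to compute the stripe positions (including the same subtleties the paper handles, namely halting the arms at the square's boundary and correctly coloring counter tiles that overlap a stripe), and the rest of the square is filled by a constant-size tile family that propagates a single column bit and a single row bit to each location and colors it Black iff either bit is set. The only cosmetic difference is that you describe a counter box with four outward arms forming a cross, while the paper grows an ``L'' of counters from a corner seed --- but since you also place your seed at a corner of the square, the two layouts coincide.
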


The proof of Theorem \ref{thm:stripes} can be found in Section \ref{sec:stripes-appendix}. It is done by construction, where the construction has counters that grow vertically to count to, and mark, the locations of horizontal strips, and counters that grow horizontally to count to, and mark, the locations of vertical stripes. Counters also keep track of the distance to the boundaries of the square to ensure growth stops at the correct locations. An overview can be seen in Figure \ref{fig:stripes}.
 
\begin{figure}
    \centering
    \begin{subfigure}{0.3\textwidth}
        \centering
        \includegraphics[width=1.0\textwidth]{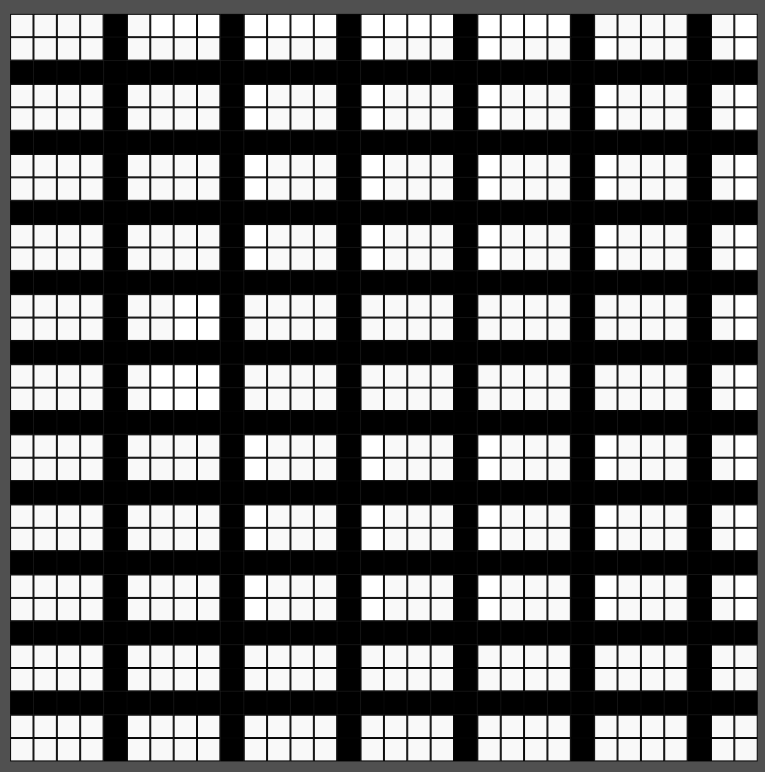}
        \caption{\label{fig:stripes-pattern}}
    \end{subfigure}
    \hspace{20pt}
    \begin{subfigure}{0.3\textwidth}
        \centering
        \includegraphics[width=1.0\textwidth]{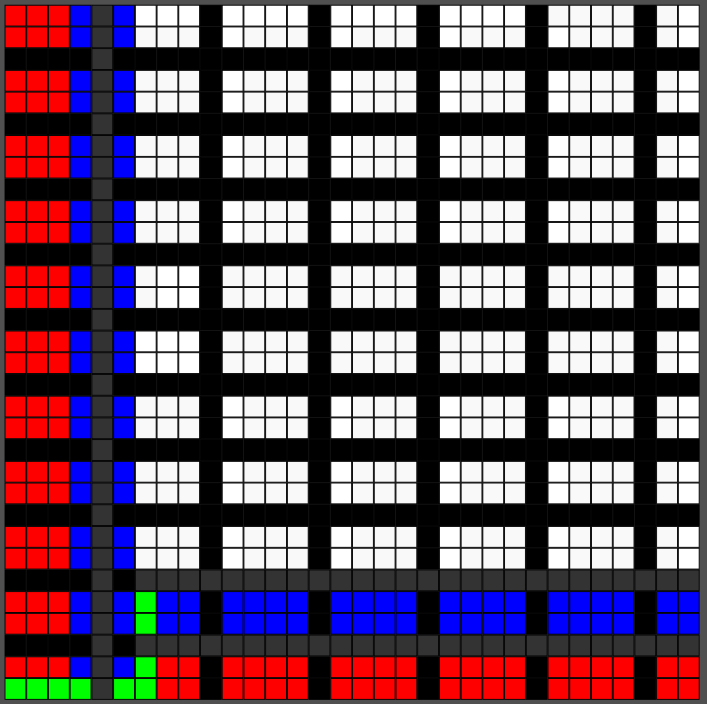}
        \caption{\label{fig:stripes-colored}}
    \end{subfigure}
    \caption{(a) An example of a stripes pattern. (b) The blue tiles count to the next stripe, while the red tiles count the number of stripes. Green tiles represent the starting rows for the counters (with the seed tile being at the corner where the green row and column intersect). Dark grey tiles represent counter tiles that are colored black\label{fig:stripes}}
\end{figure}
\section{Tight Bounds for Patterns on $n \times n$ Squares}\label{sec:tight-squares}

In this section, we prove tight bounds on the tile complexity of self-assembling 2-colored patterns on the surfaces of $n \times n$ squares for almost all such patterns. 

\begin{theorem}\label{thm:tight-squares}
For almost all positive integers $n$ and  $P \in \var{SQPATS}_{2,n}$, the tile complexity of weakly self-assembling $P$ 
by a singly-seeded system in the aTAM is $\Theta\left(\frac{n^2}{\log n}\right)$.
\end{theorem}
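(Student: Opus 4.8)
The statement is a tight bound, so the plan is to prove it in two halves: an $\Omega\!\left(\frac{n^2}{\log n}\right)$ lower bound via a counting (information-theoretic) argument, and an $O\!\left(\frac{n^2}{\log n}\right)$ upper bound via an explicit construction that works for \emph{every} $P \in \var{SQPATS}_{2,n}$ (hence in particular for almost all of them).

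\textbf{Lower bound.} First I would fix $n$ and count. There are $2^{n^2}$ patterns in $\var{SQPATS}_{2,n}$. On the other hand, I want to bound the number of patterns weakly self-assembled by \emph{some} singly-seeded aTAM system with at most $m$ tile types. A tile type over a bounded alphabet and bounded glue strengths (one can assume $\tau$ and the maximum strength are bounded, e.g.\ $\tau=2$, since higher temperatures give no extra power relevant here — or more carefully, argue that only strength-$\le\tau$ glues matter and WLOG glue labels range over a set of size polynomial in $m$) is described by $O(\log m)$ bits once we insist that the number of distinct glue labels is at most $4m$; a color assignment adds one bit per tile. So a system with $m$ tile types, a single seed tile, and fixed $\tau$ is describable in $O(m\log m)$ bits, and the number of such systems is $2^{O(m\log m)}$. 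Since each system determines at most one pattern on the surface of the square it builds (if it builds a square at all), the number of patterns weakly self-assembled by systems with $\le m$ tiles is $2^{O(m\log m)}$. For this to cover even half of the $2^{n^2}$ patterns we need $2^{O(m\log m)} \ge 2^{n^2-1}$, i.e.\ $m\log m = \Omega(n^2)$, i.e.\ $m = \Omega\!\left(\frac{n^2}{\log n}\right)$. Hence for all but a $2^{-\Omega(n^2)}$ fraction of patterns, the tile complexity is $\Omega\!\left(\frac{n^2}{\log n}\right)$; summing the exceptional fractions over $n$ shows the bound holds for almost all $n$ and $P$.

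\textbf{Upper bound.} For the matching $O\!\left(\frac{n^2}{\log n}\right)$ construction, given an arbitrary $n\times n$ pattern $P$, the idea is to encode $P$ as an $n^2$-bit string, chop it into $\Theta(n^2/\log n)$ blocks of $\Theta(\log n)$ bits each, and hard-code one tile type (or a constant number) per block so that the total number of "data" tiles is $O(n^2/\log n)$. A framework of counters (reusing the machinery from Theorem~\ref{thm:single-pixel}: binary counters built from $O(1)$ tile types plus an $O(\log n)$-tile seed/counter-box) lays out an $n\times n$ grid, walks row by row, and a constant-size "unpacker" gadget expands each $\Theta(\log n)$-bit data block back into the corresponding run of $\Theta(\log n)$ consecutive pixel colors as it sweeps across the square, placing black or white tiles accordingly. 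The counter overhead is $O(\log n)$, the unpacking gadgetry is $O(1)$, and the data tiles dominate at $O(n^2/\log n)$, giving the claimed bound; and since this works for every $P$, it certainly works for almost all of them.

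\textbf{Main obstacle.} The routine direction is the upper-bound construction (it is standard "optimal encoding via counters plus unpacker" tile programming, and the paper already has the counter infrastructure). The delicate point is the lower bound: making precise the claim that a singly-seeded aTAM system with $m$ tile types is describable in $O(m\log m)$ bits — in particular, justifying that we may assume the glue-label set has size $O(m)$ (unused labels can be renamed), that the temperature and glue strengths can be taken bounded without loss of generality for the purpose of counting realizable patterns, and that each system yields at most one pattern on a square of each size (a system could be non-directed or build assemblies of several shapes, so one must phrase the counting as "for each $m$, the set of patterns $P\in\var{SQPATS}_{2,n}$ that are weakly self-assembled by \emph{some} $m$-tile system has size $2^{O(m\log m)}$", which still goes through because each (system, shape) pair pins down at most one candidate pattern and the number of relevant shapes of area $n^2$ contributes only a $2^{O(\cdot)}$ factor that is absorbed). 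Once that bookkeeping is nailed down, the Borel–Cantelli-style "almost all $n$" conclusion is immediate from the geometric decay of the exceptional fraction.
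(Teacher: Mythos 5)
Your two-part plan is the same as the paper's: the lower bound is its Lemma~\ref{lem:random-lower}, proved by an information-theoretic argument, and the upper bound is its Lemma~\ref{lem:random-upper}, proved by a construction that works for every $P$. Your lower bound is the direct-counting rendition of the paper's Kolmogorov-complexity version (the paper likewise fixes $\tau$, assumes WLOG that glue strengths are bounded by $\tau$, and encodes a system in $O(|T|\log|T|)$ bits before invoking incompressibility of almost all $n^2$-bit strings); the two arguments are interchangeable, and your observation that each system weakly self-assembles at most one $2$-colored square pattern is exactly the point that makes the count close.

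The one concrete flaw is in your upper bound: the ``unpacker'' cannot be a constant-size gadget. If a data tile carries a block of $k=\Theta(\log n)$ pattern bits packed into a single glue label, then the first tile of the chain that expands that block must present a matching glue, so there must be at least one tile type per possible $k$-bit label, i.e.\ $2^{k}=\Theta(n)$ unpacking tile types (and $\sum_{j\le k}2^{j}=O(n)$ over the whole chain). This is precisely the paper's ``rib'' tile set: one tile type $t_b$ per binary string $b$ of length at most $\lfloor\log n\rfloor$, with glue $b$ on one side and $b$ with its leading bit truncated on the other, colored by that leading bit. The error is harmless to the theorem, since $O(n)$ is absorbed into $O\left(\frac{n^2}{\log n}\right)$, but your accounting (``unpacking gadgetry is $O(1)$'') is wrong as stated. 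You would also need to specify where the $O\left(\frac{n^2}{\log n}\right)$ hard-coded data tiles sit geometrically so that each unpacked block lands on its correct run of pixels; the paper does this with hard-coded skeleton columns spaced $2\lfloor\log n\rfloor+1$ apart, from which ribs of length $\lfloor\log n\rfloor$ grow east and west, rather than with a counter-driven sweep.
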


We prove Theorem \ref{thm:tight-squares} by separately proving the lower and upper bounds, as Lemma \ref{lem:random-lower} and Lemma \ref{lem:random-upper}, respectively.

\begin{lemma}

For almost all patterns $P \in \var{SQPATS}_2$, the tile complexity of weakly self-assembling $P$ by a singly-seeded system in the aTAM is $\Omega\left(\frac{n^2}{\log n}\right)$.
\label{lem:random-lower}
\end{lemma}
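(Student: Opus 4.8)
The plan is to use a counting (information-theoretic) argument: there are roughly $2^{n^2}$ patterns in $\var{SQPATS}_{2,n}$, but the number of aTAM systems describable with few tile types is too small to cover them all, so almost all patterns must require many tile types. First I would fix $n$ and count patterns: $|\var{SQPATS}_{2,n}| = 2^{n^2}$. Next I would bound, for a given tile-complexity budget $t = t(n)$, the number of distinct singly-seeded TASs $\calT = (T,\sigma,\tau)$ with $|T| \le t$ up to relevant equivalence. The key observation is that for a weakly self-assembling system producing a bounded $n \times n$ pattern, we may assume temperature $\tau$ and glue strengths are bounded (glues of strength $> \tau$ behave like strength $\tau$, so strengths lie in $\{0,1,\dots,\tau\}$, and only $\tau = 2$ is needed to be competitive), glue labels can be taken from an alphabet of size $O(t)$ (at most $4t$ distinct glues appear), each tile has $4$ sides each carrying one of $O(t)$ glues with one of $O(1)$ strengths, and the seed is a single tile. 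Hence the number of such systems is at most $t^{O(t)} = 2^{O(t \log t)}$, and each such system weakly self-assembles at most one pattern in $\var{SQPATS}_{2,n}$ (a fixed assembly induces a fixed coloring; even accounting for multiple terminal assemblies, the definition of weak self-assembly forces the colored domain to equal $P$, pinning down $P$ — or at worst, a system contributes a bounded number of patterns, which does not affect the asymptotics).

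Then I would combine the bounds: if $t(n) = o\!\left(\frac{n^2}{\log n}\right)$, then $2^{O(t \log t)} = 2^{o(n^2)}$, so the fraction of patterns in $\var{SQPATS}_{2,n}$ that are weakly self-assembled by \emph{some} system of complexity $\le t(n)$ is at most $2^{o(n^2)} / 2^{n^2} \to 0$. Quantifying: choosing $t(n) = c \cdot \frac{n^2}{\log n}$ for a sufficiently small constant $c > 0$ makes $t \log t \le t \cdot (2\log n) = 2c\, n^2$, so for small enough $c$ we get $t\log t < \tfrac{1}{2} n^2$ (after absorbing the $O(\cdot)$ constant), and the counting gives that all but a $2^{-\Omega(n^2)}$ fraction of patterns on $n \times n$ squares have tile complexity $> t(n) = \Omega\!\left(\frac{n^2}{\log n}\right)$. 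Taking the union over $n$ (or rather, noting the statement is ``for almost all patterns $P \in \var{SQPATS}_2$,'' which I would interpret as: for each $n$, all but a vanishing fraction of $\var{SQPATS}_{2,n}$ require $\Omega(n^2/\log n)$ tiles) completes the argument.

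The main obstacle I expect is making the ``number of systems of tile complexity $\le t$'' bound rigorous and tight enough — specifically, arguing that without loss of generality one only needs to count systems with bounded temperature and bounded glue strengths, and that glue \emph{labels} can be canonicalized to a small alphabet (since only the equivalence classes of glues matter, not their names). A secondary subtlety is the mapping from systems to patterns: I must be careful that ``weakly self-assembles'' is well-defined per system and that, even if a system is not directed, the set of patterns it can weakly self-assemble is small (ideally a single pattern, by the $\forall \alpha \in \termasm{\calT}$ quantifier in the definition forcing $\dom_{C_P}(\alpha) = P$ for all terminal $\alpha$). Handling the seed: since $|\sigma| = 1$, the seed contributes only a choice of one tile type (already counted), so it adds at most a factor of $t$. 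None of these introduce more than polynomial-in-$t$ or $2^{O(t\log t)}$ factors, so the bound survives; the care is entirely in the bookkeeping.
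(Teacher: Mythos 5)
Your argument is correct, and it reaches the same $\Omega(n^2/\log n)$ bound, but by a genuinely different route than the paper. The paper runs a Kolmogorov-complexity argument: it observes that a system $\mathcal{T}_w$ weakly self-assembling the pattern corresponding to $w\in\{0,1\}^{n^2}$ can be encoded in $O(|T_w|\log|T_w|)$ bits, that a fixed universal program can decode and simulate such an encoding to recover $w$, and that for almost all $w$ one has $K_U(w)\ge(1-\varepsilon)|w|$; chaining these gives $|T_w|\log|T_w|=\Omega(n^2)$ and hence $|T_w|=\Omega(n^2/\log n)$. You replace the Kolmogorov step with a bare pigeonhole count: there are $2^{n^2}$ patterns but only $2^{O(t\log t)}$ distinct singly-seeded systems with $\le t$ tile types (after normalizing glue labels to an $O(t)$-size alphabet), and each system weakly self-assembles at most one pattern (the $\forall\alpha\in\termasm{\calT}$ quantifier pins down a unique $P$), so if $t\log t = o(n^2)$ these systems cover a vanishing fraction of patterns. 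The two proofs are equivalent in strength — in fact your $2^{O(t\log t)}$ system count is exactly the same encoding-length bound the paper feeds into $K_U$ — but yours is more self-contained since it avoids invoking the machinery and density properties of Kolmogorov complexity, at the cost of having to make the ``systems-to-patterns'' injection explicit, which you do. The one shared soft spot, which you correctly flag as ``the main obstacle,'' is normalizing temperature and glue strengths: the paper also sidesteps this by simply declaring $\tau$ a fixed constant at the outset and replacing over-strong glues by strength-$\tau$ ones, which is the same move you make; a fully airtight version of either proof would appeal to the strength-free / cooperation-set normalization of Chen and Doty that the paper uses elsewhere (Section~\ref{sec:simulation}), which bounds the number of behaviorally distinct systems by $2^{O(t\log t)}$ regardless of $\tau$. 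That gap is cosmetic, not fatal, in both your argument and the paper's.
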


The proof of Lemma~\ref{lem:random-lower} is a straight-forward information-theoretic argument and can be found in Section \ref{sec:random-lower-appendix}.

To prove the upper bound for Theorem~\ref{thm:tight-squares}, we prove the following, which is a stronger result that applies to all positive integers $n$.

\begin{figure}
    \centering
    \begin{subfigure}[T]{0.38\textwidth}
        \vspace{8pt}
        \includegraphics[width=\textwidth]{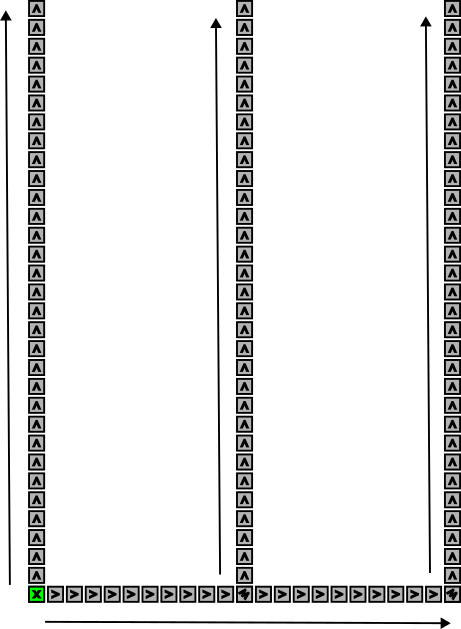}
        \caption{The skeleton. The seed is represented in green in the lower left and the arrows show the directions of growth.}
        \label{fig:skeleton}
    \end{subfigure}
    \hfill
    \begin{subfigure}[T]{0.50\textwidth}
        \includegraphics[width=\textwidth]{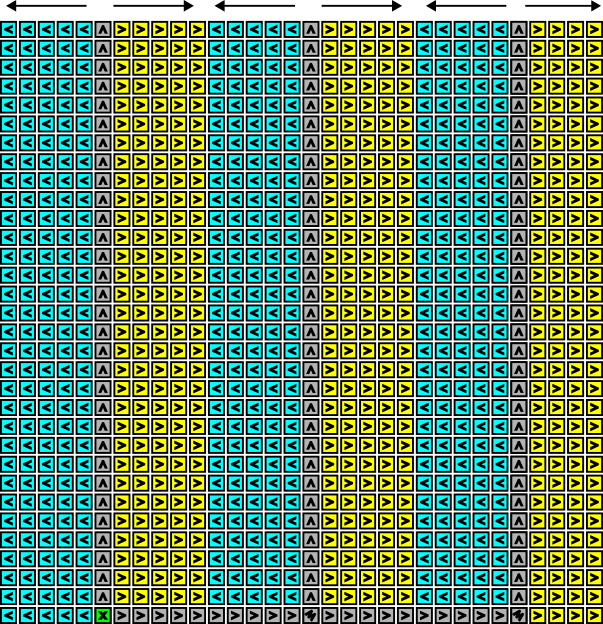}
        \caption{The square once the ribs of the skeleton have filled in (blue growing to the left, yellow growing to the right).}
        \label{fig:skeleton-filled}
    \end{subfigure}
    \caption{A schematic example of the construction of the proof of Lemma \ref{lem:random-upper}. Instead of showing the black and white colors corresponding to the pattern, we color the tiles to show the pieces of the construction to which they belong.}
    \label{fig:skeleton-construction}
\end{figure}

\begin{lemma}
For all positive integers $n$ and $P \in \var{SQPATS}_{2,n}$, there exists an aTAM system $\calT = (T,\sigma,1)$ such that $|\sigma|=1$, $|T| = O\left(\frac{n^2}{\log n}\right)$ and $\calT$ weakly self-assembles $P$.
\label{lem:random-upper}
\end{lemma}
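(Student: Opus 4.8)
The plan is to encode the $n^2$ bits of the pattern $P$ into the glues of a hard-coded "skeleton" structure, packing $\Theta(\log n)$ bits of pattern information into each tile, so that the total number of distinct tile types carrying pattern data is $O(n^2/\log n)$, and then to have generic (pattern-independent) machinery decode these bits and place appropriately colored filler tiles across the $n \times n$ square.

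\textbf{Construction outline.} I would build the square row by row (or in horizontal strips) from a seed in a corner, as suggested by Figure~\ref{fig:skeleton-construction}. First grow a vertical "spine" up the left edge of the square; this spine is hard-coded and has height $n$, but I can afford $O(n)$ tile types here since $O(n) = O(n^2/\log n)$. From each position on the spine, a horizontal "rib" grows rightward across the square, carrying the color information for that row. The key trick: group the $n$ pixels of a row into blocks of length $k = \Theta(\log n)$, so each block has $2^k = \mathrm{poly}(n)$ possible black/white colorings. For each block I use a hard-coded "chunk" tile whose glue encodes the $k$-bit coloring of that block; there are $n/k$ blocks per row and $n$ rows, giving $(n/k) \cdot n = O(n^2/\log n)$ such chunk-encoding tile types in total (even though each is used only once, since each encodes a position plus a bitstring). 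Then a \emph{constant-size}, pattern-independent gadget reads a $k$-bit glue and "unpacks" it into $k$ consecutive tiles colored black or white according to those bits — this is a standard aTAM technique (a binary-counter-like or table-lookup decoder using $O(2^k \cdot k) = \mathrm{poly}(n)$... wait, that is too many). Let me instead use a decoder that is itself built from $O(k)$ tiles reading one bit of the glue at a time by passing a shrinking binary string along; since $k = O(\log n)$ bits fit in $O(1)$ tile types per bit only if the alphabet is unbounded, which the model allows. So the decoder is $O(\log n)$ tiles total, absorbed into the budget.

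\textbf{Temperature-1 subtlety.} The lemma insists $\tau = 1$, so I must ensure the assembly is \emph{directed} and grows in a controlled, essentially one-dimensional-fronted manner — no cooperative binding is available, so every tile attaches via a single glue. This forces a zig-zag or strictly "comb"-shaped growth order: the spine grows fully up, then ribs grow out one at a time (or in a fixed order) with glues passing left-to-right so that each rib tile has a unique predecessor, and the colored filler within each rib is placed by the unpacking gadget also in strict left-to-right order. I would make the spine grow first and have each rib's growth triggered by its spine tile, with a "return" glue pattern preventing any backward or ambiguous growth. Verifying that this yields exactly one terminal assembly whose $\dom_{C_P}$ equals $P$ with the correct colors is the bookkeeping core of the proof.

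\textbf{Main obstacle.} The hard part will be implementing the $k$-bit "unpacking" decoder within the tile budget \emph{and} at temperature 1: I need a constant-or-logarithmic-size tileset that, reading a single glue carrying a $\Theta(\log n)$-bit string, deposits $\Theta(\log n)$ tiles with the prescribed color sequence, while keeping the whole thing directed under single-strength bonds. The natural approach — peel off one bit at a time, emit a correspondingly colored tile, pass the remaining suffix rightward via a new glue — uses $O(1)$ tile types but relies on an unbounded glue alphabet (fine in this model) and on careful glue design so the process is deterministic and halts exactly when the string is exhausted (i.e., the glue must also carry a length or the block boundary must be detectable). Getting this gadget right, and confirming the global tile count is genuinely $O(n^2/\log n)$ rather than, say, $O(n^2)$ from some hidden per-row or per-position blowup, is where I expect to spend the most care; everything else (the spine, the filler, termination at the square's boundary) follows standard patterns from the simple-pattern constructions earlier in the paper.
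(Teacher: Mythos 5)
Your high-level strategy---pack $\Theta(\log n)$ pattern bits into each of $O(n^2/\log n)$ position-specific tiles and use a shared decoder to expand each compressed chunk into colored tiles---is the same idea the paper uses, and you correctly flag the decoder and $\tau=1$ directedness as the delicate steps. But your resolution of the decoder's tile count is wrong. You reason that an unbounded glue alphabet lets a ``peel one bit, pass the suffix'' decoder get by with $O(1)$ tile types per position, hence $O(\log n)$ total. That conflates glue \emph{labels} with tile \emph{types}: a tile type is determined by its full tuple of four glues plus a color, so a decoder tile whose west glue carries suffix $s \in \{0,1\}^\ell$ and whose east glue carries $s[1{:}]$ is a distinct tile type for every distinct $s$. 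Summing over $\ell = \lfloor\log n\rfloor, \ldots, 1$ gives $\sum_\ell 2^\ell = O(n)$ decoder tile types, which is exactly the paper's rib count. (Your earlier instinct that $O(2^k k) = O(n\log n)$ ``is too many'' was also a false alarm, since $n\log n = o(n^2/\log n)$.) The budget survives because $O(n) = o(n^2/\log n)$, but the count you asserted is off by a factor of $n/\log n$, and it reflects a real misconception about how tile complexity is measured.

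The second gap is structural: a single left-edge spine with chunk tiles embedded mid-row creates a handoff problem at $\tau = 1$. The last decoder tile of chunk $c$ exposes a generic east glue (it carries only the final bit, no positional information), yet the next tile to the east must be the position-specific chunk tile $c{+}1$. Many different position-specific chunk tiles would need to share that same generic west glue, so nothing forces the correct one to attach and the system is not directed. The paper avoids this by hard-coding an entire skeleton---the bottom row \emph{plus} vertical columns placed every $2\lfloor\log n\rfloor + 1$ positions, accounting for all $O(n^2/\log n)$ position-specific tiles---and having each column tile expose, on its east and west sides, the $\lfloor\log n\rfloor$-bit string naming its rib. The shared decoder (rib) tiles then grow \emph{outward from the columns as dead ends}, $\lfloor\log n\rfloor$ tiles east and $\lfloor\log n\rfloor$ tiles west, so adjacent ribs from neighboring columns meet but never need to bind, and control never passes from a generic tile back to a position-specific one. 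That dead-end layout is the move that makes temperature-$1$ determinism work, and it is absent from your sketch; to repair your version you would essentially have to convert your chunk tiles into the paper's columns.
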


\begin{proof}
We proceed by construction. Let $n \in \mathbb{Z}^+$ be the dimensions of the square and $P$ be the $n \times n$ pattern of black and white pixels to weakly self-assemble on the square.
Our construction will yield a system $\calT = (T,\sigma,1)$ that self-assembles an $n \times n$ square on which $P$ is formed by the black and white tiles of $T$. The tile set $T$ will be composed of two subsets, $T_s$ whose tiles form the \emph{skeleton}, and $T_r$ whose tiles form the \emph{ribs}.
We first explain the formation of the skeleton, then that of the ribs. Figure \ref{fig:skeleton-construction} shows a high-level depiction. 

\textbf{Skeleton} The seed is part of the skeleton and is placed at location $(\lfloor\log n\rfloor,0)$ and is given the color $\var{Color}(P, (\lfloor\log n\rfloor,0))$.
Since a vertical column of the skeleton has width one, and the ribs growing off of each side have length $\lfloor \log n \rfloor$, the width of a pair of ribs and its skeleton column (which we will call a \emph{rib-pair}) is $2\lfloor \log n \rfloor + 1$. Dividing the full width $n$ by the width of a rib-pair, and taking the floor, gives the number of full rib-pairs that will fit. Let $f = \lfloor\frac{n}{2\log n+1}\rfloor$ be this number.
Let $r = n \mod (2\lfloor \log n \rfloor + 1)$ be the remaining width after the last full rib-pair.
If $r < \lfloor \log n \rfloor + 1$, then a column of the skeleton grows up immediately to the right of the last full rib-pair, and its ribs are of length $r - 1$ and grow to the right.
If $r \ge \lfloor \log n \rfloor + 1$, then the last skeleton column grows upward $\lfloor \log n \rfloor$ positions to the right of the last full rib-pair and has full-length ribs (i.e., $\lfloor \log n \rfloor$) that grow to its left and ribs of length $r - (\lfloor \log n \rfloor + 1)$ grow to its right.
In the first case, the row of the skeleton that forms the bottom row of the square extends from the seed to $x$-coordinate $f(2\lfloor \log n \rfloor + 1) + 1$.
In the second case, that row extends from the seed to $x$-coordinate $f(2\lfloor \log n \rfloor + 1) + \lfloor \log n \rfloor + 1$.
The tiles of that row are hard-coded and there are $O(n)$ of them.
Starting with the seed and then occurring at every $2\lfloor\log n\rfloor + 1$ locations of the bottom row, a hard-coded set of tiles grows a column of height $n-1$. This row and set of columns are the full skeleton. The number of tile types is $O(n)$ for the row and $O(n)$ for each of the $O\left(\frac{n}{\log n}\right)$ columns, for a total of $O(n) + O\left(\frac{n^2}{\log n}\right) = O\left(\frac{n^2}{\log n}\right)$ tile types. Note that each skeleton tile type is given the color of the corresponding location in the pattern $P$.

\textbf{Ribs} From the east and west sides of each location on the columns of the skeleton, ribs grow. Each rib is composed of $\lfloor\log n\rfloor$ tiles (except the ribs growing from the easternmost column, which may be shorter). Since there are two possible colors for each of the $\lfloor \log n\rfloor$ locations of a rib, there are a maximum of $2^{\lfloor \log n \rfloor} \leq n$ possible color patterns for any rib to match the corresponding locations in $P$.
(Note that we will discuss the construction of the tiles for ribs that grow to the east, and for ribs that grow to the west the directions are simply reversed.)
For any given rib $r$, let the portion of $P$ corresponding to the locations of $r$ be represented by the binary string of length $\lfloor\log n\rfloor$ where each black location is represented by a $0$, and each white by a $1$.
For example, for a rib $r$ of length $5$ growing eastward from a column, if the corresponding locations of $P$ are ``black, black, white, black, white'', then the binary string will be ``$00101$''.
For each possible binary string $b$ of length $\lfloor\log n\rfloor$, i.e. $b \in \{0,1\}^{\lfloor\log n\rfloor}$, a unique tile type, $t_b$, is made.
with the glue $b$ on its west side and the glue $b[1:]$ (i.e. $b$ with its left bit truncated) on its east side.
This tile type is given the color corresponding to the first bit of $b$.
Additionally, for each skeleton column tile from which a rib should grow to the east with pattern $b$, the glue $b$ will be on its east side, allowing $t_b$ to attach.
This results in the creation of a maximum of $n$ unique tile types (and there will be another $n$ for the first tiles of each westward growing rib).
Recall that the tile types for the skeleton were already accounted for and each is hard-coded so that the placement of these glues does not require any new tile types for the skeleton.

Now, the process is repeated for each binary string from length $b-1$ to $1$, with the color of each tile being set to the value of the first remaining bit.
Each iteration requires half as many tile types to be created as the previous, i.e. $2^{\lfloor\log n\rfloor-1}$, then $2^{\lfloor\log n\rfloor-2}$, $\ldots$, $2$. Intuitively, each rib position has glues that encode their bit value in the pattern and the portion of the pattern that must be extended outward from them, away from the skeleton. Therefore, for the last position on the tip of each rib, there are exactly 2 choices, white or black, and so all ribs share from a set of two tile types made specially for the ends of ribs.
For the tile types of ribs that grow to the east, the total summation is $\Sigma^{\log n-1}_{x=0}{2^{\log n-x}} = 2n-2 = O(n)$. Accounting for the additional tile types needed for westward growing ribs, the full tile complexity of the ribs is $O(n)$.

Thus, the total tile complexity for the tile types of the skeleton plus those of the ribs is $O\left(\frac{n^2}{\log n}\right) + O(n) = O\left(\frac{n^2}{ \log n}\right)$.

\textbf{Correctness of construction} The system $\calT$ designed to weakly self-assemble $P$, as discussed, has a seed of a single tile, and since all tile attachments require forming a bond with a single neighbor, the temperature of the system can be $\tau=1$. Our prior analysis shows that the tile complexity is correct at $O\left(\frac{n^2}{\log n}\right)$, and showing that $\calT$ weakly self-assembles $P$ is trivial since (1) the tiles of the skeleton are specifically hard-coded to be colored for their corresponding locations in $P$, and (2) for each possible pattern corresponding to a rib there is a hard-coded set of rib tiles that match that pattern and grow from the skeleton into those locations. Thus, $P$ is formed and Lemma \ref{lem:random-upper} is proved, and with both Lemmas \ref{lem:random-lower} and \ref{lem:random-upper}, Theorem \ref{thm:tight-squares} is proved. (Example aTAM systems for this construction, as well as software capable of generating other systems for patterns derived from image files, and for simulating them, can be found online \cite{PatternAssemblySoftware}.)

\end{proof}

\section{Repeated Patterns}\label{sec:repeated}

\begin{figure}
    \centering
    \includegraphics[width=2.3in]{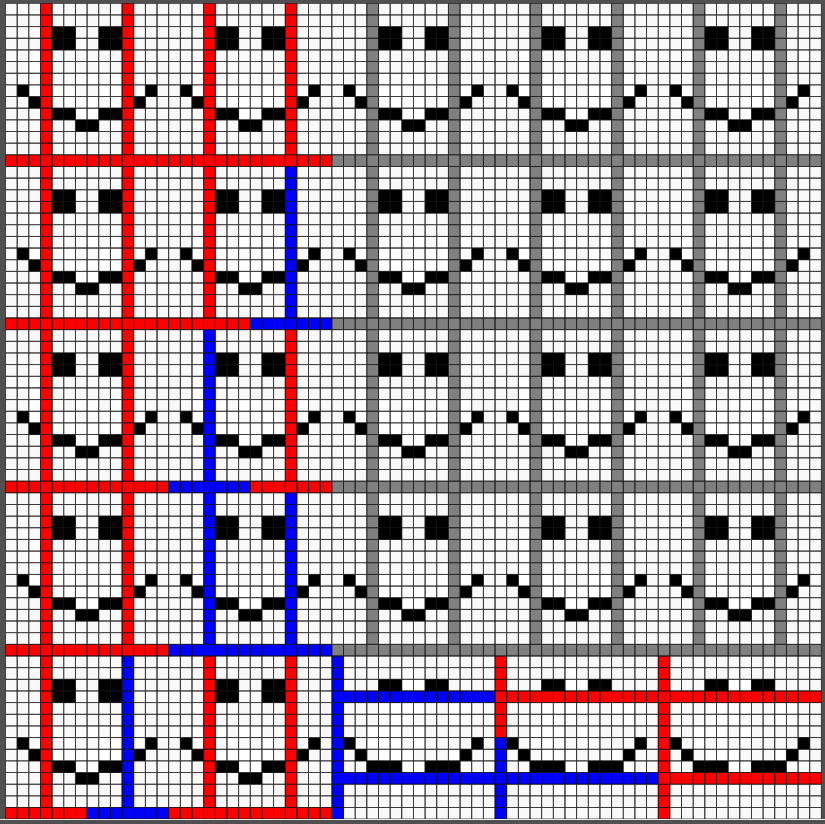}
    \caption{An example of an assembly that repeats a pattern $m = 5$ times horizontally and vertically. Each spine is colored solely for clarity of presentation, and in the actual construction, the colors of the tiles on the spines would match the pixels of the pattern. Red spines represent a 1, and blue spines represent a 0. The spines count upwards until the counter is finished}
    \label{fig:counter}
\vspace{-10pt}
\end{figure}

In this section, we discuss patterns consisting of repeated, arbitrary square sub-patterns, and that efficient systems exist that weakly self-assemble them.

\begin{definition}[Grid Repeat Pattern Class]
    Given $n,m \in \mathbb{Z}^+$ and $P \in \var{SQPATS}_{2,n}$, define $\var{GridRepeat}(P,m) \rightarrow P'$ such that $P' \in \var{SQPATS}_{2,nm}$ is an $nm \times nm$ square consisting of an $m \times m$ square composed of an $n \times n$ grid of copies of the pattern $P$.
\end{definition}

\begin{theorem}[Repeated Pattern Tile Complexity]
    For all $n, m \in \mathbb{N}$, and $P \in \var{SQPATS}_{2,n}$ there exists an aTAM system $\calT = (T,\sigma,2)$ such that $|\sigma|=1$, $|T| = O(\frac{n^2}{\log{n}} + \log{mn})$ and $\calT$ weakly self-assembles $\var{GridRepeat}(P, m)$.
    
    \label{thm:repeated-pattern}
\end{theorem}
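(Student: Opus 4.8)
The plan is to reuse the skeleton-and-ribs construction of Lemma~\ref{lem:random-upper} inside every cell of the $m \times m$ grid — exploiting the fact that all $m^2$ sub-squares of $\var{GridRepeat}(P,m)$ display the \emph{same} pattern $P$ — and to bolt on a pair of binary counters (one per dimension) whose only job is to make the construction repeat exactly $m$ times before halting.

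Recall that for the single $n\times n$ pattern $P$, Lemma~\ref{lem:random-upper} uses $O(n^2/\log n)$ tile types: an $O(n)$-tile hard-coded bottom row, $O(n/\log n)$ distinct vertical ``spine'' columns of $O(n)$ tiles each (carrying the colors of one column of $P$), and a shared pool of $O(n)$ rib tile types. None of these types encodes an absolute position, so the identical spine and rib types can be used in every sub-cell, and the pattern machinery still costs only $O(n^2/\log n)$ tile types, independent of $m$. Two things change. First, the bottom row of $\var{GridRepeat}(P,m)$ has length $nm$ and is $m$ concatenated copies of $P$'s bottom row, so instead of hard-coding it in full I would hard-code one reusable width-$n$ segment ($O(n)$ types) and thread a binary counter along the bottom that is initialized to $0$ at the single seed and incremented each time a segment completes; bottom-row growth terminates once the count reaches $m$, making the assembly exactly $nm$ wide. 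Realizing this as nested counters — an outer counter over the $m$ copies, an inner one over the $n$ columns of each copy — uses $O(\log m + \log n) = O(\log mn)$ tile types. Second, each spine of $\var{GridRepeat}(P,m)$ is $nm$ tall and is $m$ stacked copies of the corresponding $n$-tall spine of $P$; I would build the square band by band, where a width-$O(\log mn)$ vertical control counter along the west edge counts to $nm$ (again an outer $m$-counter nesting an inner $n$-counter) and, at each multiple of $n$, emits an eastward signal that launches the next band's reusable bottom-row segments, which re-sprout the reusable spine segments and ribs of that band. The counters use cooperative (strength-$2$) attachments, which is why $\tau=2$; every other attachment is single-glue exactly as in Lemma~\ref{lem:random-upper}. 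Any counter or control tile that lands on a pattern location is simply given the color dictated by $\var{GridRepeat}(P,m)$ there (a constant-factor blow-up), which is harmless since weak self-assembly constrains only colors, not which tile types realize them.

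Adding up, the tile complexity is $O(n^2/\log n)$ for the (reused) spines and ribs, plus $O(\log mn)$ for the horizontal counter, plus $O(\log mn)$ for the vertical control counter, i.e. $O\!\left(\frac{n^2}{\log n} + \log mn\right)$, and correctness is essentially immediate: every spine and rib tile is hard-coded with the color of its corresponding pixel of $P$, each band and each copy is a translate of the Lemma~\ref{lem:random-upper} construction, and the counters only dictate where growth stops. I expect the main work to lie in engineering the two ``repeat-then-halt'' mechanisms so that no tile type scales with $m$: in particular the right-edge transition that simultaneously resets the bottom row to its ``start of $P$'' state while incrementing the horizontal counter, the handoff at each band boundary from the vertical control counter to a fresh bottom row (whose tiles grow rightward over the tops of the previous band), and checking that the two counters' terminations together yield precisely an $nm \times nm$ square with no stray tiles. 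The boundary cases already present in Lemma~\ref{lem:random-upper} — the short ``remainder'' rib-pair at the east edge of a copy and very small $n$ — are handled exactly as there.
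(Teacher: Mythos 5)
Your high-level strategy matches the paper's: exploit the fact that all $m^2$ cells carry the same pattern $P$ so that the $O(n^2/\log n)$ skeleton-and-rib tile types of Lemma~\ref{lem:random-upper} can be reused in every cell, and add $O(\log mn)$ counter machinery to make the repetition stop after exactly $m$ copies in each dimension. The paper's proof likewise modifies the Lemma~\ref{lem:random-upper} construction, raises the temperature to $2$, seeds a hard-coded structure $t_s$ of $O(\log mn)$ tile types that initializes a counter to $m$, and grows an L of copies (bottom row of copies plus west column of copies) that is then filled in.

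The genuine gap is in how your counters coexist with the pattern. You place standalone counter strips along the bottom row and the west edge and assert that any counter tile landing on a pattern location "is simply given the color dictated by $\var{GridRepeat}(P,m)$ there (a constant-factor blow-up)." That step does not go through as stated: a counter tile type's identity is determined by the bit value and carry state it represents, not by its position inside the $n\times n$ sub-pattern, so the same counter tile type will be forced to occupy two locations $(x_1,y_1)$ and $(x_2,y_2)$ with $P(x_1\bmod n,\,y_1\bmod n)\neq P(x_2\bmod n,\,y_2\bmod n)$. Splitting each counter type into a black and a white copy does not help, because the two copies have identical glues and nothing selects the correct one; and hard-coding the strip per row of $P$ costs $\Theta(n\log(mn))$ types, which exceeds the claimed bound when $\log m \gg n^2/\log n$. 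The paper resolves exactly this by inverting the embedding: it takes a \emph{constant-sized} counter tile set $T_{COUNT}$ and, for each counter tile type, makes a copy of each already-hard-coded (hence already correctly colored) spine and rib tile with the counter glues appended, so the counter value is carried one symbol per spine across the $O(\log m)$ spines of $t_s$ while every tile's color remains dictated by its hard-coded position in $P$. This "embed the counter into the skeleton and ribs" step is the key idea your proposal is missing; patching your bottom-row and west-edge counters within the stated tile budget essentially forces you to adopt it.
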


The proof of Theorem \ref{thm:repeated-pattern} can be found in Section \ref{sec:repeated-appendix}. It makes use of an extension of the construction for the proof of Lemma \ref{lem:random-upper} and embeds a counter into the skeleton and ribs so that the copies of the sub-pattern are correctly counted.

\section{Barely-3DaTAM patterns}\label{sec:multilayer}





In this section, we show that there exist patterns, both finite and infinite, that can be weakly self-assembled using exponentially fewer tile types by barely-3DaTAM systems than by any regular, 2D aTAM systems. (We also note that the exponential separation can be increased arbitrarily.)

\begin{theorem}\label{thm:2layers-square}
    For all $n \in \mathbb{Z}^+$, for some $m \in \mathbb{Z}^+$ there exists a 7-colored $m \times m$ pattern, $p_n$, such that (1) no aTAM system $\calT_{\le n} = (T,\sigma,\tau)$ exists where $|T| \le n$, $|\sigma| = 1$, and $\calT_{\le n}$ weakly self-assembles $p_n$, but (2) a barely-3DaTAM system $\calT_{p_n} = (T_{p_n},\sigma_{p_n},2)$ exists where $ |T_{p_n}| = O(\log{n}/\log{\log{n}})$, $|\sigma_{p_n}| = 1$, and $\calT_{p_n}$ weakly self-assembles $p_n$.
\end{theorem}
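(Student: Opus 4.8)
The plan is to use a diagonalization argument, realized entirely within the aTAM's geometry, so that a single barely-3DaTAM system with few tile types produces a pattern that provably defeats every small 2D system. First I would fix $n$ and enumerate all singly-seeded aTAM systems $\calT = (T,\sigma,\tau)$ with $|T| \le n$ up to the natural symmetries (relabeling glues, choice of seed tile). Since each such system is describable by $O(n \log n)$ bits, there are at most $2^{O(n\log n)}$ of them, so they can be listed as $\calT^{(1)}, \calT^{(2)}, \ldots, \calT^{(K)}$ with $K = 2^{O(n \log n)}$. For each $\calT^{(k)}$ I would simulate it for a bounded number of steps — enough to determine the color that $\calT^{(k)}$ places (if any) at one designated ``test pixel'' of a small patch, or to detect that $\calT^{(k)}$ never tiles that location, or is nondeterministic there. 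The key combinatorial point is that with $7$ colors and a patch of $\lceil \log_7 K \rceil = O(n)$ cells, one can encode, as a $7$-ary string, a choice of color at each of the $K$ test pixels that differs from $\calT^{(k)}$'s behavior at test pixel $k$ for every $k$; this is the diagonal pattern $p_n$ (padded out to an $m \times m$ square, $m = O(n)$, with a fixed background color). By construction no $\calT_{\le n}$ can weakly self-assemble $p_n$, giving part (1).

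For part (2), I would build the barely-3DaTAM system $\calT_{p_n}$ so that, on the $z=0$ plane, it (a) hard-codes a description of ``the enumeration of all $\le n$-tile-type systems'' using a base-$b$ counter with $b \approx \log n / \log\log n$ digits-per-tile so that the counter's tile complexity is $O(\log n / \log\log n)$ rather than $O(\log n)$ — this is the standard trick of packing $\Theta(\log\log n)$ bits into each tile of a counter to shave the complexity, and it is what forces the $7$ colors and the particular bound — and (b) uses a fixed, constant-size universal-simulation gadget to run each enumerated system for the prescribed bounded number of steps within a bounded region of the $z=0$ plane, sequentially reusing the same region (cleared and rewound via the counter) for each $k$. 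As each simulation finishes, the gadget reads off the relevant color and writes the corresponding digit of the diagonal string into a growing record on the $z=0$ plane. Once all $K$ systems have been processed, the completed diagonal string is used to ``print'' $p_n$ on the $z=1$ plane directly above: a constant-size set of lift-and-copy tiles carries each recorded color up one layer and lays out the $m\times m$ colored square. Since every component other than the base-$b$ counter is constant-size, the total tile complexity is dominated by the counter, i.e. $O(\log n / \log\log n)$, and the temperature-$2$, two-plane geometry is exactly what lets the simulation machinery live ``underneath'' the printed pattern without corrupting it.

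The main obstacle — and the part that needs the most care — is the sequential simulation-and-rewind mechanism on a single plane: I must argue that a constant-size tile set can, deterministically and at temperature $2$, (i) lay out a copy of the $k$-th enumerated system from the counter's current value, (ii) simulate it for the bounded step budget inside a fixed-size window, (iii) extract the test-pixel color (handling the cases where the simulation leaves the location empty or is nondeterministic there, both of which I treat as ``not color $c$'' for the diagonal choice), (iv) append that result to the record, and (v) increment the counter and erase/overwrite the simulation window so the next iteration can proceed — all while the assembly remains directed (or at least all terminal assemblies carry the same pattern). Making step (v) work in the aTAM, where tiles cannot be removed, requires the usual device of never literally erasing but instead growing a fresh, taller or laterally-shifted ``track'' for each iteration, so the $z=0$ footprint is $O(K \cdot \mathrm{poly}) $ in area but still only $O(\log n/\log\log n)$ in tile types; I would lean on the construction from the proof of Lemma \ref{lem:random-upper} and on known universal-simulation tile sets to keep this modular. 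A secondary obstacle is bounding the step budget: I need a computable bound $B(n)$ such that if $\calT^{(k)}$ ever places a tile at the test pixel it does so within $B(n)$ steps along *some* assembly sequence — since the test patch is fixed-size and $|T|\le n$, taking $B(n)$ large enough (e.g., exponential in $n$) suffices, and the counter already has the range to drive that many simulation steps. Finally, extending $p_n$ to tile the whole plane while preserving the tile-complexity separation is handled by wrapping the finite construction in a periodic frame that repeats the $m\times m$ block, noting that any 2D system tiling the infinite pattern restricts to one weakly self-assembling a translate of $p_n$, contradicting part (1) again.
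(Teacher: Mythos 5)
Your high-level plan matches the paper's strategy: diagonalize against all $\le n$-tile-type systems by simulating them in $z=0$, record a "flip bit" per system, and print the resulting pattern in $z=1$, with the $O(\log n/\log\log n)$ cost coming from an optimal-base encoding of $n$ (the paper cites the technique of Adleman et al.\ for this). However, there are two concrete gaps that would sink the argument as written.

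\paragraph{Translation is not handled.} You place a designated ``test pixel'' for system $\calT^{(k)}$ at a fixed absolute location $\ell_k$ and ensure $p_n$ disagrees with $\calT^{(k)}$'s color there. But the definition of weak self-assembly compares the terminal assembly of $\calT^{(k)}$ against $p_n$ at specific coordinates, and a system's seed (and hence its entire footprint) can sit at an arbitrary offset relative to $p_n$. Your enumeration by $O(n\log n)$-bit descriptions normalizes away the seed's position, so you are only diagonalizing against one translate of each system; a translate you didn't simulate could still hit all your test pixels with the right colors. The paper avoids this by making $p_n$ a \emph{repeating grid} of identical $c\times c$ cells: it simulates each system for $4\,\var{SF}(n)^2$ steps so that any nontrivial assembly must span a full cell regardless of offset, and then flips a color at a fixed index \emph{within every cell}. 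Because the disagreement is replicated in every cell, no translate of $p_n$ can be matched. Your proposal needs an analogous periodicity (or an explicit enumeration over all relevant offsets) or part (1) does not follow.

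\paragraph{The size accounting is off by an exponential.} You write that a patch of $\lceil\log_7 K\rceil = O(n)$ cells suffices to encode ``a choice of color at each of the $K$ test pixels.'' This confuses the number of digits needed to \emph{name} one of $K$ things with the number of \emph{locations} needed to host $K$ test pixels. With $K = 2^{\Theta(n\log n)}$ systems in the enumeration you need on the order of $K$ distinct index positions (one per system), so a cell must be roughly $K \times K$, not $O(\sqrt{n}) \times O(\sqrt{n})$. Correspondingly, $m$ cannot be $O(n)$: in the paper $c = \var{SF}(n) = n^{\Theta(n)}$ and $m = O(n^{21n})$, driven both by the cell width and by the area the zig-zag Turing-machine simulation occupies in $z=0$. (You actually observe later that the $z=0$ footprint is $O(K\cdot\mathrm{poly})$, which contradicts $m=O(n)$; at minimum the two claims need to be reconciled, and if you try to shrink $p_n$ to a corner of $z=1$ you must also argue that all other tiles, in both planes, avoid the seven pattern colors.)

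One smaller point: ``relabeling glues'' and ``choice of seed tile'' do not by themselves make the set of $\le n$-tile-type aTAM systems finite, since glue strengths and $\tau$ can be scaled arbitrarily. The paper handles this cleanly by enumerating \emph{strength-free} systems (Chen--Doty style) and invoking a theorem that each strength-free system has a canonical equivalent aTAM system when one exists; you should either cite a similar normalization or argue explicitly that your $O(n\log n)$-bit encoding already picks one representative per equivalence class.
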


\begin{figure}
\vspace{-10pt}
    \centering
    \begin{subfigure}[T]{0.45\textwidth}
        \includegraphics[width=\textwidth]{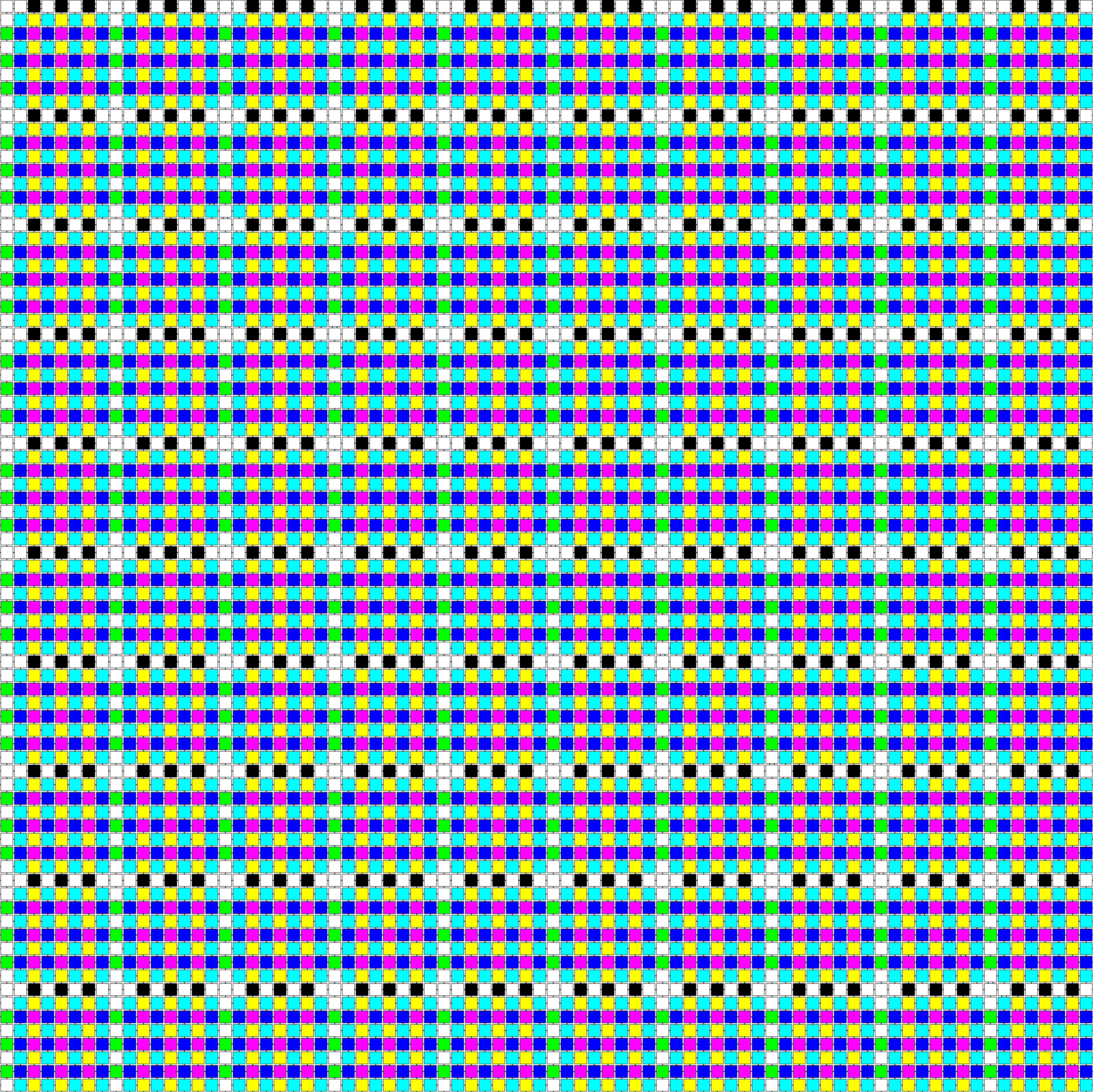}
        \caption{Example grid pattern for bit sequence $11010101$.}
        \label{fig:grid1}
    \end{subfigure}
    \hfill
    \begin{subfigure}[T]{0.45\textwidth}
        \includegraphics[width=\textwidth]{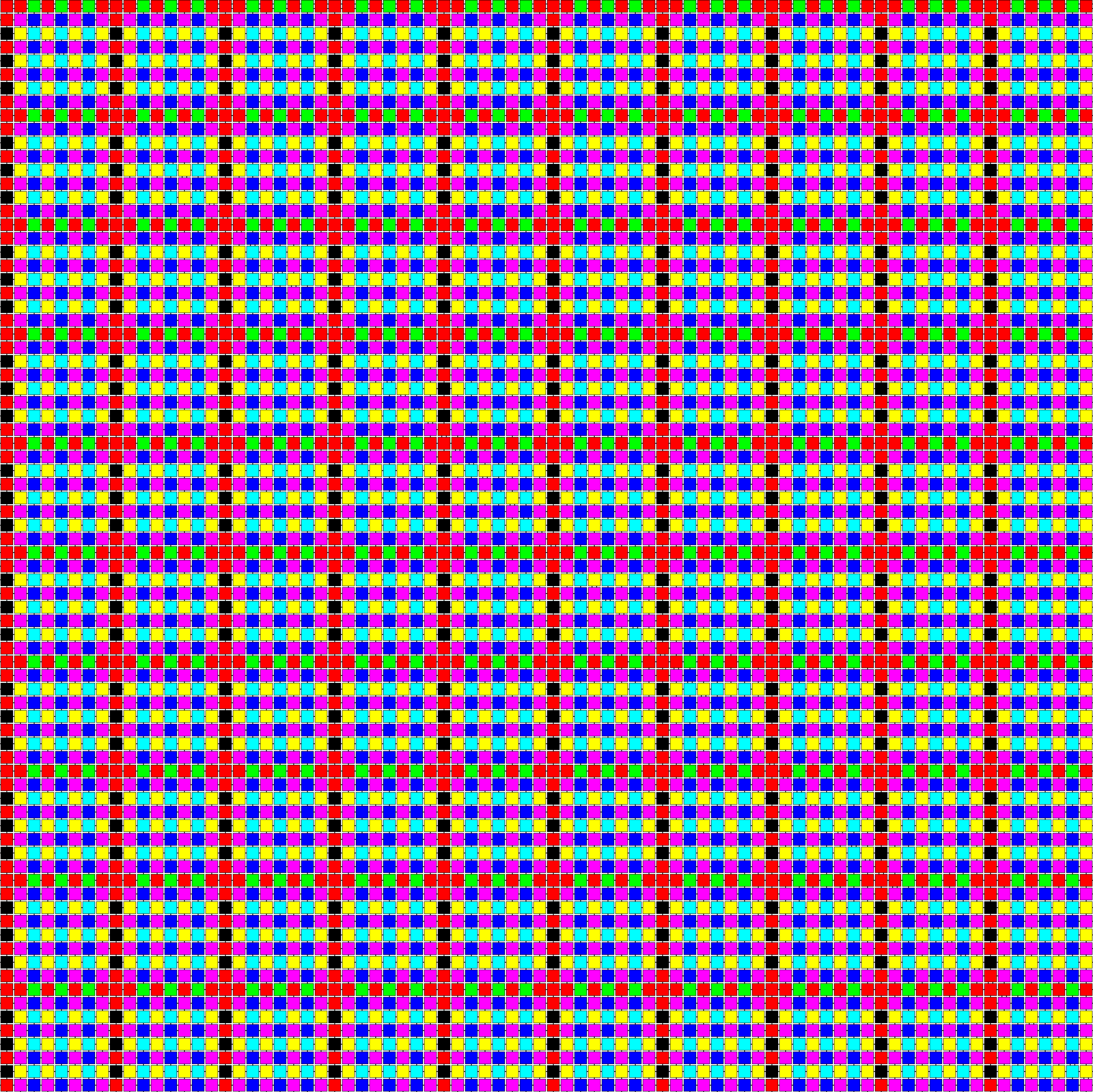}
        \caption{Example grid pattern for bit sequence $00101010$.}
        \label{fig:grid2}
    \end{subfigure}
    \caption{Example $p_n$ patterns created by the construction in the proof of Theorem \ref{thm:2layers-square}. The (repeatedly copied) binary sequence derived from the results of the simulations of aTAM systems starts at the top, with the two colors of that row, and all subsequent boundary rows, being determined by the first bit of that pattern. During the downward growth from that row, during which the full $m \times m$ square is formed, the repeating grid formed by the copies of that pattern is copied both downward and to both sides. The boundary columns also have two colors determined by the first bit of the sequence (one of them the same as in the boundary rows) for a total of 3 boundary colors. The interiors always use the same 4 colors.}
    \label{fig:grids}
\vspace{-25pt}
\end{figure}

\vspace{-10pt}
\paragraph{Proof sketch} (Here we give a sketch of the full proof of Theorem \ref{thm:2layers-square}. The full proof can be found in Section \ref{sec:multilayer-appendix}.) We prove Theorem \ref{thm:2layers-square} by giving the details of such a pattern $p_n$ that consists of a repeating ``grid'' of 7-colored lines on the surface of an $m \times m$ square, for $m \in \mathbb{Z}^+$ to be defined, and a barely 3D aTAM system $\mathcal{T}_{p_n} = (T_{p_n}, \sigma, 2)$ that weakly self-assembles $p_n$, with the tiles in $z=1$ colored in the pattern of $p_n$, and $|T_{p_n}| = O(\log{n}/\log{\log{n}})$ tile types.
We show that every 2D aTAM system $\mathcal{T}_{\le n}$ with $\le n$ tile types fails to weakly self-assemble $p_n$ by constructing $p_n$ so that it differs, in at least one location in each ``cell'' of a repeating grid of cells, from an assembly producible in each $\calT_{\le n}$. Two different examples of such patterns can be seen in Figure \ref{fig:grids}.
Each pattern $p_n$ consists of an $m \times m$ square that is covered in a repeating grid of square ``cells.'' Each cell is a $c \times c$ square (for $c \in \mathbb{Z}^+$, to be defined) where the north row and west column of each is considered ``boundary,'' and the rest of each cell is considered ``interior.''
The easternmost column and the southernmost row of cells may consist of truncated cells depending on the values of $m$ and $c$ (i.e., if $m \mod c \ne 0$).
Since each cell contributes a north and west boundary, each cell interior is completely surrounded by boundaries (except, perhaps, the easternmost column and southernmost row).
Depending on a bit sequence specific to each $p_n$ (to be discussed), the set of colors of the boundaries will be either $\{\var{White},\var{Green},\var{Black}\}$ or $\{\var{Red},\var{Green},\var{Black}\}$. The set of colors of the interiors will be $\{\var{Aqua},\var{Blue},\var{Yellow} ,\var{Fuchsia}\}$. Thus, each pattern $p_n$ will be composed of 7 colors.

The bit sequence that determines the colors used by the boundaries, and the ordering of the colors on the boundaries and in the interiors, is determined via simulations of a series of aTAM systems.
Intuitively, our proof utilizes a construction that performs a diagonalization against all possible aTAM systems with $\le n$ tile types by simulating each for a bounded number of steps, and for each keeping track of the color of tile it places in a location specific to the index of that system so that it can ultimately generate the colored pattern $p_n$ that differs in at least one location from every simulated system.

\begin{wrapfigure}{r}{0.5\textwidth}
\vspace{-15pt}
    \centering
    \includegraphics[width=2.3in]{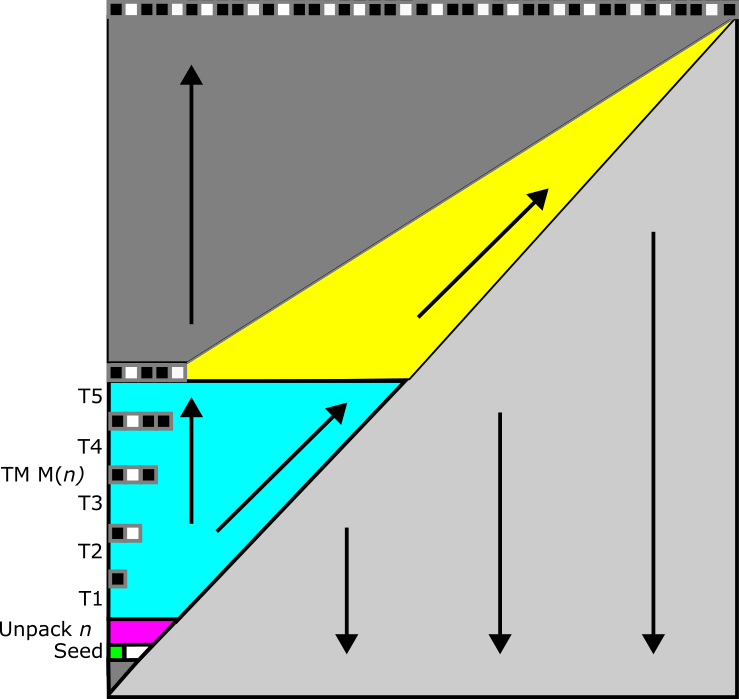}
    \caption{Schematic overview of the portion of the construction for the proof of Theorem \ref{thm:2layers-square} that grows in plane $z = 0$. Modules are not shown to scale. (Green) The seed tile, (Fuchsia) the base conversion module that unpacks the binary representation of $n$, (Aqua) the module that simulates the Turing machine $M$ on input $n$, which itself simulates each aTAM system with $\le n$ tile types in sequence and saves a result bit (\var{Black} or \var{White}) for each, (Yellow and Grey) the pattern of result bits is copied repetitively to the right until it covers the entire top row. (Light Grey) A ``filler'' tile causes the assembly to form a complete square.}
    \label{fig:wedge}
\vspace{-15pt}
\end{wrapfigure}

The dimensions of each $c \times c$ cell are $c = \var{SF}(n)$, where $\var{SF}(n)$ is a function that takes a number of tile types and returns an upper bound on the number of all possible singly-seeded aTAM systems with $\le n$ tile types and $\le 8$ colors. (Note that $\var{SF}(n)$ is actually greater than the number of such systems, and details of $\var{SF}(n)$ can be found in Section \ref{sec:multilayer-appendix}.) The colors of the rows and columns encode the bit sequence generated by the simulations, with the same bit sequence encoded in both the rows and the columns via an assignment of colors. There is a unique color assigned for each intersection of two bits (i.e. 00, 01, 10, and 11), with 4 colors reserved for boundaries of grid cells and 4 separate colors reserved for the interior locations of the grid cells. Therefore, the colored pattern of every cell differs from the assemblies produced by all aTAM systems with $\le n$ tile types. It forms on the top layer of a two-layered $m \times m$ square where $m =  O(n^{21n})$, and the barely-3DaTAM system that forms it uses $O(\log{n}/\log{\log{n}})$ tiles since the tiles types for all modules are constant except those that encode $n$ using optimal encoding \cite{AdChGoHu01}. (Note that the value of $m$ could be smaller, $O(n^{4n}n^8)$, if it wasn't desired that both planes be the same size. Additionally, by having $M$ simulate systems with larger tile sets, the value of $m$ would increase but the difference in tile complexity between the barely-3DaTAM system and the systems incapable of making its patterns could be increased beyond the current exponential bound.) Additional technical details, including pseudocode for the algorithms of the Turing machine $M$ and its simulations of all systems with $\le n$ tile types, the layout of data structures used during the simulation of a system, and time complexity analysis, can be found in Section \ref{sec:multilayer-appendix} of the technical appendix.

\vspace{-10pt}
\subsection{Extending a pattern $p_n$ to infinitely cover $\mathbb{Z}^2$}

Although it is already known that there are infinite patterns that can't weakly self-assemble from any finite-sized tile set \cite{jCCSA}, the following corollary simply shows how the previously defined patterns can be extended to infinitely cover the plane, while keeping an arbitrary spread in the tile complexity required by aTAM and barely-3DaTAM systems.

\begin{corollary}\label{cor:2layers-infinite}
    For all $n \in \mathbb{Z}^+$, there exists a 7-colored pattern, $p_{n_\infty}$, that infinitely covers the plane $\mathbb{Z}^2$ such that (1) no aTAM system $\calT_{\le n} = (T,\sigma,\tau)$ exists where $|T| \le n$, $|\sigma| = 1$, and $\calT_{\le n}$ weakly self-assembles $p_{n_\infty}$, but (2) a barely-3DaTAM system $\calT_{p_n} = (T_{p_n},\sigma_{p_n},2)$ exists where $ |T_{p_n}| = O(\log{n}/\log{\log{n}})$, $|\sigma_{p_n}| = 1$, and $\calT_{p_n}$ weakly self-assembles $p_{n_\infty}$.
\end{corollary}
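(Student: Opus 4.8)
The plan is to realize $p_{n_\infty}$ as the infinite periodic extension of the finite pattern $p_n$ from Theorem~\ref{thm:2layers-square}, and to obtain $\calT_{p_n}$ from the same barely-3DaTAM construction with its termination logic removed so that growth continues forever in every direction. Recall that in the construction of Theorem~\ref{thm:2layers-square} all of the work that depends on $n$---unpacking the binary representation of $n$, and running the Turing machine $M$ that simulates every singly-seeded aTAM system with $\le n$ tile types and $\le 8$ colors while recording one result bit per simulated system---takes place inside a bounded ``wedge'' in the plane $z = 0$ near the seed (Figure~\ref{fig:wedge}); afterwards a constant-size copy gadget lays the resulting bit sequence $b$ (of length at most $c = \var{SF}(n)$) along a row, and a constant-size propagation gadget grows the induced period-$c$ grid pattern through both planes, printing the colors of $p_n$ in $z = 1$. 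Since $p_n$ contains many full periods in each direction, it determines a unique bi-infinite period-$c$ pattern on $\mathbb{Z}^2$; we take $p_{n_\infty}$ to be that pattern (so that its restriction to $[0,m)^2$ is exactly $p_n$, truncated boundary cells and all), and it uses the same $7$ colors as $p_n$.

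For part~(2), the system $\calT_{p_n}$ agrees with the system of Theorem~\ref{thm:2layers-square} through the computation of $b$. Instead of copying $b$ only $m$ times and capping the assembly with a filler tile, the copy gadget runs without bound and a mirrored copy gadget extends the bit row in the $-x$ direction, producing a bi-infinite periodic row; from this row the propagation gadget grows the period-$c$ pattern without bound in the $+y$ and $-y$ directions, so the pattern tiles all of $\mathbb{Z}^2$ in the plane $z = 1$. Every gadget used is one of the constant-size modules of the finite construction, merely stripped of its termination check, so the tile complexity is unchanged: all modules are constant-size except the optimal binary encoding of $n$~\cite{AdChGoHu01}, giving $|T_{p_n}| = O(\log n / \log\log n)$.

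For part~(1), observe that since $p_n$ is precisely the restriction of $p_{n_\infty}$ to $[0,m)^2$, the disagreement that the proof of Theorem~\ref{thm:2layers-square} exhibits between each enumerated system with $\le n$ tile types and $p_n$ occurs at a location of $[0,m)^2$ at which $p_n$ and $p_{n_\infty}$ carry the same color; hence every such system also fails to weakly self-assemble $p_{n_\infty}$. Since $\dom(p_{n_\infty}) = \mathbb{Z}^2$, any aTAM system with $|T| \le n$ that weakly self-assembled $p_{n_\infty}$ would have to tile all of $\mathbb{Z}^2$ using only the $7$ pattern colors and is therefore captured by the enumeration performed by $M$ (exactly as in the proof of Theorem~\ref{thm:2layers-square}); so no such system exists.

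The main obstacle is the routine-but-necessary verification that this infinite version is well-behaved: that the $n$-dependent computation stays confined to a bounded wedge while the pattern layer becomes bi-infinite over all of $\mathbb{Z}^2$, that the leftward and upward extensions are grown with the correct phase of the period-$c$ pattern relative to the computed row, that the bounded computation in $z = 0$ does not disturb the pattern printed above it in $z = 1$ (as it already does not in the finite case), and that allowing the copy and propagation gadgets to run forever introduces no new terminal assemblies, so that every terminal assembly of $\calT_{p_n}$ reproduces $p_{n_\infty}$ exactly. These are all local, constant-size gadget checks, so they leave the $O(\log n / \log\log n)$ bound untouched, and with them Corollary~\ref{cor:2layers-infinite} follows.
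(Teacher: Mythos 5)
Your proposal is correct in its essentials, but it takes a genuinely different route from the paper in two respects, both worth noting.

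First, the definition of $p_{n_\infty}$. You take $p_{n_\infty}$ to be the unique bi-infinite period-$c$ extension of the $c\times c$ grid pattern underlying $p_n$ (where $c=\var{SF}(n)$). The paper instead defines $p_{n_\infty}$ by tiling copies of the entire $m\times m$ square $p_n$ across $\mathbb{Z}^2$, and explicitly invokes the symmetry of $p_n$ across its northeast--southwest diagonal to justify that such copies may be laid side by side. Your choice is arguably cleaner: since $m$ need not be a multiple of $c$, tiling $m\times m$ blocks requires some care about cell alignment at block seams, whereas your period-$c$ extension makes the cell structure globally uniform, which is exactly what the diagonalization witness of Theorem~\ref{thm:2layers-square} needs (the mismatch at index $i$ east/south of a boundary-colored tile is an intrinsic, translation-independent property of each cell).

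Second, the assembly strategy. The paper grows an infinite cross of $m\times m$ squares branching off the corners $(0,m-1)$ and $(m-1,0)$ of the original square, then fills the four quadrants by propagating row and column bit values along the arms of the cross. You instead propose extending the bit row in $z=0$ to a bi-infinite row (adding a mirrored copy gadget to grow leftward past the wedge) and then propagating the period-$c$ pattern in $z=1$ both north and south without termination. Both use only $O(1)$ additional tile types over the base construction, so the $O(\log n/\log\log n)$ bound is preserved either way. One small gap in your sketch is that "removing the termination logic" of the copy gadget is not quite a drop-in change: the original copying module zig-zags upward and rightward simultaneously to fill out the square, so an infinite version needs its vertical growth bounded while its horizontal growth is unbounded, plus a new left-growing mirror; these are constant-size redesigns rather than literal deletions of a halt check. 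With that caveat acknowledged, your argument for part (1) — that any $\le n$-tile system already disagrees with the common period-$c$ cell pattern and hence with $p_{n_\infty}$, and any system that halts finitely trivially fails — is sound and matches the spirit of the paper's proof.
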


To prove Corollary \ref{cor:2layers-infinite}, we extend the construction from the proof of Theorem \ref{thm:2layers-square} so that every pattern $p_n$ from the proof of Theorem \ref{thm:2layers-square} is extended to infinitely cover the $\mathbb{Z}^2$ plane, becoming $p_{n_\infty}$, by usage of ``grid-reconstruction,'' i.e., a method of copying the square grid infinitely to each side. This requires $O(1)$ unique tile types in addition to those used in the previous construction. Due to the symmetry exhibited by all $m \times m$ squares of $p_n$ patterns along their northeast $\rightarrow$ southwest diagonal, copies of the same pattern may be copied along these diagonals infinitely. Details of the construction can be found in Section \ref{sec:multilayer-infinite-appendix}.
\vspace{-20pt}

\bibliographystyle{splncs04}
\bibliography{tam,experimental_refs,slats}

\begin{thebibliography}{10}
\providecommand{\url}[1]{\texttt{#1}}
\providecommand{\urlprefix}{URL }
\providecommand{\doi}[1]{https://doi.org/#1}

\bibitem{AdChGoHu01}
Adleman, L., Cheng, Q., Goel, A., Huang, M.D.: Running time and program size for self-assembled squares. In: Proceedings of the 33rd Annual ACM Symposium on Theory of Computing. pp. 740--748. Hersonissos, Greece (2001)

\bibitem{Versus}
Cannon, S., Demaine, E.D., Demaine, M.L., Eisenstat, S., Patitz, M.J., Schweller, R.T., Summers, S.M., Winslow, A.: Two hands are better than one (up to constant factors): Self-assembly in the 2{HAM} vs. a{TAM}. In: Portier, N., Wilke, T. (eds.) STACS. LIPIcs, vol.~20, pp. 172--184 (2013)

\bibitem{chen2015program}
Chen, H.L., Doty, D., Seki, S.: Program size and temperature in self-assembly. Algorithmica  \textbf{72},  884--899 (2015)

\bibitem{CzeizlerPopaPATS}
Czeizler, E., Popa, A.: Synthesizing minimal tile sets for complex patterns in the framework of patterned dna self-assembly. In: Stefanovic, D., Turberfield, A. (eds.) DNA Computing and Molecular Programming, Lecture Notes in Computer Science, vol.~7433, pp. 58--72. Springer Berlin / Heidelberg (2012)

\bibitem{crisscrossAccel}
Doty, D., Fleming, H., Hader, D., Patitz, M.J., Vaughan, L.A.: {Accelerating Self-Assembly of Crisscross Slat Systems}. In: 29th International Conference on DNA Computing and Molecular Programming (DNA 29). Leibniz International Proceedings in Informatics (LIPIcs), vol.~276, pp. 7:1--7:23. Schloss Dagstuhl -- Leibniz-Zentrum f{\"u}r Informatik, Dagstuhl, Germany (2023)

\bibitem{PatternsArxiv}
Drake, P., Patitz, M.J., Summers, S.M., Tracy, T.: Self-assembly of patterns in the abstract tile assembly model. Tech. Rep. 2402.16284, arXiv (2024), \url{{https://arxiv.org/abs/2402.16284}}

\bibitem{PatternAssemblySoftware}
Drake, P., Patitz, M.J., Tracy, T.: Pattern self-assembly software (2024), \url{http://self-assembly.net/wiki/index.php/Pattern_Self-Assembly}

\bibitem{evans2014crystals}
Evans, C.G.: Crystals that count! {P}hysical principles and experimental investigations of {D}{N}{A} tile self-assembly. Ph.D. thesis, California Institute of Technology (2014)

\bibitem{DDDIU}
Hader, D., Koch, A., Patitz, M.J., Sharp, M.: The impacts of dimensionality, diffusion, and directedness on intrinsic universality in the abstract tile assembly model. In: Chawla, S. (ed.) Proceedings of the 2020 {ACM-SIAM} Symposium on Discrete Algorithms, {SODA} 2020, Salt Lake City, UT, USA, January 5-8, 2020. pp. 2607--2624. {SIAM} (2020)

\bibitem{DirectedNotIU}
Hendricks, J., Patitz, M.J., Rogers, T.A.: Universal simulation of directed systems in the abstract tile assembly model requires undirectedness. In: Proceedings of the 57th Annual IEEE Symposium on Foundations of Computer Science (FOCS 2016), New Brunswick, New Jersey, USA {\rm October 9-11, 2016}. pp. 800--809 (2016)

\bibitem{j2PATS}
Kari, L., Kopecki, S., Meunier, P., Patitz, M.J., Seki, S.: Binary pattern tile set synthesis is np-hard. Algorithmica  \textbf{78}(1),  1--46 (2017). \doi{10.1007/s00453-016-0154-7}, \url{https://doi.org/10.1007/s00453-016-0154-7}

\bibitem{jCCSA}
Lathrop, J.I., Lutz, J.H., Patitz, M.J., Summers, S.M.: Computability and complexity in self-assembly. Theory Comput. Syst.  \textbf{48}(3),  617--647 (2011)

\bibitem{jSSADST}
Lathrop, J.I., Lutz, J.H., Summers, S.M.: Strict self-assembly of discrete {S}ierpinski triangles. Theoretical Computer Science  \textbf{410},  384--405 (2009)

\bibitem{LempCzeilOrp11}
Lempi\"{a}inen, T., Czeizler, E., Orponen, P.: Synthesizing small and reliable tile sets for patterned dna self-assembly. In: Proceedings of the 17th international conference on DNA computing and molecular programming. pp. 145--159. DNA'11, Springer-Verlag, Berlin, Heidelberg (2011), \url{http://dl.acm.org/citation.cfm?id=2042033.2042048}

\bibitem{MaLombardi08}
Ma, X., Lombardi, F.: Synthesis of tile sets for dna self-assembly. IEEE Trans. on CAD of Integrated Circuits and Systems  \textbf{27}(5),  963--967 (2008)

\bibitem{jSADS}
Patitz, M.J., Summers, S.M.: Self-assembly of decidable sets. Natural Computing  \textbf{10}(2),  853--877 (2011)

\bibitem{RothOrigami}
Rothemund, P.W.K.: {Folding DNA to create nanoscale shapes and patterns}. Nature  \textbf{440}(7082),  297--302 (March 2006). \doi{10.1038/nature04586}, \url{http://dx.doi.org/10.1038/nature04586}

\bibitem{RothTriangles}
Rothemund, P.W.K., Papadakis, N., Winfree, E.: Algorithmic self-assembly of {D}{N}{A} {S}ierpinski triangles. PLoS Biol  \textbf{2}(12), ~e424 (12 2004)

\bibitem{RotWin00}
Rothemund, P.W.K., Winfree, E.: The program-size complexity of self-assembled squares (extended abstract). In: STOC '00: Proceedings of the thirty-second annual ACM Symposium on Theory of Computing. pp. 459--468. ACM, Portland, Oregon, United States (2000)

\bibitem{SolWin07}
Soloveichik, D., Winfree, E.: Complexity of self-assembled shapes. SIAM Journal on Computing  \textbf{36}(6),  1544--1569 (2007)

\bibitem{MonaLisa}
Tikhomirov, G., Petersen, P., Qian, L.: Fractal assembly of micrometre-scale {DNA} origami arrays with arbitrary patterns. Nature  \textbf{552}(7683),  67--71 (2017)

\bibitem{Winf98}
Winfree, E.: Algorithmic Self-Assembly of {D}{N}{A}. Ph.D. thesis, California Institute of Technology (June 1998)

\bibitem{Shih-OrigamiSlats}
Wintersinger, C.M., Minev, D., Ershova, A., Sasaki, H.M., Gowri, G., Berengut, J.F., Corea-Dilbert, F.E., Yin, P., Shih, W.M.: Multi-micron crisscross structures grown from dna-origami slats. Nature Nanotechnology pp.~1--9 (2022)

\bibitem{woods2019diverse}
Woods, D., Doty, D., Myhrvold, C., Hui, J., Zhou, F., Yin, P., Winfree, E.: Diverse and robust molecular algorithms using reprogrammable dna self-assembly. Nature  \textbf{567}(7748),  366--372 (2019)

\end{thebibliography}

\newpage

\section{Technical Details of the Proofs of Section \ref{sec:simple}}\label{sec:simple-appendix}

In this section we include technical details of the constructions used in the proofs of Section \ref{sec:simple}. 

\subsection{Proof of Theorem \ref{thm:multi-pixel}}\label{sec:multi-pixel-appendix}

\begin{proof}
    We present a construction that, given a square dimension $n$ and set of valid locations $L$, provides a TAS $\calT$ with a tile complexity of $O(|L|\log n)$ that weakly self-assembles $\var{MultiPixel}(n, L)$. Figure \ref{fig:multi-pixel} shows a high-level depiction of how such an assembly is built. 

    We first inspect $L$ and determine a path of horizontal and vertical segments such that the path visits each pixel in $L$. The path can not intersect itself, but it can branch. This can be done for any set of points contained in the $n \times n$ square as long as they meet the criteria of all being separated by a distance of at least $\lceil \log n \rceil$. The path must also touch each side of the square. Every point where the path changes direction, or there is a black pixel, is called a \emph{node}. For each node, a counter box (like that in the construction for Theorem \ref{thm:single-pixel}) grows.

    The counter box for the node at the beginning of the path contains the seed tile (i.e., the seed tile is one of the tiles of that counter box, and intiaties its growth). Each segment of the path has a unique set of tiles that build a binary counter that builds a rectangle the length of that segment. For every node on the path, a unique counter box is built so that its sides encode the counts to the next nodes in each direction. The counter boxes of the nodes containing black pixels will each have a black tile for the corresponding location. The counter boxes will never overlap since the points are all $\ge \lceil \log n \rceil$ distance apart. 

    White filler tiles fill the insides of the counter boxes and in the locations bounded on two (or more) sides by the binary counters forming the segments of the path. Since the path touches all boundaries of the $n \times n$ square, the entire square is filled. Since each location in $L$ gets a black tile in the correct location of the corresponding counter box, and the other tiles are white, the assembly forms the entire pattern.

    There are $|L|$ counter boxes, each requiring $O(\log n)$ tiles types for the $4$ sides of the box (each using $\lceil \log n \rceil$ hard-coded tiles) and $O(\log n)$ tile types for the $4$ counters that grow from those sides. There is a constant number of filler tiles. Thus the tile complexity of $\calT$ is $O(|L|\log n)$ and, since $\calT$ weakly self-assembles the pattern $\var{MultiPixel}(n,L)$, Theorem \ref{thm:multi-pixel} is proved.     
\end{proof}

\subsection{Proof of Theorem \ref{thm:stripes}}\label{sec:stripes-appendix}

\begin{proof}
    We present a construction that given $n, i, j \in \mathbb{N}$, where $i,j < n$, outputs an aTAM system $\calT$ that weakly self-assembles $\var{Stripes}(n,i,j)$. Figure \ref{fig:stripes} shows a high-level depiction of the construction. 

    We start by creating four binary counters: two that count vertically upward, $v_1$ and $v_2$, and two that count horizontally to the right, $h_1$ and $h_2$. The counter $v_1$ counts upwards from $0$ to $i$ with the tiles being white except for those of the $i$th row, which are black. The counter $v_2$ grows to the left of $v_1$ and only increments after each $i$th row. At every $i$th row, $v_1$ causes $v_2$ to increment and, if $v_2$ hasn't reached its maximum, $v_2$ causes $v_1$ to reset so it can again count upward to $i$. The counter $v_2$ counts the number of stripes in the assembly, which equals $\lfloor \frac{n}{i} \rfloor$.  The other two counters, $h_1$ and $h_2$ perform the same actions but horizontally and make every $j$th column black, while the rest are white. The two pairs of columns grow two rectangles that form an 'L'. On the right side of the rightmost tiles of $v_1$ and the top side of the topmost tiles of $h_1$, the glues expose the black and white patterns determined by those counters. A constant-sized set of tiles cooperatively grow in the corner they form, extending that black and white pattern to fill out the rest of the square with that pattern.

    If $n \mod i \ne 0$, then once $v_2$ reaches its maximum value it initiates the growth of a final tile set that is a binary counter counting the remaining distance. This is analogously done by $h_2$. These extra counters use $O(\log n)$ tile types that are all white.

    If $\log j > i$ or $\log i > j$, then the counter would be longer than the first cell and flow into the first stripe. In this case, we add extra counter tiles that are painted black. We hard-code the starting row of the counters to have a black tile at the position of the stripe. This tile will have a different glue such that only the black version of the counter tiles can attach to that position of the counter row. This allows the counter to operate still while building the stripe. 

    The seed for this construction is a single tile that the starting row for the $v_1$ and $v_2$ counters, and the starting column for the $h_1$ and $h_2$ counters, attach to.

    This construction needs 4 counters with hard-coded initial rows and a constant number of tile types to perform the counting, for a total of $O(\log n)$ tile types. There is a constant number of filler and stripe tiles. The seed tile is a single tile that initiates growth of the starting values for all the counters. Thus, Theorem \ref{thm:stripes} is proved. 
\end{proof}
\section{Technical Details of the Proofs of Lemma \ref{lem:random-lower}}\label{sec:random-lower-appendix}

In this section we include the proof of Lemma \ref{lem:random-lower}.

\begin{proof}
%

%
%
Let $\mathcal{T} = (T,\sigma,\tau)$ be a TAS such that $|\sigma|=1$, $\tau \in \mathbb{Z}^+$ is a fixed constant, and $B\subseteq T$ is the subset of tile types that are black (while the others are white), $n \in \mathbb{Z}^+$, and assume $\mathcal{T}$ weakly self-assembles an arbitrary $P \in \var{SQPATS}_{2,n}$. Without loss of generality, we can assume that the strength of every glue of $T$ is bounded by $\tau$ (since any glue with strength $\ge \tau$ can be replaced by one with strength $\tau$ without changing the behavior of the system).
%
%
%
%

%
Given that $P \in \var{SQPATS}_{2,n}$, let $w = w_{n^2-1}\cdots w_0 \in \{0,1\}^{n^2}$ be a bit sequence of length $n^2$ corresponding to the white and black pixels of $P$.
%
%
%
%
It is easy to see that for every $n \in \mathbb{Z}^+$ and $w \in \{0,1\}^{n^2}$, there exists a TAS $\mathcal{T}_w = \left(T, \sigma, \tau\right)$ such that $\left| T\right| = O\left(n^2\right)$.
Going forward, let $n \in \mathbb{Z}^+$ and $w \in \{0,1\}^{n^2}$ be arbitrary and suppose $\mathcal{T}_w$ is the corresponding TAS that weakly self-assembles the pattern corresponding to $w$.

Note that $\mathcal{T}_w$ has $4\left| T_w \right|$ glues, each strength is bounded by $\tau$, which is a fixed constant, and every tile is either in $B$ or not. 
This means $\mathcal{T}$ can be represented using $O\left( \left|T_w\right| \log \left|T_w\right|\right)$ total bits.
Let $\left \langle \mathcal{T}_w \right \rangle$ be such a representation of $\mathcal{T}_w$.
Let $w \in \{0,1\}^*$, and $U$ be a fixed universal Turing machine.  
The Kolomogorov complexity of $w$ is: $K_U(w) = \min\left\{ \left| \pi \right| \mid U(\pi) = w  \right\}$.
In other words, $K_U(w)$ is the size of the smallest program that when simulated on $U$ outputs $w$.
Let $m$ be a non-negative integer and $\varepsilon > 0$ be a fixed real constant.
The number of binary strings of length less than $m - \varepsilon m$ is at most $1 + 2 + 4 + \cdots + 2^{\lfloor m- \varepsilon m \rfloor - 1} = 2^{\lfloor m -  \varepsilon m \rfloor}-1 < 2^{\lfloor m-\varepsilon n\rfloor} < 2^{m - \varepsilon m + 1}$.
Define $A_{m,\varepsilon} = \left\{ w \in \{0,1\}^m \mid K_U(w) \geq (1-\varepsilon)m \right\}$.
Note that 
$$
\frac{\left| A_{m,\varepsilon} \right|}{2^m} \geq \frac{2^m - 2^{m - \varepsilon m + 1}}{2^m} = 1 - \frac{2^{m - \varepsilon m + 1}}{2^m}= 1 -  \frac{1}{2^{\varepsilon m - 1}}.
$$
Thus, we have
$$
\lim_{m \rightarrow \infty}{\frac{\left| A_{m,\varepsilon} \right|}{2^m}} = 1,
$$
which means that if $\varepsilon > 0$ is fixed, then for almost all strings, $w \in \{0,1\}^m$, $(1-\varepsilon)n < K_U(w)$.
There exists a fixed program $\pi_{SA}$ that takes as input $\langle \mathcal{T}_w \rangle$, simulates it, and outputs the string $w$ that corresponds to the pattern $P_w$ that $\mathcal{T}_w$ self-assembles.
Then, for almost all strings $w \in \{0,1\}^{n^2}$, $$
(1-\varepsilon)|w| < K_U(w) < C_1\left( \left| \pi_{SA} \right| + \left|T_w\right| \log \left|T_w\right|\right) < C_2 \left| T_w \right| \log |w|.
$$
It follows that $\left| T_w \right| = \Omega\left( \frac{|w|}{\log |w|}\right) = \Omega\left( \frac{n^2}{\log n^2} \right) = \Omega\left( \frac{n^2}{\log n}\right)$.

Since $n$ and $w$ were arbitrary, Lemma \ref{lem:random-lower} follows.

\end{proof}

\section{Technical Details of the Construction for Theorem \ref{thm:repeated-pattern}}\label{sec:repeated-appendix}

In this section we provide the details of the proof of Theorem \ref{thm:repeated-pattern}.

\begin{proof}
    We will prove Theorem \ref{thm:repeated-pattern} by presenting a construction for $\var{GridRepeat}$ and showing that the produced systems have $O(\frac{n^2}{\log{n}} + \log{nm})$ tiles. 

    Let $\calT_{P}$ be a TAS that weakly self-assembles $P$ using the skeleton construction from \ref{thm:tight-squares}. We will use this assembly as a sub-component of the larger assembly that builds $\var{GridRepeat}(P, m)$. We divide the skeleton into $O(\frac{n}{\log n})$ parts called spines. Each spine is divided into two parts: the shaft and the base. The shaft is the vertical bar that extends northwards through the assembly. The ribs attach to the shaft on the east and west sides. We call the length of the ribs that connect to the east and west $r_e$ and $r_w$, respectively. The base of the spine is a subsection of the south row of the skeleton extending as far as the ribs that connect to the spine. It is centered at the bottom of the shaft and has length $r_e + r_w + 1$. The skeleton construction will also be modified to use temp 2 and have the bottom row extend the length of the full square. This does not change the construction's tile complexity.  

    Now, we present the construction. Given a pattern $P$ and an integer $m$, a TAS $\calT_R$ is built.  Let $T_{COUNT}$ be a constant set of counter tile types. For each tile type in $T_{COUNT}$, we create a copy of each spine from $\calT_P$. The new spine will have the glues from the counter tile appended to the north of the shaft and the south, east, and west ends of the base. A copy of the rib tiles will also be made with the counter tile's north glue appended to their glue. These extra rib tiles are necessary to propagate the top glue of the spine to the left and right of the top of the shaft (see Figure \ref{fig:counter-binding} to see how new spines attach).
    This allows for the entire spine to be constructed and other spines to grow off of it according to the growth of the counter tiles. A new spine might start growing from the north of the previous shaft, or from a the east or west of spines that have already attached. We repeat this process again to create a version of $\calT_P$ where the structure is rotated 90 degrees so it grows to the east, but the pattern is still facing upwards. This will grow to the east end of the square. 

    Then, another copy of the tile types from $\calT_P$ is added to fill the rest of the square with the pattern. At this point, there are a constant number of modified copies of the tile types of $\calT_P$ in the new tile set. Since Lemma \ref{lem:random-upper} showed the tile complexity of creating an arbitrary square pattern such as $\calT_P$ to be $O(\frac{n^2}{\log n})$, the overall tile complexity at this point therefore remains $O(\frac{n^2}{\log n})$.

    The seed of $\calT_R$  is a single tile at the southwest corner of the assembly. From this tile, a hardcoded set of tiles called $t_s$ will grow.  $t_s$ will consist of the bases of $O(\log m)$ spines all connected horizontally to each other. $t_s$ is longer than a single instance of $P$ when $\log m > \frac{n}{\log n}$. The north side of the tiles in $t_s$ are encoded with glues that bind to the base of the shaft of the spines. The glue corresponds to the north glues of the tiles in the base row of a counter assembly. The counter will be set to count to $m$. $t_s$ also consists of tiles that grow off the end of the horizontal row to grow a vertical column of tiles that encode a counter in a similar way. $t_s$ has $O(\log nm)$ tile types since it consists of $O(\log m)$ spine base rows and each of those are $O(\log n)$ tiles. 

    From $t_s$, instances of $P$ will grow northward and eastward forming an L shape that is $m$ by $m$. The fill copy of $\calT_P$ fills in the spaces between the L, forming the entire pattern. 
    
    \textbf{Correctness of construction}\\
    The construction outputs a TAS $\calT_R$ that is designed to weakly self-assemble $\var{GridRepeat}(P, m)$ for all patterns and values of $m$. It has a seed of a single tile and grows the entire skeleton from it. Our prior analysis shows that the tile complexity is correct at $O(\frac{n^2}{\log n} + \log{nm})$. Thus Theorem~\ref{thm:repeated-pattern} is proved.
\end{proof}

\begin{figure}
    \centering
    \includegraphics[width=3.0in]{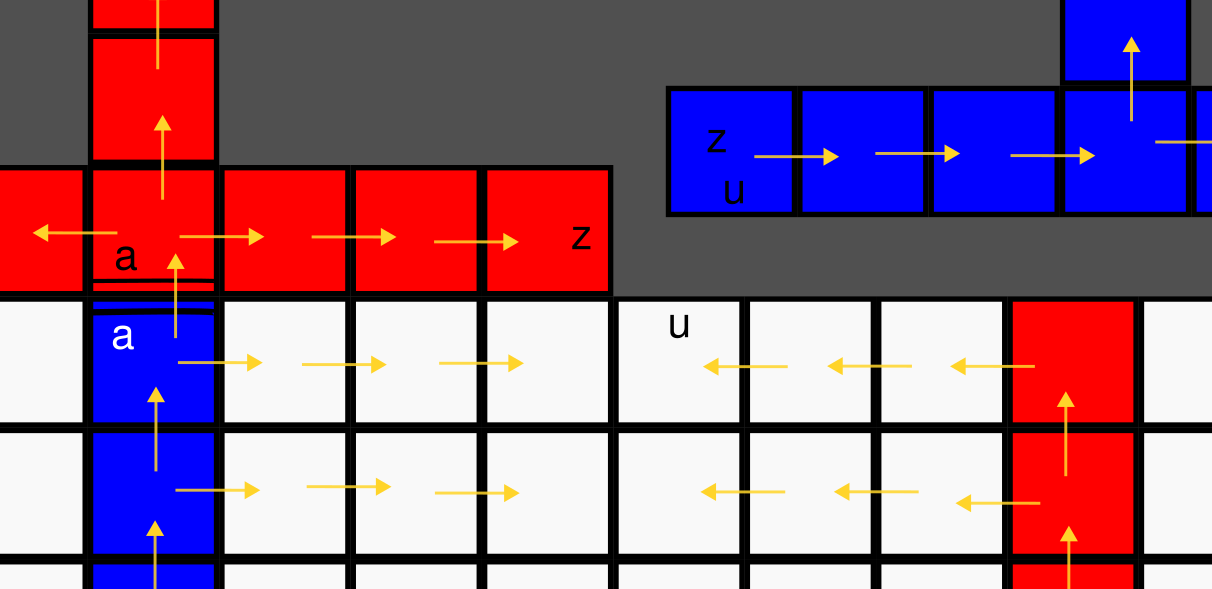}
    \caption{An example of how the binding of a spine works. The arrows indicate the growth of the tiles. The letter indicates glues. In this example, after the blue spine at the bottom finishes growing, the next red spine can grow.  The incoming blue spine tiles cooperatively bind with this red spine and the ribs from the spine below. This tile will grow the rest of the spine, and the pattern will continue. }
    \label{fig:counter-binding}
\end{figure} 
\section{Technical Details of the Construction for Theorem \ref{thm:2layers-square}}\label{sec:multilayer-appendix}

In this section we include technical details of the construction used to prove Theorem \ref{thm:2layers-square}.

At a high-level, the construction consists of a handful of components (of varying complexity) that can be seen schematically depicted in Figure \ref{fig:wedge}. The number $n$ is encoded in $O(\log{n}/\log{\log{n}})$ tile types following the technique of \cite{AdChGoHu01}. From the seed tile of $\calT_{p_n}$, the $O(\log{n}/\log{\log{n}})$ tiles representing $n$ in an optimally compressed base grow to the right. Then, rows grow upward and to the right to do a base conversion in which the bits of $n$ are ``unpacked'' so that the northern glues of the tiles of top row of that triangle represent $n$ as $\log{n}$ bits (shown in fuchsia in Figure \ref{fig:wedge}). (Figure \ref{fig:bit-unpacking} depicts a slightly more detailed example of the bit unpacking.) A simple set of filler tiles grow to the south of the seed's row to form the bottom of the triangle. The tile complexity of this stage of the construction is  $O(\log{n}/\log{\log{n}})$.
The tile complexity of the remaining portions of the construction is $O(1)$, as they use a constant number of tile types independent of $n$.

\begin{figure}
    \centering
    \includegraphics[width=3.5in]{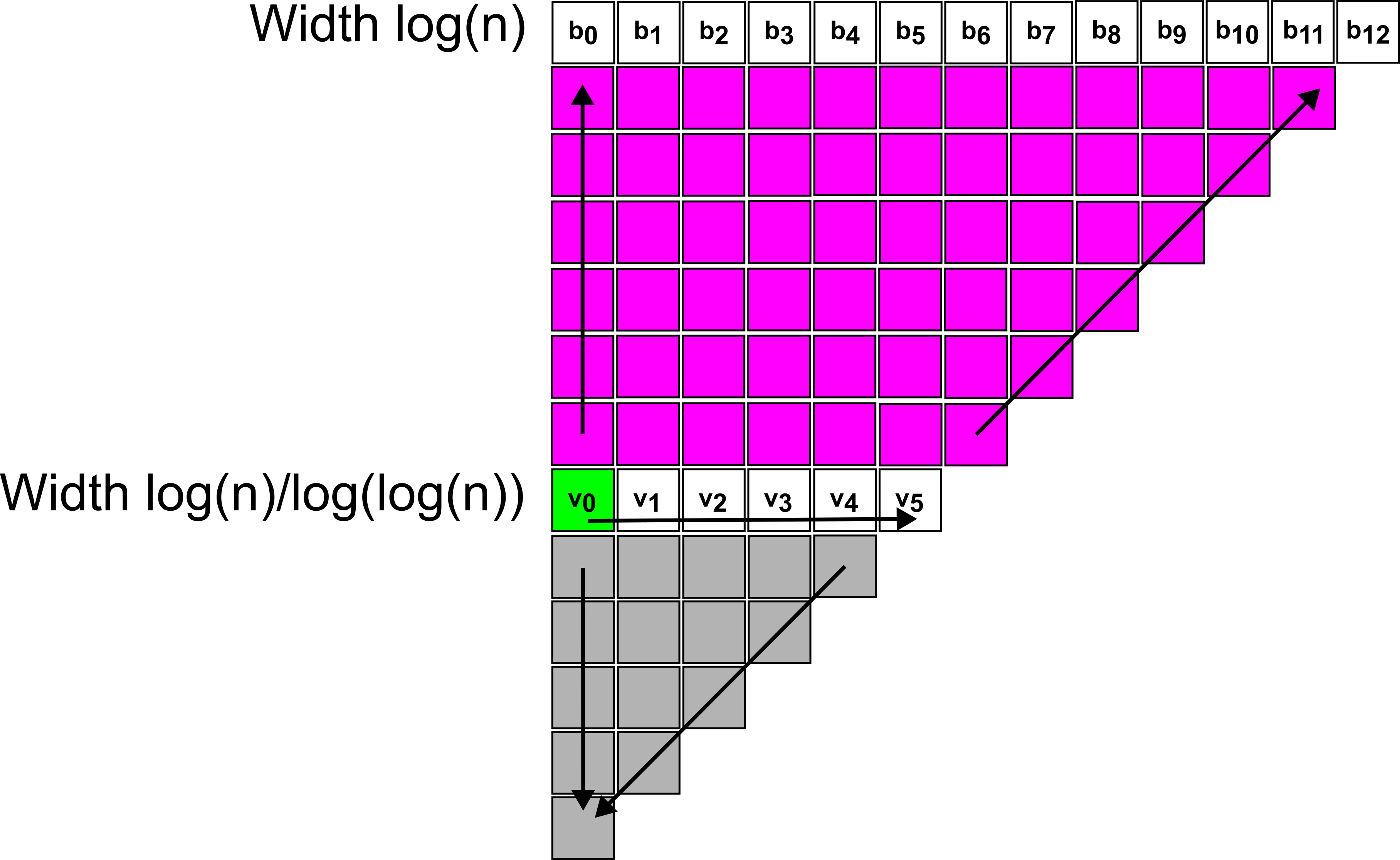}
    \caption{Schematic example of the growth of the base conversion module's growth from Figure \ref{fig:wedge}. The seed tile is shown in green. Using the technique of \cite{AdChGoHu01}, the number $n$ is encoded using $O(\log{n}/\log{\log{n}})$ tile types. These tile types form the (white) row that grows to the east of the seed. A base conversion then occurs via rows that grow to the north (fuchsia) to convert $n$ to binary, so that the northern row of this module (white) consists of tiles that encode $n$ in binary.}
    \label{fig:bit-unpacking}
\end{figure}

\subsection{Simulation of all aTAM systems with $\le n$ tile types}\label{sec:simulation}

A zig-zag Turing machine module (i.e., a standard aTAM construction in which a growing assembly simulates a Turing machine while rows grow in alternating, zig-zag, directions and each row increases in length by one tile - see \cite{jSADS,DirectedNotIU,Versus} for some examples) uses $n$ as input. The Turing machine $M$ simulates all singly-seeded aTAM systems containing $\le n$ tile types, $\le 8$ colors, and single-tile seeds, each for a bounded amount of time to be discussed. (The portion of the assembly that simulates $M$ is depicted in aqua in Figure \ref{fig:wedge}).

We now note that two aTAM systems $\calT_1 = (T_1, \sigma_1, \tau_1)$ and $\calT_2 = (T_2, \sigma_2, \tau_2)$ could be identical except that the strength of every glue in $\calT_1$ is doubled in $\calT_2$ and $2\tau_1 = \tau_2$. These are technically different aTAM systems, and an infinite set of such systems could be made by an infinite number of such strength and temperature doublings. If they all have $n$ tile types, it would then seem impossible to simulate, in finite time, the infinite set of systems with $\le n$ tile types. However, we note that there exists a color-preserving bijection (i.e., one that only maps tile types of the same color to each other) $f: T_1 \rightarrow T_2$ such that 
$\mathcal{A}[f(\calT_1)] = \mathcal{A}[\calT_2]$ and $\mathcal{A}_{\Box}[f(\calT_1)] = \mathcal{A}_{\Box}[\calT_2]$ (i.e. under the mapping of $f$ they have the exact same producible and terminal assemblies). Thus, these systems are \emph{equivalent} (in terms of assemblies produced), and simulating only one of any set of equivalent systems is necessary.
This is because our pattern $p_n$, by differing from the assembly produced by one such system will differ from the patterns produced all equivalent systems.
Furthermore, in \cite{chen2015program}, they developed the notion of \emph{strength-free} aTAM systems and used them to show that there are a bounded number of equivalence classes of aTAM systems (i.e. sets of equivalent systems) given a fixed number of tile types. A strength-free system, rather than assigning strength values to glues and a temperature value to the system, instead assigns \emph{cooperation sets} to tile types to define which subsets of their sides are sufficient for tile attachment. In this way, they abstract away the notions of glue strength and temperature and capture the behaviors of tiles, and they show how to both enumerate all possible strength-free systems with $n$ tile types and how to convert each into a standard aTAM system if possible. (For some strength-free systems, there is no valid corresponding aTAM system, and their algorithm can accurately report when that is the case.)

The Turing machine $M$ of our construction makes use of the tools of \cite{chen2015program} to first compute the number of singly-seeded strength-free systems with $\le n$ tile types and $\le 8$ colors as follows:
\begin{enumerate}
    \item A tile type has at most $168$ different possible cooperation sets (see \cite{chen2015program}).
    \item Each tile type can be one of 8 colors.
    \item For each tile side there are at most $n$ glue labels, or the \emph{null} glue to choose from. (If $m > n$ unique glue labels appear on the same side of the tile types in a set of at most $n$ tile types, then at least $m - n$ of them must not match glues on the opposite side of any tile type and therefore can be replaced by the \emph{null} glue without changing behavior.) 
    \item There are at most $(n+1)^4 = 4n^4$ ways to assign glue labels to the $4$ sides.
    \item Encoding each tile as a list of $4$ glue labels, a color, and a cooperation set yields $4n^4*8*168= 5376n^4$ tile types.
    \item The number of tile sets with $n$ tile types taken from the full set of $5376n^4$ is therefore $(5376n^4)^n$.
    \item Since each tile in a tile set could be the seed of a unique system, there are $(5376n^4)^{n+1}$ different strength-free systems with $n$ tile types.
    \item Thus, there are at most $\Sigma_{i=1}^n (5376i^4)^{i+1} \le (5376n^4)^{n+2}$ strength-free systems with at most $n$ tile types. We will refer to this number as $\var{SF}(n)$. (Note that in our pseudocode implementation of the algorithm of $M$, the value of $\var{SF}(n)$ is slightly smaller as it counts a bit more efficiently than this crude approximation, but that does not change the correctness of the discussion nor the asymptotic bounds.)
\end{enumerate}

\begin{figure}
    \centering
    \includegraphics[width=2.8in]{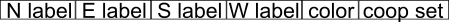}
    \caption{Layout of the binary representation of a strength-free tile type. Each of the ``N label,'' ``E label,'' ``S label,'' and ``W label'' fields are one of $n+1$ numbers from $0$ to $n$. The ``color'' field is one of $8$ numbers from $0$ to $7$, and ``coop set'' is one of $168$ values from $0$ to $167$.}
    \label{fig:strength-free-tile}
\end{figure}


To build pattern $p_n$, a bit sequence of length $\var{SF}(n)$, that we'll call $b_n$, will be generated by computing and saving a single bit for each of the $\var{SF}(n)$ strength-free systems (ultimately allowing the $p_n$ to differ from each of them). We will use the value of $\var{SF}(n)$ to determine the number of steps for which each system must be simulated, as discussed later.

By Theorem 3.1 of \cite{chen2015program} there is an algorithm that, given a strength-free aTAM system with $\le n$ tile types as input, returns an equivalent standard aTAM system if one exists, or \var{False} if not, in time $O(n^5)$. We'll call the function that implements this $\var{GetEquivalentATAMSystem}$. For each $0 \le i < \var{SF}(n)$, strength-free system $\mathcal{S}_i$ will be given to $\var{GetEquivalentaTAMSystem}$ which will either (1) convert it to a standard aTAM system $\calT_i$ that will next be simulated for a bounded time so that a bit value can be computed from it and saved as the $i$th bit of $b_n$, or (2) if $\mathcal{S}_i$ does not have an implementable aTAM system (and thus $\var{GetEquivalentaTAMSystem}$ returns \var{False}), the default bit value of $0$ will be saved as the $i$th bit of $b_n$.

Recall that $p_n$ consists of a grid of cells of size $c \times c$ repeated horizontally and vertically. For the value of $c$, we use $\var{SF}(n)$. As each system $\calT_i$, derived from $\mathcal{S}_i$, for $0 < i \le \var{SF}(n)$, is simulated, its index $i$ is noted so that the construction can guarantee that the $i$th row and $i$th column of each $\var{SF}(n) \times \var{SF}(n)$ cell will differ from a (potentially) corresponding location in the assembly produced by $\calT_i$ during its simulation. (Again noting that if there is no corresponding $\calT_i$ for some $\mathcal{S}_i$ we just save the bit $0$ since it's a ``don't care'' location.)
We want to ensure that for each $\calT_i$, if it happens to make a grid cell of $\var{SF}(n) \times \var{SF}(n)$ tiles composed of 7 colors (recall that although 8 colors are allowed in our construction, any given $p_n$ will only use 7 of them), $p_n$ differs in at least one location in each of its cells from at least one location of a cell produced by $\calT_i$. Any system $\calT_i$ that does not even produce a single valid cell of size $\var{SF}(n) \times \var{SF}(n)$ bounded by the boundary colors has no chance of generating $p_n$ so can be easily discounted and again a ``don't care'' bit of $0$ can be saved for its index in $b_n$. For all other $T_i$, a bit computed after running the simulation of $T_i$ is used to ensure $p_n$ differs.

\subsection{Making $p_n$ differ from each simulated system}\label{sec:differ}

For the simulation of each system, we do not impose a restriction upon the translation of the pattern that the system makes relative to our target pattern $p_n$ whose southwestern corner is at location $(0,0,0)$. Therefore, we cannot assume the relative position of the seed tile of $\calT_i$ with respect to any portion of $p_n$, and we simulate each $\calT_i$ until it grows an assembly that has at least one dimension (width or height) that spans the distance of a full cell with boundaries on both sides. To do so, we simulate each $\calT_i$ for $4\var{SF}(n)^2$ steps because this is the number of tiles contained within a $2 \times 2$ square of grid cells, ensuring that irrespective of the position of the seed tile with respect to the pattern formed, the full dimension of at least one grid cell and its boundaries in that dimension must be spanned. Figure \ref{fig:grid1-trimmed} shows an example of grid cells and a bounding box of that size, demonstrating why such bounds suffice. (Note that any system that becomes terminal before reaching such a size clearly cannot weakly self-assembly $p_n$, which is much larger, so we can record a ``don't care'' output bit of $0$ for that simulation.) Since we are only concerned with systems that can create grid-like patterns with the same boundary and interior colors used by $p_n$ (as differing patterns created by other systems will immediately disagree with $p_n$), we can inspect the assembly $\alpha_i$ produced by the simulation of $\calT_i$ as follows.

\begin{figure}
    \centering
    \includegraphics[width=2.5in]{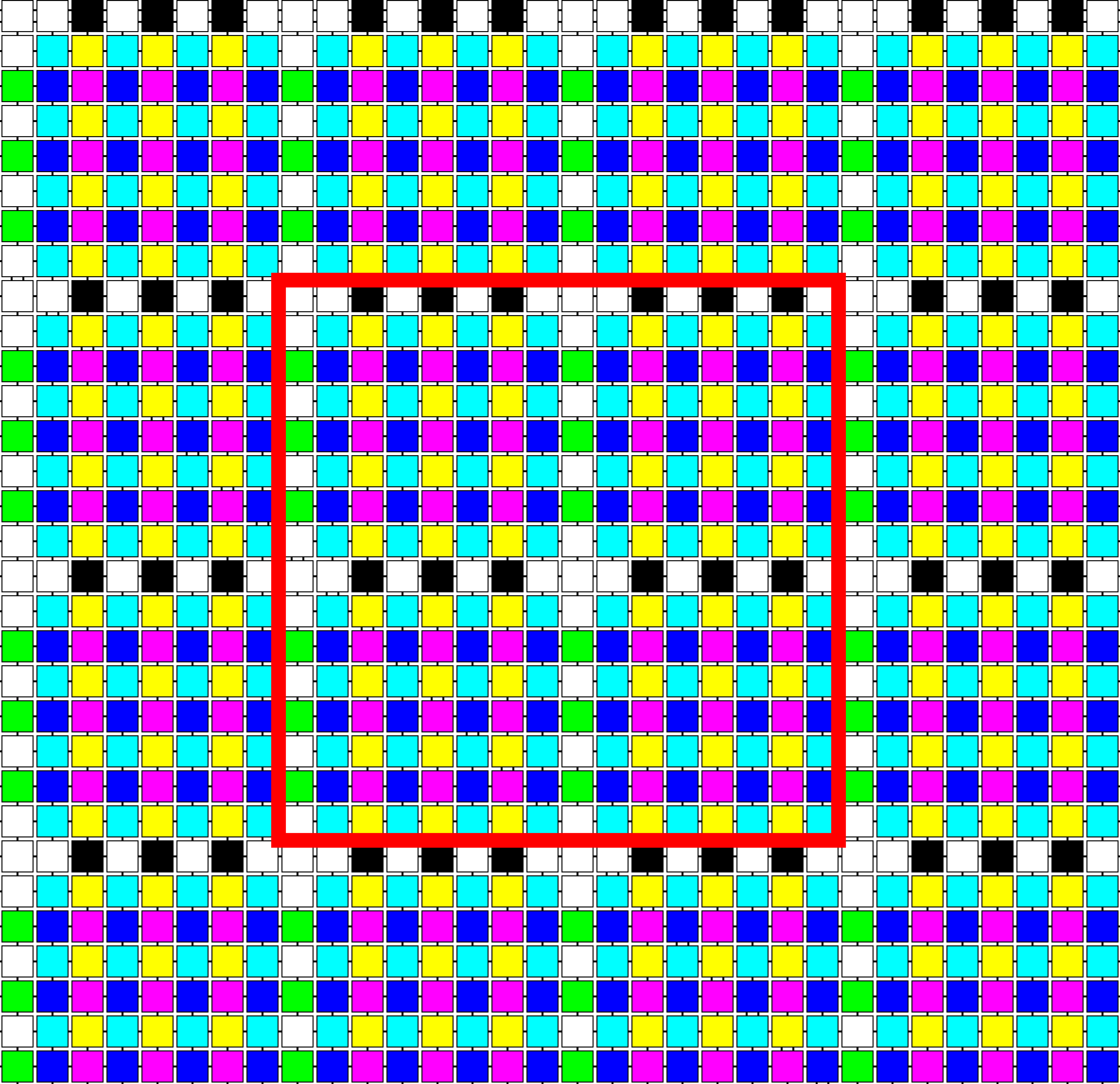}
    \caption{A portion of a pattern $p_n$ for the bit sequence $11010101$, showing a $4 \times 4$  grid of cells and a bounding box (Red) enclosing a $2 \times 2$ portion of the grid, containing $4\var{SF}(n)^2$ tiles, where $\var{SF}(n)$ is the width and height of a cell. Any assembly containing $4\var{SF}(n)^2$ tiles must contain a connected component of at least width $\ge \var{SF}(n)+1$ and/or height $\ge \var{SF}(n)+1$, ensuring two boundary locations on the sides of an assembly spanning a cell.}
    \label{fig:grid1-trimmed}
\end{figure}


Without loss of generality, assume that $\alpha_i$ has width $\ge 2\var{SF}(n)$. If not, it must have height $\ge 2\var{SF}(n)$ and the algorithm searches from north to south instead of west to east as described below.
Starting from any leftmost tile of $\alpha_i$, we inspect its color and continue inspecting the colors of tiles one position to the east of the previous, looping until we encounter one whose color is a boundary color.
(A simple example can be seen in Figure \ref{fig:assembly-indexed}.)
At that point, we skip an additional $i$ tiles to the east (noting that the $y$-coordinates don't matter, only the $x$-coordinates). Once a tile is found at that $x$-coordinate, which is guaranteed by the number of tiles in $\alpha_i$ and the assumption that its width (rather than height) is $\ge 2\var{SF}(n)$, its color is noted.
The pattern $p_n$ will be produced so that the colors of each column represent a bit, 0 or 1, and the colors of each row represent a bit, 0 or 1.
For the locations of a boundary row or column, there are 4 possible colors ($\{\var{White},\var{Green},\var{Black},\var{Red}\}$) used to represent the intersection of each location's row and column value, i.e., 00, 01, 10, and 11.
For the other (a.k.a. interior) locations, a set of 4 different colors is used ($\{\var{Aqua},\var{Blue},\var{Yellow}\,\var{Fuchsia}\}$). Indexing each row and column by $0 < i < \var{SF}(n)$ allows the $i$th row and $i$th column of each grid cell to be associated with a bit value. 
The bit value chosen to be saved is determined by the analysis of the color of the tile at index $i$ in $\alpha_i$ (i.e., the tile at an $x$-coordinate that is $i$ greater than that of a tile with a boundary color). If the color of the tile there is one of the two colors associated with a boundary row representing a 0, or one of the two colors associated with an interior row representing a 0, the bit value 1 is saved as the $i$th bit of $b_n$. Otherwise, the bit value 0 is saved. When the pattern $p_n$ is later produced, it will use colors associated with this ``flipped'' bit for all rows and columns at index $i$ of all cells of the grid. Therefore, every tile of $p_n$ that is $i$ locations to the east of any tile with a boundary color will have a different color than the tile of $\alpha_i$ that is $i$ positions to the east of a tile with a boundary color. In this way, the pattern produced by $\calT_i$ cannot be $p_n$.

\begin{figure}
    \centering
    \includegraphics[width=1.8in]{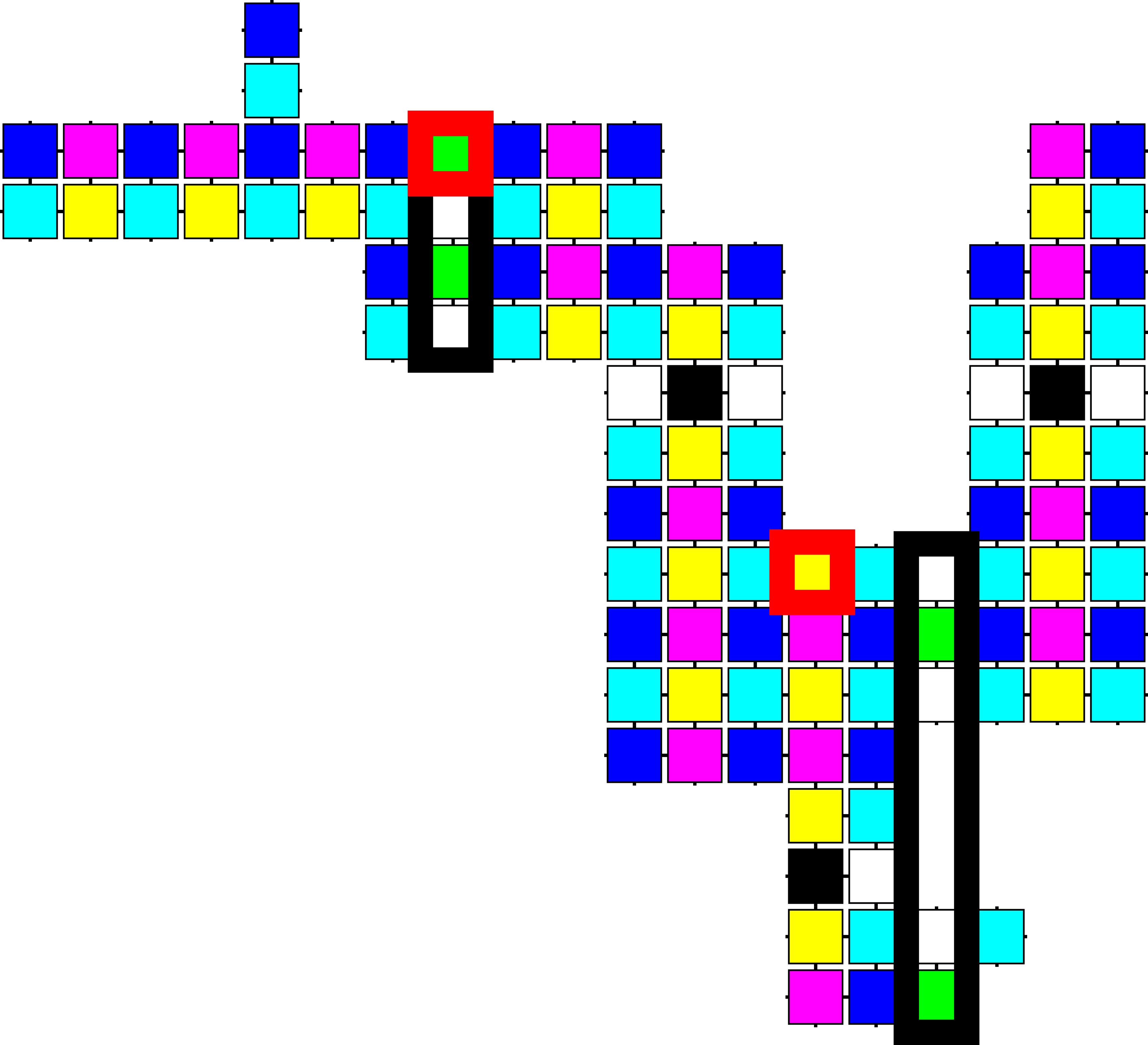}
    \caption{An example assembly $\alpha_i$ possibly formed during the simulation of some $\calT_i$. The leftmost red square hightlights a leftmost tile with a boundary color, and the corresponding black rectangle highlights the rest of that column and thus the boundary of a (potential) cell. Assuming $\var{SF}(n) = 8$, i.e., grid cell sizes of 8, the rightmost black rectangle highlights the locations of tiles at the boundary of the next cell to the right. Assuming an index value of $i = 6$, the rightmost red square highlights a tile at that index, with respect to the boundary to the left. The color of the highlighted cell is \texttt{Yellow}, which is one of the two colors reserved for columns representing the bit value 0. Therefore, the bit value 1 is saved for index $i$ to ensure that the pattern $p_n$ will never place a \texttt{Yellow} tile $i$ locations to the east of a tile with a boundary color, guaranteeing that $p_n$ differs from the pattern produced by $\calT_i$.}
    \label{fig:assembly-indexed}
\end{figure}

The simulation of $M$ proceeds through the simulation of each of the $\var{SF}(n)$ possible aTAM systems with $\le n$ tile types, $\le 8$ colors, and single-tile seeds (once again, just saving $0$s for strength-free systems that do not have corresponding aTAM systems). As the rows representing each simulation grow, they pass the currently computed sequence of bits, $b_n$, upward through the tiles performing the simulations. Once $M$ completes, $b_n$ is encoded in the north glues of the leftmost $\var{SF}(n)$ tiles of the top row. This is represented in Figure \ref{fig:wedge} as the \var{Black} and \var{White} sequence on the top of the aqua-colored portion of the wedge. Note that the tile types that simulate $M$ are a constant-sized tile set, regardless of the value of $n$.

At that point, another constant-sized set of tiles grow in a zig-zag manner to copy $b_n$ over and over, to the right, until the entire top row consists of copies of $b_n$ (with the last copy of the sequence potentially truncated). This is depicted as the yellow and (dark) grey portions of Figure \ref{fig:wedge}. Note that during all of the diagonal upward growth, a single ``filler'' tile type attaches to the right of the diagonal so that once the northward growth completes the full assembly will be a square. (This is shown as light grey in Figure \ref{fig:wedge}.)

\begin{figure}
    \centering
    \includegraphics[width=3.5in]{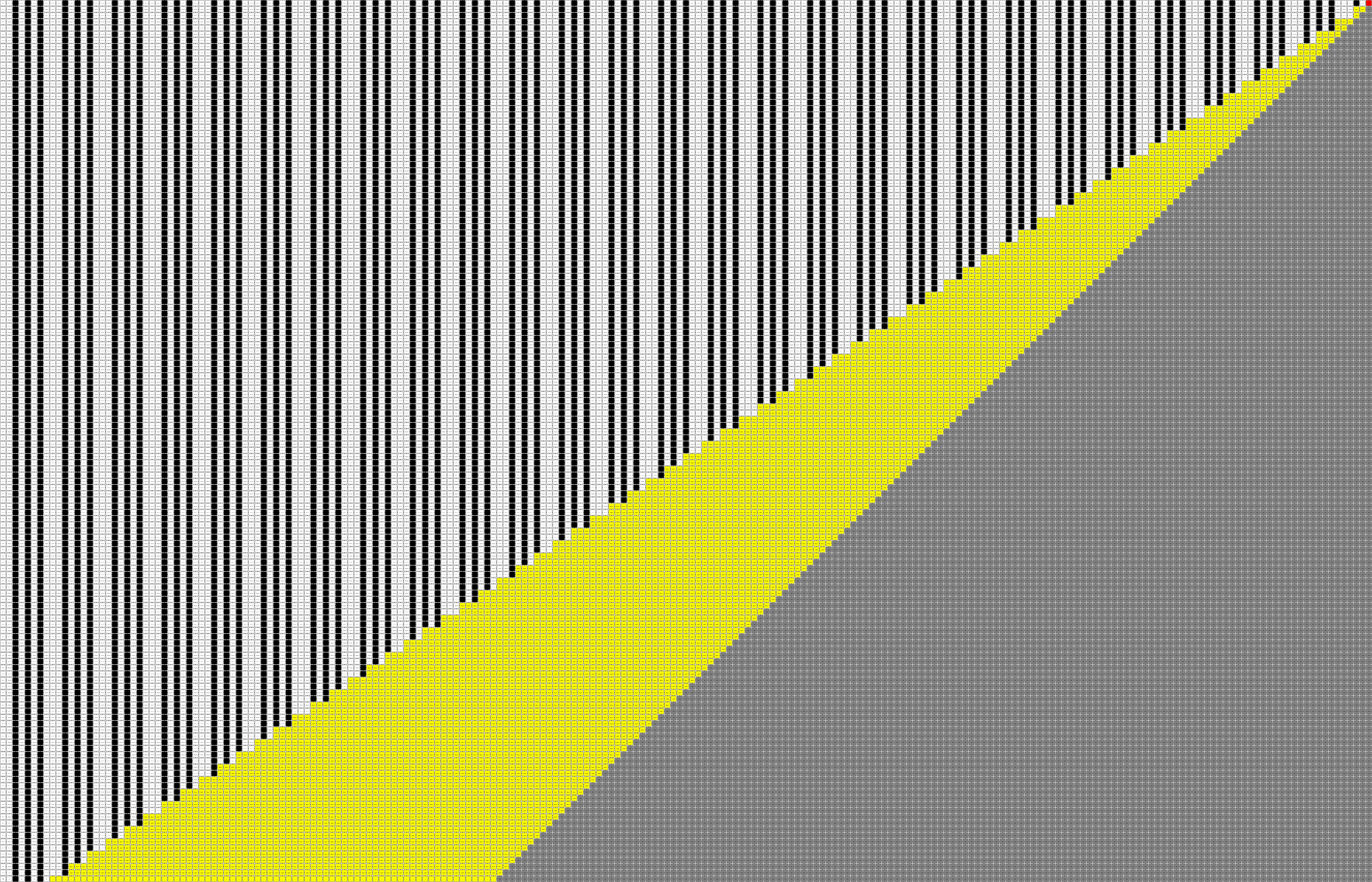}
    \caption{An example portion of the assembly from the proof of Theorem \ref{thm:2layers-square} that shows the copying of the bit sequence $b_n = 11010101$ to the right until it occupies the entire top row. This corresponds to the portion of the construction shown in Figure \ref{fig:wedge} above the aqua portion. At the completion of the bit sequence copying, the rightmost tile of the topmost row (shown here in red), initiates the growth into the second plane in which the pattern $p_n$ will assemble based on the bits of $b_n$. The grey portion below the yellow is formed by a single ``filler'' tile that causes the final assembly to be a square.}
    \label{fig:pattern-ext}
\end{figure}

Once the bit sequence $b_n$ has been copied across the entire northern row, the final phase of the construction begins. The easternmost tile to attach to the top row has a strength-2 glue in the $+z$ direction, initiating growth of the second plane, onto which the pattern $p_n$ self-assembles. The first row to grow in $z=1$ is immediately above the northernmost row of the assembly at $z=0$ and cooperates with the tiles of that row to read the repeated copies of $b_n$. This row forms the northern boundary row of all of the cells of $p_n$, and thus the tiles have colors from the set of boundary colors. The northernmost row in $z=0$ grows from the left to the right and includes information about the first (i.e., leftmost) bit of $b_n$. If that first bit is 0, the first row in $z=1$, as is a boundary row, contains the two boundary colors for a row of value 0, which are $\{\var{Red},\var{Green}\}$. Otherwise, it contains the two boundary colors for a row of value 1, which are $\{\var{White},\var{Black}\}$. If it was 0 (resp. 1), then as that first row in $z=1$ grows from right to left, when it cooperates with a tile representing 0 in $z=0$ it will be colored \var{Red} (resp. \var{White}), When cooperating with a tile representing a 1, it will be colored \var{Green} (resp. \var{Black}). (This is because each tile's color represents the combination of the row's bit value and the column's bit value.) The tile set that accomplishes the copying of $b_n$ across the entire top row in $z=0$ (shown as yellow and grey in Figure \ref{fig:wedge} and also in Figure \ref{fig:pattern-ext}) and grows the first row in $z=1$ consists of 1474 tile types. (Example tile assembly systems exhibiting some of the behavior described, as well as software that can generate them and simulate them, can be found online \cite{PatternAssemblySoftware}.)

\begin{figure}
    \centering
    \includegraphics[width=2.0in]{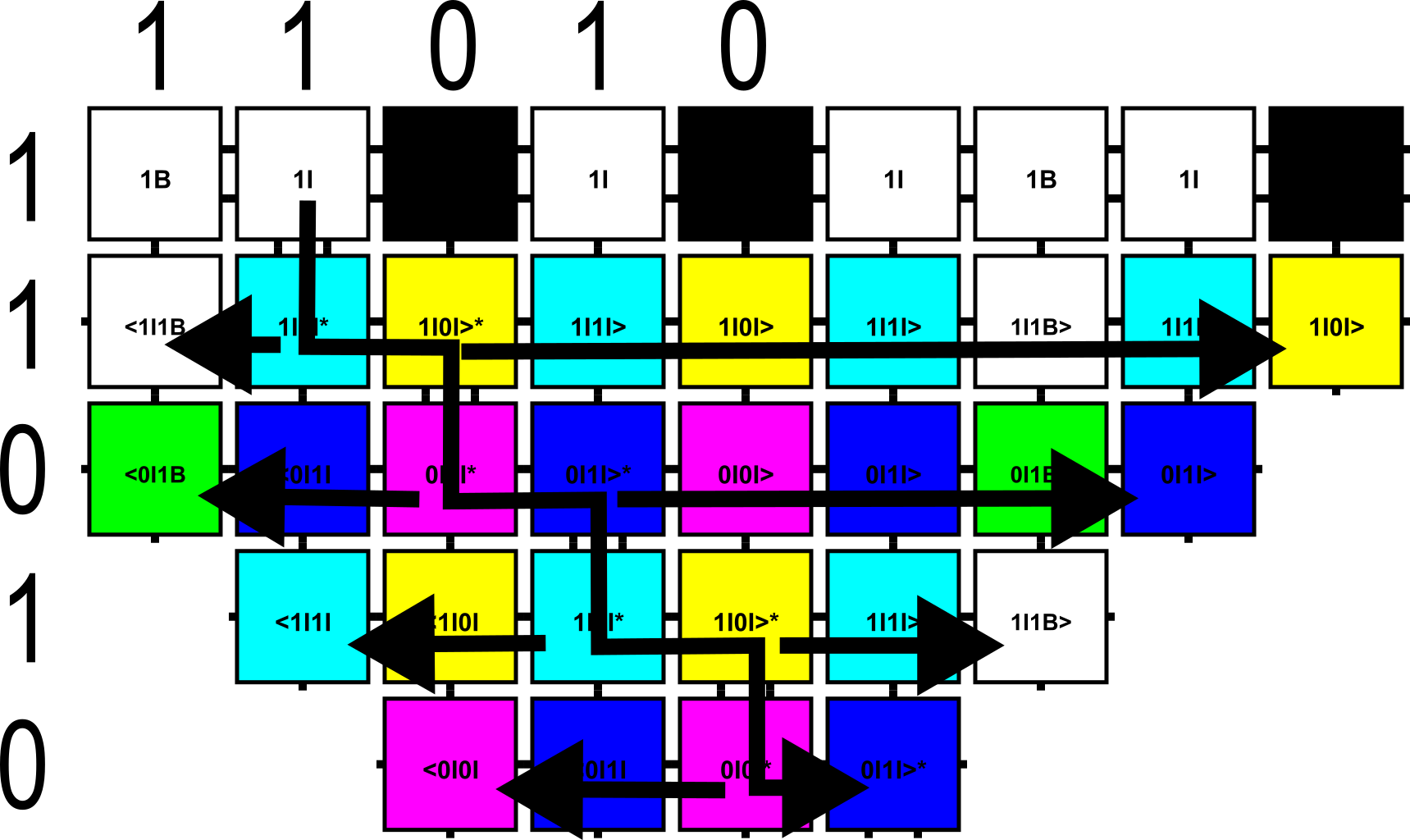}
    \caption{The growth of the grid, forming pattern $p_n$, from the topmost row, which is the initial row in plane $z=1$. The first tile placed in each row follows a diagonal path starting from the second tile from the left. Growth of each row expands left and right from the tile along the diagonal. The bit value for each column is propagated south from the top column, and the bit value for each row is propagated from the diagonal tile that initiates that row. In this way, the $i$th row propagates the same bit value as the $i$th column, and each tile's color represents the values of the pair of bits.}
    \label{fig:grid-growth}
\end{figure}

Upon completion of the first row of $z=1$, the final module of the construction begins growth. This module consists of 52 tile types that grow south from that first row to make the full square in $z=1$ while copying the pattern downward and to both sides to form the grid pattern $p_n$. (See Figure \ref{fig:grid-growth} for an example.)

\subsection{Correctness of proof}

Throughout the definition of the construction, we have explained the correctness of each component, so in this section we summarize those arguments to complete the proof of Theorem \ref{thm:2layers-square}. 
We first note that, by definition, our construction creates a pattern in the plane $z=1$ using 7 colors: 3 for the boundary rows and columns of each cell, and 4 for the interior regions of each cell. Which 3 boundary colors are used for any particular $p_n$ depends upon the result of the first system simulated, and are chosen from a set of 4 possible colors based on that bit value.

Next, we argue that for every $n$ the barely-3DaTAM system $\calT_{p_n}$ simulates every possible aTAM system with $\le n$ tile types, $\le 8$ colors, and a single-tile seed, or an equivalent aTAM system.
The details of how the enumeration of all such systems is guaranteed are given in Section \ref{sec:simulation}.
In summary, using the results of \cite{chen2015program} we know that all strength-free systems with $\le n$ tile types, $\le 8$ colors, and single-tile seeds are enumerated and then that all valid aTAM systems with $\le n$ tile types, $\le 8$ colors, and single-tile seeds that are equivalent to those are generated and simulated. By the definition of strength-free systems, this set will include an aTAM system from every set of equivalent aTAM systems fitting those criteria. Therefore, for any aTAM system $\calT_i$ with $\le n$ tile types, $\le 8$ colors, and a single-tile seed, a bit will be added to $b_n$ that will be mapped to a color differing from a tile placed by $\calT_i$ in a location specific to that value of $i$.
This results in $p_n$ differing from each of those systems in at least one location of each grid cell.
The fact that a bit is gathered for each simulation to guarantee that $p_n$ differs from it in at least one location is shown in Section \ref{sec:differ}.
Thus, it is shown that $p_n$ must differ from the pattern made by every aTAM system with $\le n$ tile types.

The value of $m$, which becomes the dimensions of the $m \times m$ $2$-layered square formed by $\calT_{p_n}$ is dependent upon the combined heights of the components that grow in $z=0$, which can be seen in Figure \ref{fig:wedge}. These heights are dominated by the runtime of the Turing machine $M$ and result in $m = O(n^{21n})$.

Finally, we just need to show that the tile complexity of the barely-3DaTAM system $\calT_{p_n}$ is $O(\log{n}/\log{\log{n}})$.
For this, we note that the tiles of all components are constant with respect to $n$, with the exception of the initial component that unpacks the value of $n$ from its optimal encoding using $O(\log{n}/\log{\log{n}})$ tile types, for an overall tile complexity of $O(\log{n}/\log{\log{n}})$. Thus, Theorem \ref{thm:2layers-square} is proved. (Nonetheless, for the interested reader we provide more details of the pseudocode of $M$ in Section \ref{sec:pseudocode} and a complexity analysis in Section \ref{sec:2layer-complexity}.)

\subsection{Pseudocode for $M$}\label{sec:pseudocode}

During the growth of the layer at $z=0$, the majority of the construction's complexity lies in the simulation of Turing machine $M$ that simulates every aTAM system with $\le n$ tile types, $\le 8$ colors, and single-tile seeds. Here we provide details of the how $M$ accomplishes that.

We break the functionality of $M$ into pieces for which we define the pseudocode. The main function executed by $M$ is $\var{SimulateAllTileAssemblySystems}$ that takes the maximum number of tile types, $n$, in the systems to be simulated and can be seen in Algorithm \ref{alg:sim-all}. This function computes the number of systems to simulate, initializes the data structure used to contain the tile set definitions, then loops to simulate each system and retrieve the relevant bit value needed from each to construct bit sequence $b_n$ for pattern $p_n$.

\begin{algorithm}
  \caption{An algorithm for generating all strength-free systems with $\le \var{numTileTypes}$ tile types of up to 8 colors with single-tile seeds, then converting each to an equivalent aTAM system (when possible) and simulating each aTAM system for a bounded amount of time. It returns a list of bits, one bit for each simulation.}\label{alg:sim-all}

  \begin{algorithmic}[1]
    \Procedure{SimulateAllTileAssemblySystems}{\var{numTileTypes}}
        \State $\var{B} = []$ \algorithmiccomment Initialize the list of bits

        \State $\var{numCoopSets} = 168$ \algorithmiccomment See Section \ref{sec:simulation} for details
        \State $\var{numColors} = 8$

        \State $\var{numSystems} = $ \Call{CountNumSFSystems}{$\var{numTileTypes}$}
        \State $\var{index} = 0$

        \ForEach{$\var{currNumTileTypes} \in [1,2,\ldots,\var{numTileTypes}]$}

            \State $\var{numGlues} = \var{currNumTileTypes} + 1$ \algorithmiccomment Allow \emph{null} glue
            
            \State $\var{numPossibleTileTypes} = \var{numCoopSets}*\var{numColors}*\var{numGlues}^4$
        
            \State $\var{numPossibleTileSets} = \var{numPossibleTileTypes}^\var{currNumTileTypes}$

            \State $\var{currTileSet}$ = \Call{InitializeSFTileSet}{$\var{currNumTileTypes}$}

            \ForEach{$\var{i} \in [1,2,\ldots,\var{numPossibleTileSets}]$}

                \ForEach{$\var{j} \in [1,2,\ldots,\var{currNumTileTypes}]$}

                    \State $\var{seedTile} = \var{currTileSet}[j]$
                    
                    \State $\var{currSFsystem} = (\var{currTileSet},\var{seedTile})$

                    \State $\var{tas} =$ \Call{GetEquivalentATAMSystem}{$\var{currSFSystem}$} \algorithmiccomment Function 
                    \State \algorithmiccomment from \cite{chen2015program}

                    \If{$\var{tas} == \var{FALSE}$}
                        \State $\var{B} = \var{B} + [0]$
                    \Else
                        \State $\var{numSteps} = (2*\var{numSystems})^2$
                        
                        \State $\var{b}_i$ = \Call{SimulateATAMSystem}{$\var{tas}, \var{numSteps}, \var{numSystems}, \var{index}$}
        
                        \State $\var{B} = \var{B} + [\var{b}_i]$
                    \EndIf

                    \State $\var{index} = \var{index} + 1$

                \EndForEach

                \State \Call{IncrementSFTileSet}{$\var{currTileSet}, \var{numColors}, \var{numGlues}$}
        
            \EndForEach
        
        \EndForEach

        \State \Return $\var{B}$
    \EndProcedure

  \end{algorithmic}
\end{algorithm}

$\var{SimulateAllTileAssemblySystems}$ utilizes a number of helper functions. The first is $\var{CountNumSFSystems}$, which can be seen in Algorithm \ref{alg:count-sfsystems}. This function loops over each tile set size from $1$ to $n$ (which is supplied as the argument named $\var{numTileTypes}$) and sums all possible strength-free systems with those numbers of tile types. Note that the counting of strength-free systems in this function and also in $\var{SimulateAllTileAssemblySystems}$ leads to duplicate copies of tile sets being created. Since it naively iterates through all possible combinations of tile types, there will be tile sets that (1) have multiple copies of the same tile type, and (2) have the same tile types as each other but simply in different orderings. All such duplicate systems will be equivalent to each other. Although they will each be simulated, this doesn't cause any problems with the construction. Since all equivalent systems will create the same assemblies as each other, the pattern $p_n$ will simply differ from the assemblies of each duplicated system in at least as many locations of every cell as there were equivalent versions of that system simulated. (Note that a more sophisticated version of the algorithm could remove such duplicates, but since it doesn't affect correctness and is much easier to understand, we have used this simple version.)

The second helper function is $\var{InitializeSFTileSet}$, which can be seen in Algorithm \ref{alg:initialize-tile-set}. It simply creates the list of 6-tuples, where each 6-tuple describes a tile type (its 4 glues, color, and cooperation set, as seen in Figure \ref{fig:strength-free-tile}).

\begin{algorithm}
  \caption{An algorithm to count all strength-free systems with $\le \var{numTileTypes}$ tile types of up to 8 colors with single-tile seeds.}\label{alg:count-sfsystems}

  \begin{algorithmic}[1]
    \Procedure{CountNumSFSystems}{\var{numTileTypes}}

        \State $\var{numCoopSets} = 168$ \algorithmiccomment See Section \ref{sec:simulation} for details
        \State $\var{numColors} = 8$

        \State $\var{count} = 0$

        \ForEach{$\var{currNumTileTypes} \in [1,2,\ldots,\var{numTileTypes}]$}

            \State $\var{numGlues} = \var{currNumTileTypes} + 1$ \algorithmiccomment Allow \emph{null} glue
            
            \State $\var{numPossibleTileTypes} = \var{numCoopSets}*\var{numColors}*\var{numGlues}^4$
        
            \State $\var{numPossibleTileSets} = \var{numPossibleTileTypes}^\var{currNumTileTypes}$

            \State $\var{numPossibleSFSystems} = \var{numPossibleTileSets} * \var{currNumTileTypes}$

            \State $\var{count} = \var{count} + \var{numPossibleSFSystems}$
        
        \EndForEach

        \State \Return $\var{count}$
    \EndProcedure

  \end{algorithmic}
\end{algorithm}

\begin{algorithm}
  \caption{A procedure to initialize the data structure representing a strength-free tile set.}\label{alg:initialize-tile-set}

  \begin{algorithmic}[1]
  \Procedure{InitializeSFTileSet}{\var{numTileTypes}}
    \State $\var{tileSet} = []$   \algorithmiccomment{Start with an empty list}
    \ForEach{$i \in [1,2,\ldots,\var{numTileTypes}]$}
        \State $\var{tile}_i = (0,0,0,0,0,0)$ \algorithmiccomment{Make the 6-tuple for a tile type}
        \State \algorithmiccomment{(N glue, E glue, S glue, W glue, color, coopSet)}

        \State $\var{tileSet} = \var{tileSet} + \var{tile}_i$
    \EndForEach
    \State \Return $\var{tileSet}$
  \EndProcedure

  \end{algorithmic}
\end{algorithm}

Next is the function used to increment to the next strength-free tile set to be tested. Called $\var{IncrementSFTileSet}$, this can be seen in Algorithm \ref{alg:incr-tile-set}.

\begin{algorithm}
  \caption{Algorithm to increment the current strength-free tile set to the next possible strength-free tile set.}\label{alg:incr-tile-set}

  \begin{algorithmic}[1]
  \Procedure {IncrementSFTileSet}{$\var{currTileSet}, \var{numColors}, \var{numGlues}$}
      \State $\var{overflow} = \var{TRUE}$
      \State $\var{tileNum} = 0$

      \While{$(\var{overflow} = \var{TRUE})$ and $(\var{tileNum} < len(\var{currTileSet}))$}

        \State $\var{tile} = \var{currTileSet}[\var{tileNum}]$

        \State $\var{currLoc} = 0$

        \While{$(\var{overflow} == \var{TRUE})$ and $(\var{currLoc} < 4)$}
            \If{$\var{tile}[\var{currLoc}] == (\var{numGlues} - 1)$}
                \State $\var{tile}[\var{currLoc}] = 0$
            \Else
                \State $\var{tile}[\var{currLoc}] += 1$
                \State $\var{overflow} = \var{FALSE}$
            \EndIf
            \State $\var{currLoc} += 1$
        \EndWhile

        \If{$\var{overflow} == \var{TRUE}$}
            \If{$\var{tile}[4] == (\var{numColors}-1)$}
                \State $\var{tile}[4] = 0$
            \Else
                \State $\var{tile}[4] += 1$
                \State $\var{overflow} == \var{FALSE}$
            \EndIf
        \EndIf

        \If{$\var{overflow} == \var{TRUE}$}
            \If{$\var{tile}[5] == 167$}
                \State $\var{tile}[5] = 0$
            \Else
                \State $\var{tile}[5] += 1$
                \State $\var{overflow} == \var{FALSE}$
            \EndIf
        \EndIf

        \State $\var{tileNum} += 1$
    \EndWhile
        
  \EndProcedure
  
  \end{algorithmic}
\end{algorithm}

The function $\var{SimulateATAMSystem}$ holds the logic for simulating each of the generated aTAM systems and also inspecting them to retrieve the necessary output bits. Its logic is shown in Algorithm \ref{alg:TAS-simulate}, and a high-level overview of the data structures used to store the current tile set, assembly and frontier can be seen in Figure ~\ref{fig:sim-encoding}. $\var{SimulateATAMSystem}$ also has a few helper functions to be discussed below.

\begin{figure}
    \centering
    \includegraphics[width=4.8in]{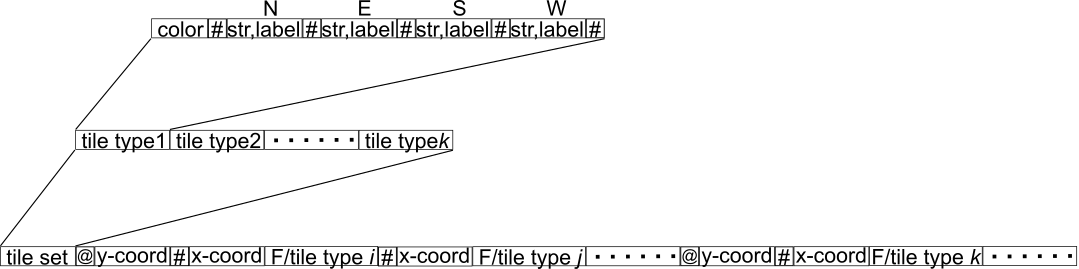}
    \caption{A high-level depiction of the encoding of the tile types, tile set, and assembly during a simulation of an aTAM system. (Top) The encoding of a tile type consists of its color, followed by the definition of each side's glue (i.e. its strength and label). (Middle) The encoding of a tile set consists of a list of each tile type's definition. (Bottom) The encoding of a system being simulated consists of the definition of the tile set followed by a sub-list for each $y$-coordinate containing a tile or frontier location, where each such sub-list consists of a list of entries of the $x$-coordinates at that $y$-coordinate containing tiles or frontier locations. Each such location contains the encoding of the $x$-coordinate and either the definition of the type of the tile located there or, for a frontier location, the definitions of the glues that are adjacent to it (along with a special character to denote the location as a frontier location).}
    \label{fig:sim-encoding}
\end{figure}

\begin{algorithm}
  \caption{Algorithm to simulate a given aTAM system for a bounded amount of steps and return a pattern value.}\label{alg:TAS-simulate}

  \begin{algorithmic}[1]
  
  \Procedure {SimulateATAMSystem}{$\var{tas}, \var{numSteps}, \var{pattSize}, \var{index}$}
    \State $\var{tileSet} = \var{tas}[0]$
    \State $\var{seed} = \var{tas}[1]$
    \State $F = \{(0,0)\}$ \algorithmiccomment Initialize the frontier
    \State $\alpha = \{\var{seed}\}$ \algorithmiccomment Initialize assembly as seed
    \State \Call{UpdateFrontier}{$F,(0,0),\alpha,\texttt{tileSet}$}
    \State $s = 0$
    \While{$s \le \var{numSteps}$}
        \State \Call{AddTile}{$F,\alpha,\texttt{tileSet}$}
        \If{$|F| = 0$}
            \State \Return $0$ \algorithmiccomment System doesn't make valid pattern
        \Else
            \State $s = s + 1$
        \EndIf
    \EndWhile
    \State $b = $\Call{GetPatternValue}{$\texttt{tileSet},\alpha, \var{pattSize}, \texttt{index}$} \algorithmiccomment Analyze pattern to
    \State \algorithmiccomment find return value $b$
    \State \Return $b$
  \EndProcedure

  \end{algorithmic}
\end{algorithm}

The first helper function for $\var{SimulateATAMSystem}$ is $\var{UpdateFrontier}$, which can be seen in Algorithm \ref{alg:update-frontier} and is used to update the current set of frontier locations after a tile is added to an assembly by first removing the location of the newly added tile from the frontier, then checking all locations neighboring that newly added tile to see if they need to be added to the frontier.

\begin{algorithm}
  \caption{A procedure that takes as arguments a set of frontier locations, one of the locations from that set, an assembly, and a tile set. It updates the frontier by adding any locations which neighbor the given location and have adjacent glues that would allow a tile of some type in the tile set to bind.}\label{alg:update-frontier}

  \begin{algorithmic}[1]
  \Procedure{UpdateFrontier}{$F,l,\alpha,T$}
        \State Remove location $l$ from $F$
        \For{$n \in \{(1,0),(-1,0),(0,1),(0,-1)\}$}
            \State $l_{nbr} = l + n$
            \State $\var{glues} = []$
            \If{$l_{nbr} \not \in \alpha$ and $l_{nbr} \not \in F$}
                \For{$n_2 \in \{(1,0),(-1,0),(0,1),(0,-1)\}$}
                    \State $l_{nbr2} = l_{nbr} + n_2$
                    \If{$l_{nbr2} \in \alpha$}
                        \State Let $t$ be the tile type at location $l_{nbr2}$ in $\alpha$
                        \State Let $g$ be the glue on the side of $t$ adjacent to $l_{nbr}$
                        \State $\var{glues} = \var{glues} + \var{g}$
                    \EndIf
                \EndFor
            \EndIf
            \If{Sum of strengths of glues in $\texttt{glues} \ge 2$}
                \For{$t \in T$}
                    \If{Sum of strengths of glues of $t$ matching glues in $\texttt{glues} \ge 2$}
                        \State $F = F \cup l_{nbr}$
                    \EndIf
                \EndFor
            \EndIf
        \EndFor
  \EndProcedure

  \end{algorithmic}
\end{algorithm}

The next helper function for $\var{SimulateATAMSystem}$ is $\var{AddTile}$, and is shown in Algorithm \ref{alg:add-tile}. It determines if and where a new tile can be added to an assembly, and adds one if possible.

\begin{algorithm}
  \caption{A procedure that takes as arguments a set of frontier locations, an assembly, and a tile set, then places a fitting tile type into the first listed frontier location (if the frontier is not empty). It ensures the frontier and assembly are correctly updated to account for the added tile.}\label{alg:add-tile}

  \begin{algorithmic}[1]
  
  \Procedure{AddTile}{$F,\alpha,T$}
    \If{$|F| = 0$}
        \State \Return \algorithmiccomment Empty frontier, can't add a tile
    \Else
        \State $f = F[0]$ \algorithmiccomment Get the first location in the frontier

        \ForEach{$t \in T$}
            \If{$t$ can bind in $f$}
                \State $\alpha = \alpha + (t,f)$ \algorithmiccomment Add a tile of type $t$ in location $f$
                \State $F = F - f$ \algorithmiccomment Remove $f$ from the frontier
                \State \Call{UpdateFrontier}{$F,f,\alpha,T$}
                \State \Return
            \EndIf
        \EndForEach
    \EndIf
  \EndProcedure

  \end{algorithmic}
\end{algorithm}

The function $\var{GetPatternValue}$ inspects a given assembly to find the color of a tile at a location matching the current system's index and, if found, returns a bit that will cause pattern $p_n$ to always have different colors at that index in the cells of the pattern. It simply returns 0 if the assembly fails to make a valid cell of a pattern.

\begin{algorithm}
  \caption{Procedure for analyzing an assembly and finding the color of a tile at a given index in the pattern.}\label{alg:assembly-analyzer}

  \begin{algorithmic}[1]
  
  \Procedure {GetPatternValue}{$\var{tileSet}, \alpha, \var{pattSize}, \var{index}$}
    \State Let $\var{BoundaryColors} = \{\texttt{Red},\texttt{Green},\texttt{Black},\texttt{White}\}$
    \State Let $\var{InteriorColors} = \{\texttt{Pink},\texttt{Blue}\,\texttt{Yellow},\texttt{Aqua}\}$
    \State $(\var{firstTileType}, (\var{firstX}, \var{firstY})) = \alpha[0]$ \algorithmiccomment Get first tile in assembly list
    \State $\var{minX} = \var{firstX}$ \algorithmiccomment Initialize min/max variables
    \State $\var{maxX} = \var{firstX}$
    \State $\var{minY} = \var{firstY}$
    \State $\var{maxY} = \var{firstY}$
    
    \ForEach{$\var{tile} \in \alpha$} \algorithmiccomment Find min/max coordinates of $\alpha$
        \State $\var{currX} = \var{tile}[1][0]$
        \State $\var{currY} = \var{tile}[1][1]$
        \If{$\var{currX} < \var{minX}$}
            \State $\var{minX} = \var{currX}$
        \EndIf
        \If{$\var{currX} > \var{maxX}$}
            \State $\var{maxX} = \var{currX}$
        \EndIf
        \If{$\var{currY} < \var{minY}$}
            \State $\var{minY} = \var{currY}$
        \EndIf
        \If{$\var{currY} > \var{maxY}$}
            \State $\var{maxY} = \var{currY}$
        \EndIf
    \EndForEach
    
    \If{$\var{maxX} - \var{minX} \ge \var{pattSize}$} \algorithmiccomment $\alpha$ is wide enough to contain the pattern
        \State \Return \Call{InspectWidth}{$\alpha,\var{minX}, \var{maxX}, \var{index}$}
    \Else \algorithmiccomment $\alpha$ must be tall enough to contain the pattern
        \State \Return \Call{InspectHeight}{$\alpha,\var{minY}, \var{maxY}, \var{index}$}
    \EndIf
  \EndProcedure

  \end{algorithmic}
\end{algorithm}

The four helper functions for $\var{GetPatternValue}$ are $\var{InspectWidth}$, $\var{InspectHeight}$, $\var{GetTileAtX}$ and $\var{GetTileAtY}$, shown in Algorithms \ref{alg:inspect-width}, \ref{alg:inspect-height}, \ref{alg:x-coord}, and \ref{alg:y-coord}, respectively. The first two take as input an assembly and its greatest extent in the corresponding dimension, plus the $\var{index}$ of the current simulation, and look for a tile with a boundary color, then a tile further along the corresponding dimension by $\var{index}$ positions, and return a bit related to the color of the tile there. More specifically, based on the color of the tile found there, it will return a bit that will force $p_n$ to disagree on colors at all locations corresponding to that $\var{index}$ value. (If a tile with a boundary color is not found, 0 is returned since the given assembly clearly cannot make pattern $p_n$.) $\var{GetTileAtX}$ and $\var{GetTileAtY}$ are simply used to find a tile with the given coordinate in an assembly.

\begin{algorithm}
  \caption{A procedure that takes an assembly, its horizontal bounds, and the index of the system being simulated, and searches horizontally to return a bit that ensures $p_n$ will have a different color at the index location than the assembly (or 0 if the assembly does not contain a valid pattern).}\label{alg:inspect-width}
  \begin{algorithmic}[1]
  \Procedure{InspectWidth}{$\alpha,\var{minX}, \var{maxX}, \var{index}$}
    \State $\var{currX} = \var{minX}$
    \State $\var{xTile} =$ \Call{GetTileAtX}{$\alpha, \var{currX}$}
    \State $\var{currColor} = \var{xTile}.\var{color}$
    \While{($\var{currColor} \not \in \var{BoundaryColors}$) and ($\var{currX} < \var{maxX}$)}
        \State $\var{xTile} =$ \Call{GetTileAtX}{$\alpha, \var{currX}$}
        \State $\var{currColor} = \var{xTile}.\var{color}$
        \State $\var{currX} = \var{currX} + 1$
    \EndWhile
    \If{$\var{currX} == \var{maxX}$}
        \State \Return $0$ \algorithmiccomment System failed to make valid pattern
    \Else
        \State $\var{indexTile} = $ \Call{GetTileAtX}{$\alpha, \var{currX} + \var{index}$}
        \If{$\var{indexTile} == \var{FALSE}$}
            \State \Return $0$ \algorithmiccomment System failed to make valid pattern
        \Else   \algorithmiccomment $\var{currColor}$ is the color placed by this system at its unique index
            \If{$\texttt{index} == 0$} \algorithmiccomment Boundary column
                \If{$\var{currColor} \in \{\texttt{White},\texttt{Green}\}$} \algorithmiccomment Color of column value $1$
                    \State \Return $0$
                \Else \algorithmiccomment Color of column value $0$
                    \State \Return $1$
                \EndIf
            \Else
                \If{$\var{currColor} \in \{\texttt{Aqua},\texttt{Blue}\}$} \algorithmiccomment Color of column value $1$
                    \State \Return $0$
                \Else \algorithmiccomment Color of column value $0$
                    \State \Return $1$
                \EndIf
            \EndIf
        \EndIf
    \EndIf
  \EndProcedure
  \end{algorithmic}
\end{algorithm}

\begin{algorithm}
  \caption{A procedure that takes an assembly, its vertical bounds, and the index of the system being simulated, and searches vertically to return a bit that ensures $p_n$ will have a different color at the index location than the assembly (or 0 if the assembly does not contain a valid pattern).}\label{alg:inspect-height}
  \begin{algorithmic}[1]
  \Procedure{InspectHeight}{$\alpha,\var{minY}, \var{maxY}, \var{index}$}
  
        \State $\var{currY} = \var{minY}$
        \State $\var{yTile} =$ \Call{GetTileAtY}{$\alpha, \var{currY}$}
        \State $\var{currColor} = \var{yTile}.\var{color}$
        \While{($\var{currColor} \not \in \var{BoundaryColors}$) and ($\var{currY} < \var{maxY}$)}
            \State $\var{yTile} =$ \Call{GetTileAtY}{$\alpha, \var{currY}$}
            \State $\var{currColor} = \var{yTile}.\var{color}$
            \State $\var{currY} = \var{currY} + 1$
        \EndWhile
        \If{$\var{currY} == \var{maxY}$}
            \State \Return $0$ \algorithmiccomment System failed to make valid pattern
        \Else
            \State $\var{indexTile} = $ \Call{GetTileAtY}{$\alpha, \var{currY} + \var{index}$}
            \If{$\var{indexTile} == \var{FALSE}$}
                \State \Return $0$ \algorithmiccomment System failed to make valid pattern
            \Else   \algorithmiccomment $\var{currColor}$ is the color placed by this system at its unique index
                \If{$\texttt{index} == 0$} \algorithmiccomment Boundary column
                    \If{$\var{currColor} \in \{\texttt{White},\texttt{Black}\}$} \algorithmiccomment Color of column value $1$
                        \State \Return $0$
                    \Else \algorithmiccomment Color of column value $0$
                        \State \Return $1$
                    \EndIf
                \Else
                    \If{$\var{currColor} \in \{\texttt{Aqua},\texttt{Yellow}\}$} \algorithmiccomment Color of column value $1$
                        \State \Return $0$
                    \Else \algorithmiccomment Color of column value $0$
                        \State \Return $1$
                    \EndIf
                \EndIf
            \EndIf
        \EndIf

  \EndProcedure
  \end{algorithmic}
\end{algorithm}

\begin{algorithm}
  \caption{A procedure that takes as arguments an assembly and $x$-coordinate value and returns a tile with that coordinate.}\label{alg:x-coord}
  \begin{algorithmic}[1]
  \Procedure{GetTileAtX}{$\alpha,x$}
    \ForEach{$\var{tile} \in \alpha$}
        \State $\var{currX} = \var{tile}[1][0]$
        \If{$\var{currX} = x$}
            \State \Return $\var{tile}$
        \EndIf
    \EndForEach
    \State \Return $\var{FALSE}$
  \EndProcedure
  \end{algorithmic}
\end{algorithm}

\begin{algorithm}
  \caption{A procedure that takes as arguments an assembly and $y$-coordinate value and returns a tile with that coordinate.}\label{alg:y-coord}
  \begin{algorithmic}[1]
  \Procedure{GetTileAtY}{$\alpha,y$}
    \ForEach{$\var{tile} \in \alpha$}
        \State $\var{currY} = \var{tile}[1][1]$
        \If{$\var{currY} = y$}
            \State \Return $\var{tile}$
        \EndIf
    \EndForEach
    \State \Return $\var{FALSE}$
  \EndProcedure
  \end{algorithmic}
\end{algorithm}

\subsection{Complexity analysis}\label{sec:2layer-complexity}

Here we give a brief overview of the time complexity of Turing machine $M$ running on input $n$, which in turn determines the size of the $m \times m$ square formed for each pattern $p_n$.

As shown in Section \ref{sec:simulation}, given a bound of $n$ tile types, an upper bound on the number of possible strength-free systems is $(5376n^4)^{n+2} = O(n^{4n}n^8)$, which we refer to as $\var{SF}(n)$. The algorithm that generates an equivalent aTAM system from a given strength-free system of $n$ tile types (if one exists) requires time $O(n^5)$. Each aTAM system is simulated for $O(\var{SF}(n)^2)$ tile additions. This means that the total number of simulated steps is bounded by $O(n^{4n}n^8n^5(n^{4n}n^8)^2) = O(n^{4n}n^{13}n^{8n}n^{16}) = O(n^{12n}n^{29})$.

A tile set of $n$ tile types requires $O(n\log{n})$ bits to represent.
The assembly $\alpha_i$ of each simulated system $\calT_i$ is represented as a combined list of assembly and frontier locations. (See Figure \ref{fig:sim-encoding} for a high-level depiction.) This list will be separated into a sub-list for each $y$-coordinate that contains a tile and/or frontier location. The sub-list for each $y$-coordinate will consist of an entry for each $x$-coordinate such that the coordinate $(x,y)$ represents a location with a tile in $\alpha$ or is a frontier location. Each tile location contains a definition of the tile type located there, requiring $O(\log{n})$ bits, and each frontier location contains the definitions of any (up to a maximum of 4) glues that are adjacent to that location and their directions, also requiring $O(\log{n})$ bits. Without loss of generality, the seed tile is placed at $(0,0)$. Thus, the encoding of each $x$ or $y$ coordinate requires $O(\log{\var{SF}(n)^2})$ bits, since the largest magnitude of any coordinate values can be $\var{SF}(n)^2$ or $-\var{SF}(n)^2$ if the simulation proceeds for $\var{SF}(n)^2$ steps. This means that the encoding of each entry for an $x$-coordinate plus tile or frontier location requires $O(\log{\var{SF}(n)^2} + \log{n})$ bits. Since $\var{SF}(n) = O(n^{4n}n^8)$, $\var{SF}(n) >> n$, so $O(\log{\var{SF}(n)} + \log{n}) = O(\log{{SN}(n)}) = O(n\log{n})$.

There can be $O(\var{SF}(n)^2)$ tile and frontier entries representing an assembly, for a size of $O(n\log(n)\var{SF}(n)^2) = O(\log(n)n^{17}n^{8n})$. The addition of a tile and updating of the frontier requires $O(n\log(n))$ traversals of the assembly, yielding a time per simulation step of $O(\log(n)^2n^{18}n^{8n})$. With $O(n^{12n}n^{29})$ simulation steps, that yields a total run time of 
$O(\log(n)^2n^{47}n^{20n})$ for M.


Since the simulation of each step of $M$ requires 1 or 2 rows (depending on the direction that the head of $M$ must move and whether the next row of the simulation grows right-to-left or left-to-right), and each row increases in width by 1, this is the bound of both the height and width of the assembly once the module that simulates $M$ completes growth.

The final portion to grow in the $z=0$ plane is that which copies the pattern, which will be of length $\var{SF}(n)$, across the entire width of the top row by increasing the width of the copied pattern by 3 for every 2 rows that grow upward, which themselves increase the width by 2. That means that approximately as many rows as the width of the top row before copying begins minus the width of the pattern are required, i.e., $O(\log(n)^2n^{47}n^{20n}) - \var{SF}(n)) = O(\log(n)^2n^{47}n^{20n})$. Thus, this is the bound for the width and height of the assembly that grows in the plane $z=0$, making the $m \times m$ square for $m = O(\log(n)^2n^{47}n^{20n}) = O(n^{21n})$.

    \section{Technical Details of the Construction for Corollary \ref{cor:2layers-infinite}}\label{sec:multilayer-infinite-appendix}

Given that $p_n$ forms on the $z=1$ layer of an $m \times m$ square, once the initial square forms, this extension causes it to be branched off of in the four cardinal directions, starting from positions $(0,m-1)$ for southward and westward expansion, and $(m-1, 0)$ for northward and eastward expansion. These patterns assemble new copies of themselves upon two of three boundaries defined along their edges with glue signals, which are either propagated along from the previously formed square, or are created upon the diagonal grid-generating signal making contact with an associated boundary.

\begin{figure}
    \centering
    \begin{subfigure}{0.25\textwidth}
        \centering
        \includegraphics[width=1.0\textwidth]{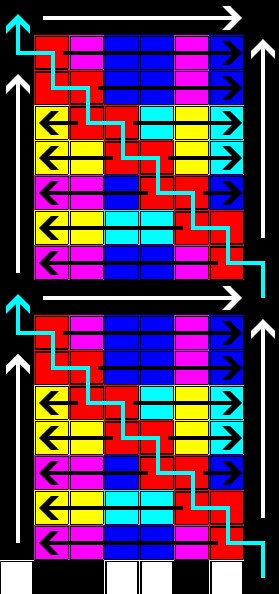}
        \caption{\label{fig:infinite-assembly-north}}
    \end{subfigure}
    \hspace{15pt}
    \begin{subfigure}{0.3\textwidth}
        \centering
        \includegraphics[width=1.0\textwidth]{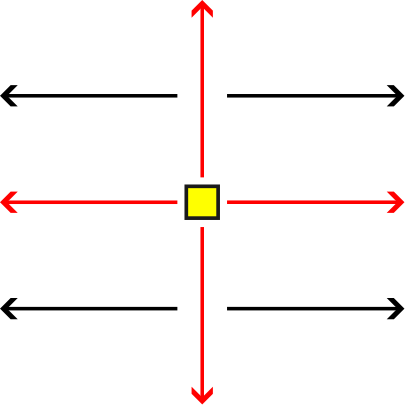}
        \caption{\label{fig:infinite-assembly-hgih-level}}
    \end{subfigure}
    \caption{(a) The growth of the grid, forming pattern $p_{north}$, which is identical to pattern $p_n$, with adjustments made for expansion to the north, from its bottom-most row, which is the initial row on the plane $z=1$. The first tile placed in each row follows a diagonal path, colored in red for clarity, starting from the first tile to the right. Growth of each row expands left and right from the tile along the diagonal. Upon the leftmost and rightmost edges are bounding tiles, colored in black for clarity, which denote the leftmost and rightmost bounds of $p_{north}$. Once the diagonal makes contact with the left bound, it generates a top-bound at height $h+1$, where $h$ is the height at which the diagonal makes contact with the left bound. Once the top-bound's signal is propagated to reach the right-bound, a new grid-forming diagonal is created, starting the process over again. Rotated copies of this process occur in each direction from the initial square of $p_n$. (b) High level illustration of the infinite growth of $m \times m$ squares, with the initial $m \times m$ square which generates $p_{n}$ colored in yellow, direction of growth colored in red, and direction of copy assembly, which utilizes the signals of those assembled above or below them colored in black.}
\end{figure}

This basic assembly is used in all four cardinal directions in order to form a cross of $m \times m$ squares, each of which extends infinitely. Once a column is formed in either the east or west direction, and a corresponding row is formed in the north or south direction, tiles containing the bit values of that column to their north or south, and rows to their east or west are able to propagate out along those patterns created in the cross assembly. This allows for the infinite tiling of the same $m \times m$ square infinitely across the $\mathbb{Z}^2$ plane to form $p_{n_\infty}$. 

\begin{figure}
    \centering
    \begin{subfigure}[T]{0.48\textwidth}
        \includegraphics[width=\textwidth]{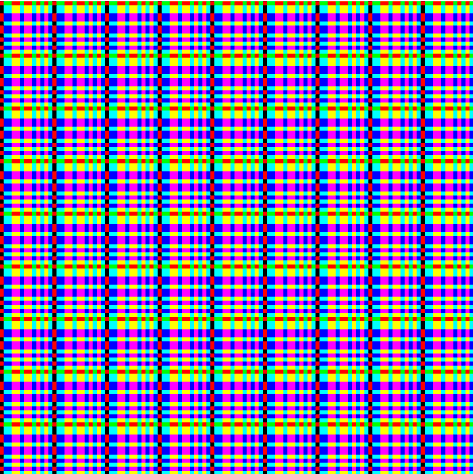}
        \caption{Portion of the infinite tiling of the bit sequence $0110010010101$.}
        \label{fig:infinite-tiling-unmarked}
    \end{subfigure}
    \hfill
    \begin{subfigure}[T]{0.48\textwidth}
        \includegraphics[width=\textwidth]{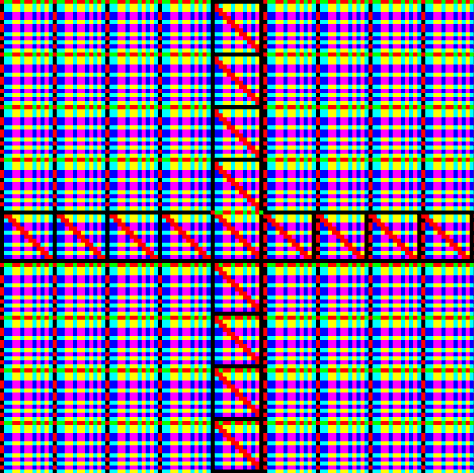}
        \caption{Portion of the infinite tiling of the bit sequence $0110010010101$ with the cross assembly marked with red tiles along grid-creating diagonals, and black tiles along bounds.}
        \label{fig:infinite-tiling-marked}
    \end{subfigure}
    \caption{Assemblies representing portions of the infinite tiling of $m \times m$ squares corresponding to bit sequence $0110010010101$ across $\mathbb{Z}^2$, with creation and bounds marked (right), and unmarked (left).}
    \label{fig:infinite-tiling-demonstration}
\end{figure}

\end{document}